\newcommand{\remove}[1]{}
\newtheorem{thm}{Theorem}[section]
\newtheorem{claim}[thm]{Claim}
\newtheorem{lem}[thm]{Lemma}
\newtheorem{define}[thm]{Definition}
\newtheorem{obs}[thm]{Observation}
\newtheorem{THM}{Theorem}
\newtheorem{prop}[thm]{Proposition}
\def\neg{\text{neg}}
\def\F{{\mathbb{F}}}
\def\R{{\mathbb{R}}}
\def\cB{{\cal B}}
\def\cK{{\cal K}}
\def\cL{{\cal L}}
\def\E{{\mathbb E}}
\newcommand{\ip}[2]{\langle #1,#2 \rangle}
\def\_{\,\,\,\,\,}
\def\tr{\textsf{tr}}
\def\poly{\textsf{poly}}
\def\span{\textsf{span}}
\def\Cor{\textsf{Cor}}
\newcommand{\barconst}{\lambda}
\begin{document}

\title{Breaking the quadratic barrier for 3-LCC's over the Reals}

 \author{Zeev Dvir\thanks{Department of Computer Science and Department of Mathematics, Princeton University.
 Email: \texttt{zeev.dvir@gmail.com}. Research  supported by NSF grants CCF-1217216 and CCF-0832797.} \and
 Shubhangi Saraf\thanks{Department of Computer Science and Department of Mathematics, Rutgers University.
 Email: \texttt{shubhangi.saraf@gmail.com}.} \and
 Avi Wigderson\thanks{School of Mathematics, Institute for Advanced Study.
 Email: \texttt{avi@ias.edu}.}}

\date{}
\maketitle

\begin{abstract}

We prove that 3-query linear locally correctable codes over the Reals of dimension $d$ require block length $n>d^{2+\barconst}$ for some fixed, positive $\barconst >0$. Geometrically, this means that if $n$ vectors in $\R^d$ are such that each vector is spanned by a linear number of disjoint triples of others, then it must be that $n > d^{2+\barconst}$. This improves the known quadratic lower bounds (e.g. \cite{KdW04, Wood07}). While  a modest improvement, we expect that the new techniques introduced in this work will be useful for further progress on lower bounds of locally correctable and decodable codes with more than 2 queries, possibly over other fields as well.

Our proof introduces several new ideas to existing lower bound techniques, several of which work over every field. At a high level, our proof has two parts, {\it clustering} and {\it random restriction}. 

The clustering step uses a powerful theorem of Barthe from convex geometry. It can be used (after preprocessing our LCC to be {\it balanced}), to apply a basis change (and rescaling) of the vectors, so that the resulting unit vectors become {\it nearly isotropic}. 
This together with the fact that any LCC must have many `correlated' pairs of points, lets us deduce that the vectors must have a surprisingly strong geometric clustering, and hence also combinatorial clustering with respect to the spanning triples. 

In the restriction step, we devise a new variant of the dimension reduction technique used in previous lower bounds, which is able to take advantage of the combinatorial clustering structure above. The analysis of our random projection method reduces to a simple (weakly) random graph process, and works over any field.



\end{abstract}

\thispagestyle{empty}

\newpage
\pagenumbering{arabic}

\section{Introduction}

Locally-correctable codes (sometimes under different names of program self-correctors or random self-reductions), abbreviated LCCs, have the property that each symbol of a corrupted codeword can be recovered, with high probability, by randomly accessing only a few other symbols. LCCs have played a key role in important developments within several (impressively) diverse areas of theoretical computer science, which we briefly summarize below. 

Blum and Kannan~\cite{Blum-Kannan} introduced the idea of probabilistic, local correction for the purpose of program checking. With the follow-up papers~\cite{BLR93} on linearity testing and \cite{RS96} on low-degree testing this sequence inaugurated the field of Property Testing and Sublinear Algorithms. The realization of \cite{Lipton90, BeaverF90}, that  Reed-Muller codes (namely low-degree multivariate polynomials) are locally correctable, gave the first random self-reducibility examples of very hard functions like the Permament, and this average-case to worst-case complexity reduction was useful for pseudo-random generators~\cite{BFNW93}.
It further lead (with many more ideas) to the celebrated sequence of characterizations of the power of probabilistic proofs, $IP=PSPACE$ by \cite{LFKN,Shamir}, $MIP=PSPACE$ by \cite{BFL} and $PCP=NP$ by \cite{AS98,ALMSS}. Close cousins of LCCs, Locally-{\em Decodable} Codes (LDCs)\footnote{In LDCs one needs to locally recover {\em only} $d$ linearly independent coordinates (equivalently, the message) from the corrupted codeword, rather than all $n$ of them}, formally introduced in \cite{KT00} but having their origins in these earlier works, were key to Private Information Retrieval and other models of secure delegation of computation (see e.g. \cite{CKGS}). Dvir \cite{Dvir10} has shown the sufficiently strong lower bounds on LCCs  would yield explicit rigid matrices, which are related, via the work of \cite{Val77} to circuit complexity\footnote{While work of \cite{KSY11} shows that, over small finite fields, this approach could not give super linear circuit lower bounds, the approach might still be valid over large fields.}. While this has not materialized yet, it motivated the invention of {\em multiplicity codes} by \cite{KSY11} which are new LCCs of  high rate, and turn out to yield optimal list-decodable codes as well \cite{Kopparty12} . Finally, since the work of \cite{DvirShpilka06}, LDCs and LCCs have played a  role in understanding basic problems in Polynomial Identity Testing and established its connection to problems in Incidence Geometry, e.g \cite{KS09, BDWY11, DSW12}.  
  
The most important parameters of LCCs are the number of queries, $q$, made by the correcting algorithm, and the block length $n$ as a function of the message length (or dimension, for linear codes) $d$, where we fix corruptions to some small fixed fraction, say $1\%$. For upper bounds, the best constructions we have are still based on Reed-Muller codes\footnote{For the weaker LDCs there are far better constructions, based on the work of Yekhanin and Efremenko \cite{Yek08,Efr09,DGY11}, but these are not known to be locally correctable.} which exist only over   finite fields.  For $q$ queries these require block length about $\exp(d^{1/(q-1)})$. Indeed most applications require the block-length $n$ to be polynomial in $d$ and hence using these codes forces the number of queries to be at least logarithmic. Finding better codes, and in particular constant query, polynomial block-length LCCs, has been a major challenge, and this challenge naturally turns attention to the limits of constant query LCCs and LDCs. 

On the lower bound front, relatively little is known to rule out the feasibility of the challenge above. We shall restrict ourselves to {\em linear} codes\footnote{Some of the results below are known also for non-linear codes} over some field $\F$, namely when the set of codewords is a subspace of $\F^n$ of dimension $d$, and denote $q$-LCCs such locally-correctable codes with $q$ queries.  It is easy to see that 1-LCCs do not exist over any field. The first set of interesting results came for 2-LCCs, and here strong lower bounds are known through a variety of techniques. An exponential $n > 2^{\Omega (d)}$ lower bound via isoperimetric/entropy methods for 2-LCCs over $\F_2$ follows from the ones for the (weaker) LDCs \cite{GKST06, KdW04, DvirShpilka06} and is matched by the Hadamard code whose generating matrix is composed of all binary vectors over $\F_2$. Strangely, while these vectors provide an LDC over {\em every} field, they fail to be an LCC except in $\F_2$. This gap was first explained in \cite{BDWY11,DSW12} who showed that over the Real numbers (and indeed even  finite fields of very large characteristic), LCCs simply do not exist! For every error-rate $\delta$ the dimension $d$ for which such codes exist cannot exceed $\poly(1/\delta)$. The proofs in \cite{BDWY11,DSW12} use a combination of geometric, analytic and linear-algebraic techniques, and give quantitative form to  known qualitative point-line incidence theorems. Tighter bounds of $n> p^{\Omega (d)}$ over finite fields of prime size $p$ were proved in \cite{BDSS11} using methods from arithmetic combinatorics, matching the trivial construction of taking all vectors in $(\F_p)^d$.   
 
For $q\geq 3$ the known lower bounds are far weaker, and practically only one lower bound technique is known: random restrictions of the given code which reduce the number of queries from $q$ to 2 or 1, appealing to the lower bounds above. This technique was introduced for LDCs by Katz and Trevisan \cite{KT00}, and trivially holds for (the stronger) LCCs as well. The best bounds known are due to \cite{KdW04, Wood07}, which show that $q$-LDCs must satisfy $n = \tilde\Omega \left(d^{1 + 1/(\lceil q/2\rceil-1)}\right)$ for every  $q\geq 3$. So, in particular, the best lower bound for 3-LDCs (or LCCs) is the quadratic $n = \tilde\Omega(d^2)$ (for linear codes the $\tilde \Omega$ was replaced by $\Omega$ in \cite{Wood12}). This quadratic bound has established itself as somewhat of a `barrier' in that it can be obtained in several different ways and going beyond it seems to require new ideas.

 Our main result is breaking this quadratic barrier for 3-LCCs over the Real numbers. Namely, we prove that for some fixed constant\footnote{We did not make an attempt to optimize the constant $\barconst$, but the proof gives some $\barconst > .001$ } $\barconst >0$ every linear 3-LCC over the Reals must satisfy $n = \Omega(n^{2+\barconst})$, even when the error parameter $\delta$ is allowed to be polynomially small in $n$ . To this end, we introduce several new ideas and techniques, which we hope will lead to further progress. Some of our ideas are general enough to work over  any field, while others are specially tailored for the Reals. We briefly discuss now the main sources for our improvement over the known quadratic lower bound. A more detailed overview of the proof is given after the formal statement of the theorem in the next section.

\subsection*{Clustering and restrictions}

A linear 3-LCC over $\F$ may be viewed as a set  $V \subset \F^d$ of $n$ vectors (which form the generating matrix of the code), together with  $n$ collections $M_v$, one for each $v\in V$. Each $M_v$ is a matching of $\delta n$ disjoint triples from $V$, and each of the triples in $M_v$ spans $v$. This structure is easy to deduce for linear codes from the more traditional definition using a randomized decoder (cf. Definition~\ref{def-lccdecoder}).

We now informally describe a way to obtain a  quadratic lower bound on $n$, which uses random restrictions to reduce the dimension of the code. Pick a set $A$ of size about  $\sqrt{n}$ of vectors from $V$ at random. Then, take a linear projection whose kernel is exactly the span of the vectors in $A$ and apply it to the elements of $V$. Notice that in expectation, for every $v\in V$, a pair of points in $A$ will be contained in some triple in $M_v$. Thus, after the projection, the 3rd point in that triple will become the same as $v$ (up to scaling). As this happens to every point, we expect $V$ to shrink by a factor of 2! Repeating this process logarithmically many times will shrink $V$ completely, revealing that its original dimension could not have been larger than $\sqrt{n}\log n$, giving a near quadratic relation $n \geq d^2/\log d$. We note that the proofs appearing in the literature are somewhat different then the one we just  described. Indeed, there are several possible  ways of using a random restriction argument to get a quadratic bound (up to poly-logarithmic factors) for linear 3-LCCs. The argument above is new to this paper, and is indeed a simplified variant of our actual proof, which improves its analysis over the Reals.

It is not hard to see that if the collection of triples in all of matchings $M_v$ were  chosen at random, the analysis above could not be improved. But a random collection is far from being an LCC. Indeed, in contrast to standard codes, which exist in abundance and a random subspace is one with high probability, locally correctable (or decodable, or testable) codes are extremely rare and structured. This raises the question of what other structural properties are imposed on the matchings $M_v$ in an LCC. In this paper we reveal a new such property, {\em clustering}, at least when the  underlying field is the Reals\footnote{The actual proof requires several extra conditions on the code, which can be obtained via a sequence of reductions.}.  We conclude with a simplified  description of this clustering property, how it is obtained, and how it enables better analysis of the random restriction process.

A collection $M_v$ of matchings of triples is said to be {\em clustered} if there are about $\sqrt{n}$ subsets $S_1,\cdots S_{\sqrt{n}}$ of $V$, each of size about $\sqrt{n}$, such the {\em every} triple in {\em every} matching $M_v$ has a pair in one of these sets. Note that such a configuration is extremely far from random. Indeed, as these sets have at most $n^{3/2}$ pairs between them, many of the triples (of different matchings) share pairs (a typical pair exists in about $\sqrt n$ triples!). Note that this cluster structure is completely combinatorially described.


Why should the triples in an  3-LCC admit such a clustering? The main observation is that, over the Reals, a small linearly dependent subset, such as a 4-tuples composed of $v$ and a triple from $M_v$, must contain a pair which is significantly correlated (say, with inner product at least $1/4$ for said example). Thus, a 3-LCC must contain many correlated pairs. On the other hand, a powerful result of Barthe from convex geometry allows us to deduce that, after a carefully chosen change of basis, the vectors of our code are almost isotropic, namely point roughly equally in all directions in space. This implies that most pairs are hardly correlated at all. These two seemingly contradicting structures can exist only if the points in $V$ are {\em geometrically} clustered: delicate analysis shows that they can be partitioned  into roughly $\sqrt(n)$ small balls of small radius. The correlations then must arise from  triples containing a pair in one of the (geometric) clusters. 

Why does clustering help? Lets return to the random restriction and projection argument above, but lets pick now the set $A$ as follows. First pick one of the clusters $S_i$ uniformly at random, and inside it pick $A$ at random of size about $n^{1/4}$. The clustering ensures that this much smaller set has a pair intersecting each of the matchings $M_v$ in expectation (due to the fact that a typical pair in a typical cluster participates in $\sqrt n$ matchings). So a much smaller set $A$ suffices to create the same effect after projection, namely a shrinking of the set $V$ by a factor of 2. Again a logarithmic number of such restrictions is likely to shrink $V$ completely, giving a dimension upper bound of $n^{1/4}\log n$, and yielding the lower bound $n\geq d^4/\log d$.
We note again that this part works over any field, as long as the triples are clustered.

\paragraph{`Balanced' codes:} A recurring notion in our proof the that of an LCC in which no large subset of the coordinates lies in a subspace of significantly lower dimension. One can think of such codes as being `balanced' in the sense that they cannot be `compressed' (by projecting the large set of low dimension to zero). Our proof contains a sequence of reductions, used to obtain certain conditions that are used in the clustering and restriction steps. Each of these reductions can only be carried out if the code is `balanced' and this property is used in several different ways in the proof. If the code is not `balanced' we can use an iterative argument that projects the large low-dimensional subset to zero. We find this condition of being balanced a very natural one in the context of LCCs (and other codes) and hope it could be useful as a conceptual tool in future works.

\paragraph{Organization:} In Section~\ref{sec-defs} we state our results formally. Then, in Section~\ref{sec-overview} we provide a more detailed and technical overview of the proof. The proof of the main theorem is given in Sectins~\ref{sec-prelim} to \ref{sec-mainproof}. The organization of the different sections of the proof is given at the end of Section~\ref{sec-overview}.

\paragraph{Acknowledgments}
We are grateful to Boaz Barak, Moritz Hardt and Amir Shpilka for their contribution in early stages of this work. In particular, we thank Moritz Hardt for introducing us to Barthe's work.

\section{Definitions and results}\label{sec-defs}

For a string $y \in \F^n$, we define $w(y)$ to be the number of nonzero entries in $w$. A q-matching $M$ in $[n]$ is defined to be a set of disjoint unordered $q$-tuples (i.e. disjoint subsets of size $q$) of $[n]$ . 

\begin{define}[Linear $q$-LCC, decoder definition]\label{def-lccdecoder}
	A linear $(q,\delta)$-LCC of dimension $d$ over a field $\F$ is a $d$ dimensional linear subspace $U \subset \F^n$ such that there exists a randomized
	decoding procedure $D : \F^n \times [n] \mapsto \F$ with the following properties:
	\begin{enumerate}
	\item For all $x \in U$, for all $i \in [n]$ and for all $y \in \F^n$ with $w(y) \leq \delta n$ we have that $D\left( x + y, i\right) = x_i$ with probability at least $3/4$ (the probability is taken only over the internal randomness of $D$).
	\item For every $y \in \F^n$ and $i \in [n]$, the decoder $D(y,i)$ reads at most $q$ positions in $y$.
	\end{enumerate}
\end{define}

\begin{define}[Linear $q$-LCC, geometric definition]\label{def-linearlcc}
Let $V = (v_1,\ldots,v_n) \in (\F^d)^n$ be a list of $n$ vectors
spanning $\F^d$. We say that $V$ is a linear {\em
$(q,\delta)$-LCC} in geometric form if for every $v \in V$ there exists a $q$-matching $M_v$ in $[n]$ of size $\geq \delta n$ such that for every $q$-tuple $\{j_1,\ldots,j_q\} \in M_v$ it holds that $v \in \span\{v_{j_1},\ldots,v_{j_q}\}$.
\end{define}

It is well known that any linear $(q,\delta)$-LCC (over any field) can be converted into the geometric form given above by replacing $\delta$ with $\delta/q$. The transformation is simple: take $v_1,\ldots,v_n \in \F^d$ to be the rows of the generating matrix of $U$. Clearly, this does not change the dimension of the code.

In our results we will assume that the error parameter $\delta$ is not too large. Specifically, we will require that $n \geq (1/\delta)^{\omega(1)}$. This condition can be replaced with $n \geq (1/\delta)^{C}$ for a sufficiently large absolute constant $C$ which can be calculated from  the proof. 

We now state our main result which bounds the dimension of 3 query LCC's when the underlying field is $\R$. 

\begin{THM}[Main Theorem]\label{thm:lccbound}
There exists an absolute constant $\barconst > 0$ such that if $V = (v_1,\ldots,v_n) \in (\R^d)^n$ is a linear $(3,\delta)$-LCC and $n \geq (1/\delta)^{\omega(1)}$, then $$
d = \dim(V) \leq n^{1/2- \barconst}
$$ 
\end{THM}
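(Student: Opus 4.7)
The plan is to follow the clustering + random-restriction framework sketched in the introduction, in three phases: (i) preprocess $V$ into a convenient geometric form, (ii) use convex geometry to extract a combinatorial clustering of the matchings $M_v$, and (iii) exploit this clustering in a random projection that shrinks the dimension faster than the naive quadratic analysis allows. For the preprocessing, I would first reduce to a \emph{balanced} LCC: one in which no subset $V' \subset V$ of size $n'$ spans a subspace of dimension much smaller than $d\cdot n'/n$. If balancedness fails for some $V'$, I would project out $\textsf{span}(V')$ and recurse on the complement, since an upper bound on the balanced case suffices. With balancedness in hand, a careful application of Barthe's theorem from convex geometry yields a basis change and a per-vector rescaling after which the unit vectors $u_i = v_i/\|v_i\|$ are nearly isotropic, i.e.\ $\sum_i c_i\, u_i u_i^T \approx (d/n)\, I_d$; balancedness is what guarantees the rescaling does not distort sub-collections too much.

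Next, the clustering step is driven by the following tension, specific to $\R$. Any four real unit vectors with a nontrivial linear dependence must contain a pair whose inner product has absolute value at least an absolute constant (say $1/4$). Applied to each $4$-tuple $(v, v_{j_1}, v_{j_2}, v_{j_3})$ with $(j_1,j_2,j_3)\in M_v$, this produces $\Omega(\delta n^2)$ ``heavy'' pairs with correlation at least $1/4$. On the other hand, near-isotropy yields $\sum_{i<j}\langle u_i, u_j\rangle^2 \lesssim n d$, so such heavy pairs cannot be spread out; a spectral/covering argument then groups the $u_i$ into roughly $\sqrt{n/d}$ small spherical caps, in such a way that most heavy pairs lie inside a common cap. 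Translating this back to the code, most triples in most matchings $M_v$ contain a pair inside one of roughly $\sqrt{n/d}$ combinatorial clusters $S_1,\dots,S_k$, each of size about $\sqrt{nd}$.

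With this clustering in hand, the random restriction becomes far more efficient than the naive one. I would sample a random cluster $S_i$ and inside it a uniformly random subset $A \subset S_i$ whose size is significantly smaller than $\sqrt{n}$. Because a typical cluster pair participates in many matchings, $A$ already contains a pair of some triple in $M_v$ for a constant fraction of $v$; quotienting by $\textsf{span}(A)$ then turns many triples into scalar multiples of a single survivor, collapsing a constant fraction of $V$ while reducing $d$ by only $|A|$. Iterating $O(\log n)$ such restrictions, and arguing that balancedness, near-isotropy and clustering can be re-established (up to constants) at every step, forces $d \leq n^{1/2-\barconst}$. The main obstacle I anticipate is Phase 2: quantitatively converting ``many heavy pairs under near-isotropy'' into clusters with simultaneously bounded count and bounded radius, and then lifting this purely geometric statement into combinatorial control of the spanning triples. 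A pervasive secondary difficulty is that none of balancedness, isotropy, or clustering survives random projection on the nose, so the full preprocessing pipeline must be robust enough to be re-run after each restriction rather than relied upon only once at the start.
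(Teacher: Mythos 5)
Your Phases 1 and 2 match the paper's plan closely: reduce to a balanced code, apply Barthe's theorem to obtain a near-isotropic normalization, and then use the observation that a linearly dependent quadruple of real unit vectors must contain a significantly correlated pair, together with the isotropy bound $\sum_j \ip{u_j}{w}^2 = O(n/d)$, to force geometric and hence combinatorial clustering. (Your quantitative slips there are minor: the paper's clustering gives roughly $n^{1/2\pm O(\barconst)}$ clusters of size roughly $n^{1/2+O(\barconst)}$, not $\sqrt{n/d}$ clusters of size $\sqrt{nd}$; and you omit a separate but necessary reduction to low \emph{triple multiplicity}, which the paper needs in the clustering step so that a single pair cannot be responsible for too many matchings.)

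Phase 3 is where your plan and the paper genuinely part ways, and you have correctly identified the resulting difficulty without resolving it. You propose to iterate $O(\log n)$ restrictions and to ``re-establish balancedness, near-isotropy and clustering'' after each quotient. The paper deliberately avoids this. Its restriction lemma (Lemma~\ref{lem-randomrestrict}) samples \emph{all} of the random subsets $A_1,\ldots,A_r$ at once --- and $r$ is polynomially large, $r \sim n^{O(\barconst)}$, not $O(\log n)$, because the probability that one random cluster contributes a pair to a fixed matching $M_v$ is only $\sim n^{-O(\barconst)}$, not a constant. It then bounds $\dim(\cL(V))$, for any $\cL$ with $U=\bigcup_j A_j \subset \ker\cL$, by the number of connected components of an auxiliary graph in which each sampled pair $(v_i,v_j)\subset A_\ell$ contained in a triple $(v_i,v_j,v_k)\in M_v$ contributes an edge $\{v, v_k\}$; since ``being proportional after applying $\cL$'' is transitive, each component contributes at most one dimension. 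This one-shot argument never needs the code to be clustered, isotropic, or balanced \emph{after} the projection, so the whole ``pipeline must be robust under restriction'' problem you worry about simply does not arise. Separately, the final amplification from ``a large subset of $V$ has small dimension'' to ``$\dim(V)$ is small'' is done by a different iteration (Lemma~\ref{lem:iterate}): one repeatedly grows a $\span$-closed subset $S\subset V$ by at least $\Omega(\delta^4 n)$ elements while increasing $\dim(S)$ by at most $n^{1/2-\barconst/16}$, terminating after $O(1/\delta^4)$ steps rather than $O(\log n)$. Without something replacing these two combinatorial devices, your Phase 3 as stated leaves a real gap: you would have to prove that clustering (and well-spreadness, and triple multiplicity bounds) survive a random quotient, and there is no a priori reason they should.
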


\section{Proof overview: `Cluster and Restrict' paradigm}\label{sec-overview}

From a high level, our proof is divided into two conceptually distinct steps:
\begin{enumerate}
	\item {\em Clustering step:} Show that the triples used in the  matchings $M_v, v\in V$ are `clustered' in some precise sense (described below). 
	\item {\em Restriction step:} Use the clustering to find a large subset of $V$ that has low dimension. The name of this step is due to the fact that it uses a random restriction argument (projecting a random subset to zero).
\end{enumerate}

Combining these two step (in Lemma~\ref{lem-subsetlow}) we get that $V$ must have a large subset  (of size roughly $\Omega(n)$)  with low dimension (at most $n^{1/2 - \barconst}$). Using this to prove a  global dimension bound on $V$ (as in Theorem~\ref{thm:lccbound}) is done using a standard amplification lemma (Lemma~\ref{lem:iterate}) similar to that in ~\cite{BDWY11,BDSS11}. For simplicity, we will use big `O' notation to hide constants depending on $\delta$ (only for this overview).

We now describe each of these steps in more detail. The fact that $V$ is a code over $\R$ is only used in the clustering step.  The restriction step works over any field, provided that the triples are already clustered.  A recurring theme in the proof is that we are always free to assume that $V$ does not have a large subset of low dimension. Another recurring operation is `mapping a subset $U$ of $V $ to zero'. By this statement we mean: pick a linear map $A$ whose kernel is $\span(U)$ and apply it to all the elements of $V$. We will use the simple fact that, if $\dim(U) = r$ and $\dim(V) = d$ then $\dim(A(V))$ is at least $d - r$, where $A(V)$ is the list of vectors $A(v), v \in V$.

\subsection{Clustering Step:}

The clustering step is given by Lemma~\ref{lem-reducecluster} which we state now in an informal form. We will elaborate below on the two conditions necessary in the lemma. Recall that $V$ is associated with $n$ $3$-matchings $M_v, v\in V$ used in the decoding.

{\noindent  \bf Lemma~\ref{lem-reducecluster}. [Informal] }{\it Suppose $V$ is a $(3,\delta)$-LCC that satisfies the `well-spread' condition and the `low triple multiplicity' condition and suppose that $d > n^{1/2 - \barconst}$. Then there are subsets $S_1,\ldots,S_m \subset V$ (not necessarily disjoint) so that
\begin{enumerate}
	\item For each $i \in [m]$, $|S_i| \leq O(n^{1/2+\barconst})$.
	\item $\Omega(n^{1/2-\barconst}) \leq m \leq O(n^{1/2 + \barconst})$.
	\item Each triple in each matching $M_v$ has two of its elements in one of the sets $S_i$.
\end{enumerate}
}

Before we explain the two conditions in the lemma of being `well-spread' and having `low triple multiplicity', notice that the existence of sets $S_1,\ldots,S_m$ as above is something that does not hold for a `typical' family of $\Omega(n^2)$ triples. In fact, if the triples were chosen at random there would not be such sets with probability close to one. Referring to the sets $S_i$ as `clusters' is also justified by the fact that they actually form clusters in $\R^d$ (i.e., they are all correlated with some fixed point). This geometric fact, however, is not used anywhere in the proof-- all we need is the combinatorial structure. We now explain the two conditions on the code $V$ mentioned in the lemma:
\begin{itemize}
	\item {\bf Well-spread condition:} The vectors $v_1,\ldots,v_n$ comprising $V$ should be `well-spread'. Observe that WLOG by a suitable scaling to each vector, we can assume that the vectors $v_1,\ldots,v_n$ are unit vectors, and we will make this assumption. Formally, we require that for every unit vector $w \in \R^d$ we have $\sum_{i \in [n]} \ip{v_i}{w}^2 \leq  O(n^{1/2+\barconst})$. This means, in particular, that every small ball can contain at most $O(n^{1/2+\barconst})$ vectors.  Clearly, a general LCC $V$ does not have to satisfy this condition. For example, if $V$ has a large subset that lies in low dimension, such a statement cannot hold (using pigeon hole argument on the unit circle in low dimension).  We  are able, however, to reduce to this case using Lemma~\ref{lem-reducewell}, which uses a powerful result of Barthe (Lemma~\ref{cor-barthe2}) that is developed in Section~\ref{sec-barthe}. Roughly speaking, Barthe's theorem can be used to show that, unless $V$ has a large subset in low dimension, there is an invertible linear map $M$ on $\R^d$ so that, if we replace each $v_i$ with $Mv_i/\|Mv_i\|$, the well-spread condition is satisfied. The proof of this result (part of which appear in Section~\ref{sec-barthe}) uses tools from convex geometry. We derive a particularly convenient form of Barthe's theorem as Theorem~\ref{thm-bartheconvenient} which might be of independent interest.
	\item 
	{\bf Low triple-multiplicity condition:} This condition  requires that a single triple does not appear in `too many' (roughly $n^{O(\barconst)}$) different matchings. In Section~\ref{sec-reducelowmult} we prove Lemma~\ref{lem-reducelowmult} which shows how to reduce to this case, assuming $V$ does not have a large low dimensional subset. The reduction uses the fact that if a single triple is used in too many matchings, then projecting the elements in this triple to zero causes many other points to go to zero. If a point $v$ is mapped to zero as a result, and if $v$ is used in many triples (say $\Omega(n)$) all of these triples `become' pairs when $v$ maps to zero. Using this observation, we show that we can send a relatively small number of points to zero and construct a $2$-query locally decodable code (LDC) of relatively high dimension. We then apply the known bounds for $2$-query LDCs (these are variants of LCCs and described in Section~\ref{sec-prelim}) to get a contradiction. This reduction is also field independent and does not use any properties of the real numbers.

\end{itemize}

The main observation leading to clustering is that we can assume, w.l.o.g that all triples $(i,j,k) \in M_v$ are so that the three vectors $v_i,v_j,v_k$ are almost orthogonal to $v$. This follows directly from the `well spread' condition by upper bounding the number of vectors correlating with $v$ and discarding the corresponding triples from $M_v$ (for each $v \in V$). Once we have this condition, we observe that since $v,v_i,v_j,v_k$ are linearly dependent and, since $v$ is not correlated with the other three vectors, we must have that $v_i,v_j,v_k$ are close to being in a two dimensional plane (recall that these are all unit vectors). This means that in each triple there must be two elements that are correlated with each other! This is already a non trivial fact, in particular since we know (by the well spread condition) that each point cannot be correlated with many other points. 

Proceeding with a more careful analysis of the different types of triples that can arise, and using some graph theoretic arguments, we arrive at the required clusters. In this step we use the bound on the maximum triple multiplicity. 
 
 Note that the clustering lemma implies that there are many pairs in $V \times V$ that appear in many triples. This is due to the simple upper bound of $n^{1.5+O(\barconst)}$ on the total number of possible pairs in all of the clusters $S_1,\ldots,S_m$ and the fact that together they cover pairs from a quadratic number of triples. This should be contrasted with the results of \cite{BDWY11,DSW12} which prove strong lower bounds for $q$-LCC's (for any constant $q$) in which every pair is in a bounded number of triples (these are called `design' LCCs).

\subsection{Restriction Step:}

The restriction step (given in Lemma~\ref{lem-clusterlowdim}) shows that if $V$ satisfies the clustering condition (given in Lemma~\ref{lem-reducecluster}) then it contains a large subset in low dimension. We now state a simplified form of this lemma.

{\noindent  \bf Lemma~\ref{lem-clusterlowdim}. [Informal] }{\it Let $\F$ be a field. Let $V = (v_1,\ldots,v_n ) \in (\F^d)^n$ be a $(3,\delta)$-LCC with  matchings $M_v, v \in V$. Suppose there exists sets $S_1,\ldots,S_m \subset [n]$ as in Lemma~\ref{lem-reducecluster}, clustering the triples in the matchings $M_v$. Then, there is a subset $V' \subset V$ of size $|V'| \geq (\delta/2)n$ and dimension at most $n^{1/2 - \barconst}$.
}

This step is called the `restriction step' since it uses the `clusters' $S_1,\ldots,S_m$ found in the clustering step to show (Lemma~\ref{lem-randomrestrict})  that there is a  small set $U \subset V$ (of size roughly $n^{1/4+7\barconst}$) such that, projecting all elements of $U$ to zero, reduces the dimension of $V$ to at most $n^{10\barconst}$. This will imply a dimension bound of $n^{1/4+7\barconst} + n^{10\barconst}$ on the initial dimension of $V$ (the reason we do not get a $n^{1/4+7\barconst}$ upper bound on the dimension of $V$ is due to the clustering step). 

The starting point for the proof of this lemma is the following simple observation: If $v$ is spanned by a triple $(v_i,v_j,v_k)$, then projecting two elements of that triple, say $v_i,v_j$, to zero makes the two vectors $v,v_k$ proportional to each other (this uses the fact that $v$ is not spanned by any proper subset of the triple, and we can easily reduce to this case). Now, suppose that there are $t$ triples in the code that have at least two element in $U$. Then projecting $U$ to zero makes makes $t$ pairs of vectors proportional to each other (as in the $v,v_k$ example). Consider the graph on vertex set $V$ in which we add an edge for each proportional pair $v,v_k$ obtained by sending a pair $v_i,v_j \in U$ in a triple $(v_i,v_j,v_k) \in M_v$ to zero. Since the property of being proportional to each other is an equivalence relation on $\R^d$, we can bound the dimension of $V$ after projecting $U$ to zero by the number of connected components of the graph.

This leaves us with the task of finding a set $U$ so that the resulting graph has at most $n^{10\barconst}$ components. To find such a $U$ we use a probabilistic argument. We will pick $U$ at random according to a particular distribution and then argue that the expected number of connected components is small. To pick the random $U$ we proceed in $r \sim n^{4\barconst}$ steps as follows: In each step pick one of the clusters $S_i$ at random and then pick a random subset of $S_i$ of size $\sim n^{1/4+3\barconst}$ at random. The union of these sets will be $U$. The upper bound on the expected number of components is derived by considering the (expected) reduction in the number of connected components in each of the $r$ steps. Consider some connected component and let $v$ be some vector in it. We can assume the component is not too large, since the number of large components is trivially bounded (large being close to $n^{1-\barconst}$). Since each $M_v$ is a matching, the random choice of the vectors in the $i$'th step will (with good probability) add an edge to $v$ with a neighbor that is not likely to land in the connected component containing $v$. Hence, with good probability the connected component will `merge' with another component. Carefully analyzing this process gives us the required bound.

\subsection{Proof Organization}

We begin with some general preliminaries and notations in Section~\ref{sec-prelim}. In Section~\ref{sec-barthe} we describe (and sketch the proof of) Barthe's theorem which is used in Section~\ref{sec-reducewell} to reduce to the case that the points in $V$ are well-spread. In Section~\ref{sec-reducelowmult} we show how to reduce to the case that $V$ has low triple multiplicities. Section~\ref{sec-reducecluster} contains the proof of the clustering step and Section~\ref{sec-clusterlowdim} contains the proof of the restriction step. Finally, in Section~\ref{sec-mainproof} we show how to put all the ingredients together and prove Theorem~\ref{thm:lccbound}.


\section{General Preliminaries}\label{sec-prelim}

\subsection{Choice of notation}
\paragraph{Lists vs. multisets: }The reason we are treating $V$ as a list and not as a set is that $V$ might have repetitions. For instance $u$ and $v$ might be distinct elements in the list $V$, but might correspond to the same vector in $\F^d$. The repetition corresponds to the fact that there might be repeated columns in the generator matix of the code, which may potentially make the property of local correction easier to satisfy. Indeed in the recent lower bounds for 2-query LCCs~\cite{BDSS11, BDWY11}, handling the fact that there might be  repetitions added significant complexity to the proofs of the lower bounds. In the current paper too we deal with repetitions by treating $V$ as a list. An equivalent treatment would be to treat $V$ as a multiset, and we make no distinction between these notions. We think of a multiset as an ordered list of elements which might contain repeated elements.  If $A$ is a multiset/list, we call $B$ a subset of $A$ if $B$ is another multiset/list obtained by taking a subset of $A$. We will say that $B$ and $C$ are {\em disjoint} subsets of $A$ if they are both obtained from sub-lists on disjoint subsets of the indices. When referring to the {\em size} of a multiset we will always count the number of elements {\em with} multiplicities (unless we state explicitly that we are counting {\em distinct} elements). 

Although we defined a matching to be a set of tuples in $[n]$, when we are dealing with a specific list $V = (v_1,\ldots,v_n)$, we might identify a tuple $(j_1,\ldots,j_q)$ of a matching with the tuple $(v_{j_1},\ldots,v_{j_q})$, and we use these two notions interchangably. 
Moreover, a matching $M_v$ denotes the matching corresponding to a particular element $v \in V$, and if $u$ and $v$ are different elements of $V$, even if they correspond to the same vector in $\F^d$, then $M_u$ and $M_v$ could be different matchings.

\subsection{Basic operations on LCCs}

For a list $V \in (\R^d)^n$ we denote by $\span(V)$ the subspace spanned by elements of $V$ and by $\dim(V)$ the dimension of this span. 

The following simple claim shows that a sufficiently large subset of an LCC is also an LCC.
\begin{claim}\label{cla-refinelarge}
	If $V = (v_1, \dots, v_n) \in (\F^d)^n$ is a $(3,\delta)$-LCC and $U \subset V$ is of size $|U| \geq (1 - \delta/2)n$ then $U$ is a $(3,\delta/2)$-LCC of the same dimension as $V$. Moreover, if $M_v, v \in V$ are any matchings used in the decoding of $V$ then we can take the matchings for the new code $U$ to be subsets of the old matchings.
\end{claim}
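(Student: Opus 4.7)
The plan is to use the disjointness of triples in a matching to show that throwing away a small fraction of coordinates kills at most a correspondingly small number of triples.

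First I would handle the dimension claim. Take any $v \in V \setminus U$; I want to show $v \in \span(U)$. The matching $M_v$ consists of $\delta n$ pairwise disjoint triples, so a single element of $V \setminus U$ can appear in at most one triple of $M_v$. Since $|V \setminus U| \leq (\delta/2) n$, at most $(\delta/2)n$ triples of $M_v$ meet $V \setminus U$, leaving at least $\delta n - (\delta/2)n = (\delta/2)n \geq 1$ triples entirely contained in $U$. Any such triple spans $v$, so $v \in \span(U)$. Hence $\span(V) = \span(U)$ and $\dim(U) = \dim(V)$.

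Next, for the LCC property on $U$: fix any $u \in U$ and define $M'_u$ to be the sub-matching of $M_u$ obtained by keeping only those triples whose three indices all lie in $U$. By the same disjointness argument, at most $(\delta/2)n$ triples of $M_u$ can meet $V \setminus U$, so
\[
|M'_u| \;\geq\; \delta n - (\delta/2)n \;=\; (\delta/2) n \;\geq\; (\delta/2) |U|.
\]
Each triple in $M'_u$ still spans $u$ by Definition~\ref{def-linearlcc} applied to $V$, and $M'_u$ is a $3$-matching in (the index set of) $U$ since it is a sub-collection of the disjoint triples in $M_u$. Thus $U$ is a linear $(3,\delta/2)$-LCC and the new matchings $M'_u$ are subsets of the old matchings $M_u$, as claimed.

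The ``main obstacle'' here is essentially bookkeeping: one only needs the trivial but crucial observation that in a matching (a collection of \emph{disjoint} triples), each discarded vertex can destroy at most one triple. The quantitative loss from $\delta$ to $\delta/2$ is exactly the slack absorbed by assuming $|V \setminus U| \leq (\delta/2)n$.
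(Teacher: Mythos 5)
Your proof is correct and follows essentially the same approach as the paper: the key observation in both cases is that each matching $M_v$ consists of disjoint triples, so removing $|V \setminus U| \leq (\delta/2)n$ elements destroys at most $(\delta/2)n$ triples, leaving enough triples entirely inside $U$ both to certify the LCC property and (since this holds even for $v \in V\setminus U$) to preserve the dimension. Your write-up is a slightly more explicit version of the paper's two-sentence argument.
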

\begin{proof}
Observe that in each matching $M_v$, there are at most $(\delta/2)n$ triples that contain an element outside $U$. Thus, in $U$ we could construct matchings of size $(\delta/2)n \geq (\delta/2)|U|$. The claim about the dimension follows from the fact that $U$ contains triples spanning all of the elements of $V$ (not just those in $U$).
\end{proof}

Another simple observation is that applying an invertible linear map to the elements of $V$ preserves the property of being an LCC. 

\begin{obs}\label{obs-linearmap}
	If $V = (v_1,\ldots,v_n) \in(\R^d)^n$ is a $(3,\delta)$-LCC then, for any invertible linear map $M: \R^d \mapsto \R^d$ the list $\hat V = (\hat v_1,\ldots,\hat v_n) \in (\R^d)^n$, with $\hat v_j = \frac{Mv_j}{\|Mv_j\|}$, is also a $(3,\delta)$-LCC.
\end{obs}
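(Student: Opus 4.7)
The plan is simply to verify that the same family of matchings $\{M_v : v \in V\}$ witnessing that $V$ is a $(3,\delta)$-LCC in the geometric sense also witnesses this for $\hat V$. The content of the observation is essentially that the geometric LCC definition depends only on the linear dependence structure of the list, which is preserved by invertible linear maps and by nonzero rescalings. So there is very little to do.

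First I would note that because $M$ is invertible, $Mv_j \neq 0$ for every $j$, so each $\hat v_j = Mv_j/\|Mv_j\|$ is well-defined and nonzero; moreover, $M$ sends a spanning set to a spanning set, so $\hat V$ still spans $\R^d$. Next, fix $v \in V$ and a triple $\{j_1,j_2,j_3\} \in M_v$. By definition of an LCC in geometric form, $v \in \span\{v_{j_1},v_{j_2},v_{j_3}\}$, so we can write $v = \alpha_1 v_{j_1}+\alpha_2 v_{j_2}+\alpha_3 v_{j_3}$ for some scalars $\alpha_i$. Applying the linear map $M$ gives $Mv = \alpha_1 Mv_{j_1}+\alpha_2 Mv_{j_2}+\alpha_3 Mv_{j_3}$, hence $Mv \in \span\{Mv_{j_1},Mv_{j_2},Mv_{j_3}\}$. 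Since each $\hat v_j$ is a positive scalar multiple of $Mv_j$, the spans are identical, and dividing both sides by the appropriate norm gives $\hat v \in \span\{\hat v_{j_1},\hat v_{j_2},\hat v_{j_3}\}$.

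Since this spanning property holds for every triple in every $M_v$, and the matchings $M_v$ are unchanged (in particular, they are still matchings of $[n]$ of size at least $\delta n$), the list $\hat V$ together with the same matchings $\{M_v\}$ is a $(3,\delta)$-LCC in the geometric sense of Definition~\ref{def-linearlcc}. There is no genuine obstacle here: the only thing one must be careful about is that the normalization step does not destroy anything, which is clear because rescaling individual vectors by nonzero scalars preserves membership in the span.
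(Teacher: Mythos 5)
Your proof is correct and is the natural (indeed essentially the only) argument; the paper states this observation without proof, and your verification that the original matchings $M_v$ still witness the spanning property after applying $M$ and normalizing is exactly what is needed.
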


\subsection{Lower bounds for $2$-query LDCs}\label{sec-ldc}

One of the ingredients in the proof will be a strong (exponential) lower bound on the length of linear 2-query Locally Decodable Codes (LDCs), which are weaker versions of LCCs. As with LCCs there are two ways of defining LDCs.

\begin{define}[linear $q$-LDC, decoder definition]
A linear $(q,\delta)$-LDC over a field $\F$ is a linear $d$-dimensional subspace $U \subset \F^n$, and a set of $d$ coordinates $j_1, j_2, \ldots j_n \in [n]$ such that the projection of $U$ on to those $d$ coordinates is full dimensional\footnote{If the LDC was systematic, then the first $d$ coordinates would suffice.}, and such that there exists a randomized
	decoding procedure $D : \F^n \times [d] \mapsto \F$ with the following properties:
	\begin{enumerate}
	\item For all $x \in U$, for all $i \in [d]$ and for all $y \in \F^n$ with $w(y) \leq \delta n$ we have that $D\left( x + y, i\right) = x_{j_i}$ with probability at least $3/4$ (the probability is taken only over the internal randomness of $D$).
	\item For every $y \in \F^n$ and $i \in [d]$, the decoder $D(y,i)$ reads at most $q$ positions in $y$.
	\end{enumerate}

\end{define}

Let $\{e_1, e_2, \ldots, e_d\}$ be the set of standard basis vectors in $\R^d$. 

As with LCCs, taking the rows of the generating matrix (and possibly applying an invertible linear map that sends them to the $e_i$s) allows us to move to the geometric form. This might require us to replace $\delta$ with $\delta/q$.

\begin{define}[linear $q$-LDC, geometric definition]
Let $V = (v_1,\ldots,v_n) \in (\F^d)^n$ be a list of $n$ vectors
spanning $\F^d$. We say that $V$ is a linear {\em
$(q,\delta)$-LDC} in geometric form if for every $i \in [d]$ there exists a q-matching $M_i$ in $[n]$ of size $\geq \delta n$ such that for every $q$-tuple $\{v_{j_1},v_{j_2},\ldots, v_{j_q}\} \in M_i$ it holds that $e_i \in \span\{v_{j_1},v_{j_2},\ldots, v_{j_q}\}$. We denote by $d = \dim(V)$.
\end{define}

\begin{thm}[lower bounds for 2-LDC \cite{DvirShpilka06}]\label{thm-2LDC}
Let $\delta \in [0,1]$, $\F$ be a field, and let $V = (v_1,\ldots,v_n) \in  (\F^d)^n$ be a linear $(2,\delta)$-LDC in geometric form. Then $$n \geq 2^{\frac{\delta d}{16} -1}.$$
\end{thm}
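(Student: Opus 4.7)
The plan is to establish the exponential lower bound via an iterative dimension-reduction scheme that quotients by one basis vector $e_i$ per step and uses the matching $M_i$ to identify paired indices. The goal at each iteration is to produce a $(2, \delta')$-LDC of dimension $d-1$ on roughly $(1-\delta) n$ vectors with $\delta' = \Omega(\delta)$; iterating $d$ times then forces the block length to shrink geometrically, yielding $n \geq 2^{\Omega(\delta d)}$.

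First I would preprocess: by Observation~\ref{obs-linearmap} I may apply an invertible linear map and assume the basis of $\F^d$ is $e_1, \dots, e_d$ as in the geometric definition. I would also peel off degenerate pairs, those where $e_i = \alpha v_j + \beta v_k$ with one of $\alpha, \beta$ zero (so that $e_i$ is a scalar multiple of a single vector), and remove the corresponding vectors; this costs only a constant factor in $\delta$ and block length.

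The core step is the quotient construction. Apply $\pi : \F^d \to \F^d/\langle e_1\rangle \cong \F^{d-1}$. For each pair $\{v_j, v_k\} \in M_1$, we have $\alpha\pi(v_j) + \beta\pi(v_k) = 0$, so $\pi(v_j)$ and $\pi(v_k)$ are proportional in $\F^{d-1}$; I would then identify these two indices into one, producing a reduced list of at most $(1-\delta) n$ vectors. The matchings $M_2, \dots, M_d$ descend naturally because for $i \geq 2$, the relation $e_i = \alpha v_j + \beta v_k$ is preserved by $\pi$. The key technical point is that each such $M_i$ loses only a bounded number of pairs to the identifications coming from $M_1$: since both $M_1$ and $M_i$ are matchings on $[n]$, a simple double-counting argument shows that at most $\delta n$ pairs in $M_i$ are affected, leaving a constant fraction of the new block length $(1-\delta) n$ intact. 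A small additional argument is needed to handle the situation where, after identification, a pair in $M_i$ becomes degenerate (both endpoints collapse to the same class), in which case $e_i$ is recovered as a multiple of a single vector and that vector is peeled off.

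Iterating the quotient step $d$ times reduces the code to dimension $0$ with block length at most $n (1 - c\delta)^d \leq n \cdot e^{-c\delta d}$ for a constant $c > 0$ that absorbs the per-step losses. Since the final block length must be at least $1$, this yields $n \geq e^{c\delta d} \geq 2^{c\delta d / \ln 2}$, and with suitable choice of the bookkeeping constants one obtains the stated $n \geq 2^{\delta d/16 - 1}$. The main obstacle --- and the reason for the somewhat loose constant $16$ --- is the careful tracking of the cumulative deterioration of $\delta$ across iterations, in particular handling situations where pairs in later matchings are affected multiple times by successive identifications and ensuring that $\delta'$ stays bounded below by a constant fraction of $\delta$ at every step. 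Note that the entire argument is field-independent, since the quotient construction uses only linear algebra over $\F$.
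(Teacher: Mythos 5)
The paper does not prove this theorem---it cites it from \cite{DvirShpilka06}---so there is no in-paper argument to compare against. Evaluated on its own terms, your iterative-quotienting proposal has a genuine gap that I do not believe can be patched by bookkeeping alone.

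The gap is in the claim that after quotienting by $e_1$ and identifying the $M_1$-pairs, each surviving matching $M_i$ ($i\geq 2$) retains ``a constant fraction'' of its pairs. What you need is that $M_i$ loses at most a $(1-\Omega(\delta))$-fraction, so that the effective error parameter stays $\Omega(\delta)$ uniformly across all $d$ rounds. But the identification step can halve $M_i$ even in one round. Take the ``offset'' configuration: $M_1=\{\{1,2\},\{3,4\},\{5,6\},\ldots\}$, $M_i=\{\{2,3\},\{4,5\},\{6,7\},\ldots\}$. After identifying $1\equiv 2$, $3\equiv 4$, $5\equiv 6,\ldots$, the images of the $M_i$-pairs form a single long path $[1,2]-[3,4]-[5,6]-\cdots$: consecutive pairs share a class and are no longer disjoint, so any matching you extract from $M_i$ has at most $\lceil |M_i|/2\rceil$ pairs. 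There is nothing in the LDC hypothesis that excludes this adversarial alignment. With $\delta_t\approx \delta/2^t$ the iteration dies: the block length $n_t = n\prod_{s<t}(1-\delta_s)$ satisfies $\prod_{s\geq 0}(1-\delta/2^s)\geq e^{-4\delta}=\Omega(1)$, so $n_d=\Omega(n)$ and you never reach a contradiction. Your ``double-counting'' bound that ``at most $\delta n$ pairs in $M_i$ are affected'' is vacuous, since $|M_i|=\delta n$ to begin with; what matters is how many pairs are \emph{destroyed} by collisions, and in the worst case that is a constant fraction per step, not a $\delta$-fraction.

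So the approach as written would at best give $n\geq (1+\Omega(\delta))^{O(1)}$, nowhere near exponential in $d$. The known proofs of this theorem are not iterative dimension reductions at all: the Goldreich--Karloff--Schulman--Trevisan/Dvir--Shpilka argument is a single-shot combinatorial/information-theoretic argument exploiting that $e_i\in\span\{v_j,v_k\}$ forces $v_j,v_k$ to differ in the $e_i$-direction, and uses the $d$ matchings \emph{simultaneously} (e.g.\ via a random restriction on codeword coordinates, or an expansion/potential argument on the set $\{v_1,\ldots,v_n\}$) so that no per-round degradation of $\delta$ ever arises. If you want to salvage an inductive argument, you would need a mechanism preventing the matchings from cascading collapse---for instance, a preprocessing step guaranteeing the matchings $M_1,\ldots,M_d$ have pairwise support overlap $O(\delta^2 n)$---but that is itself a nontrivial structural claim you have not established, and it is not true in general.
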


\subsection{Codes in regular form}
In the restriction step, it is convenient for us to assume that for each triple $(v_i, v_j, v_k) \in M_v$ each element of the triple is ``used" in decoding to $v$. Indeed in Claim~\ref{cla-regular}, we show how we can easily reduce to this case provided that no large subset of $V$ is contained in a low dimnesional space. More precisely, for $x,y,z \in \R^d$, let us denote by $\span^*\{x,y,z\}$ the set of all elements of the form $\alpha x +  \beta y + \gamma z $ with $\alpha,\beta,\gamma \in \R$, such that $\alpha,\beta,\gamma$ are all nonzero.  
\begin{define}
	Let $V = (v_1,\ldots,v_n) \in (\F^d)^n$ be a $(3,\delta)$-LCC with decoding matchings $M_v, v\in V$. We say that $V$ (with these matchings) is in {\em regular} form if, in each triple $(x,y,z) \in M_v$ we have that $v \in \span^*\{x,y,z\}$.
\end{define}
,
\begin{claim}\label{cla-regular}
Let $V = (v_1,\ldots,v_n) \in (\F^d)^n$ be a $(3,\delta)$-LCC so that every subset $U \subset V$ of size $|U| \geq (\delta/2)n$ has dimension at least $\omega((1/\delta) \log(n))$. Then, there exists a $(3,\delta/4)$-LCC $\,\,V' \subset V$ of size $n' \geq (1 - \delta/2)n$, and dimension $d'=d$, that is in regular form. Moreover, given any matchings $M_v$ for the code $V$ we can take the new (regular) matchings $M'_v$ for $V'$ to be sub-matchings of the original ones.
\end{claim}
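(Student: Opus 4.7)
The plan is to label each triple $(x,y,z) \in M_v$ as \emph{regular} when $v \in \span^*\{x,y,z\}$ and \emph{non-regular} otherwise; in a non-regular triple some strict subset of $\{x,y,z\}$ already spans $v$. I will call $v \in V$ \emph{bad} if fewer than $(3\delta/4)n$ triples of $M_v$ are regular, write $B \subset V$ for the set of bad vertices, and aim first to bound $|B| < (\delta/2)n$ and then to restrict to $V' := V \setminus B$ equipped with the surviving regular triples.

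The main obstacle is bounding $|B|$, which I plan to handle by a reduction to $2$-query LDCs. Assume for contradiction that $|B| \geq (\delta/2)n$; the hypothesis of the claim then forces $d' := \dim(\span(B)) = \omega((1/\delta)\log n)$. For each $v \in B$ at least $(\delta/4)n$ triples of $M_v$ are non-regular, and in every such triple I select a pair whose span already contains $v$. Because distinct triples in $M_v$ are disjoint as subsets of $[n]$, these selected pairs form a $2$-matching of size $\geq (\delta/4)n$ decoding $v$. Next I pick a basis $B' \subseteq B$ of $\span(B)$ of size $d'$ and take a linear map $T : \F^d \to \F^{d'}$ whose restriction to $\span(B)$ sends $B'$ to the standard basis $e_1,\dots,e_{d'}$. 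The list $T(V)$ spans $\F^{d'}$ (because $T(B')$ does), and for each $b \in B'$ the pushed-forward pair matching becomes a $2$-matching in $[n]$ decoding the corresponding standard basis vector; disjointness on indices $[n]$ is preserved even if $T$ collapses some vectors. Thus $T(V)$ is a linear $(2,\delta/4)$-LDC of dimension $d'$ and block length $n$, and Theorem~\ref{thm-2LDC} forces $d' = O((1/\delta)\log n)$, contradicting the $\omega$-bound above.

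Once $|B| < (\delta/2)n$ is established, the rest is bookkeeping. The set $V' := V \setminus B$ has $|V'| > (1 - \delta/2)n$, so Claim~\ref{cla-refinelarge} makes $V'$ a $(3, \delta/2)$-LCC of the same dimension $d$ using sub-matchings of the original $M_v$. For each $v \in V'$ I take $M'_v$ to be the regular triples of $M_v$ that lie entirely inside $V'$; since $M_v$ is a matching on $[n]$, the number of its triples meeting $B$ is at most $|B|$, so
\[
|M'_v| \;\geq\; (3\delta/4)n - |B| \;>\; (\delta/4)n \;\geq\; (\delta/4)|V'|.
\]
Each $M'_v$ is by construction a sub-matching of $M_v$ consisting of regular triples, so $V'$ with these matchings is a $(3,\delta/4)$-LCC in regular form, as required.
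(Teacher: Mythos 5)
Your proof is correct and follows essentially the same approach as the paper: bound the number of vertices with many non-regular (``bad'') triples by constructing a $2$-query LDC from the bad triples and appealing to Theorem~\ref{thm-2LDC}, then delete those vertices together with the remaining bad triples and triples meeting the deleted set. Your write-up is a bit more explicit than the paper's terse sketch about the linear map $T$ onto $\F^{d'}$ and why the pushed-forward pair matchings yield a valid geometric-form $2$-LDC, and it uses slightly different intermediate thresholds ($(\delta/4)n$ vs.\ $(\delta/10)n$), but the argument is structurally identical.
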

\begin{proof}
Call a triple $(x,y,z) \in M_v$ {\em bad} if there is a proper subset of it that spans $v$, i.e. $v \not \in \span^*\{x,y,z\}$. If there were $(\delta/2)n$ points $v \in V$, each with at least $(\delta/10)n$ bad triples in $M_v$, then we could use these bad triples to construct a $(2,\delta/10)$-LDC of size $\leq n$ decoding $\omega((1/\delta) \log(n))$ linearly independant elements of $V$. This would give a contradiction using Theorem~\ref{thm-2LDC} and the assumption on the dimension of any set of size $(\delta/2)n$ in $V$. Therefore, there are at most $(\delta/2)n$ points $v \in V$ with many ($\geq (\delta/10)n$) bad triples. Throwing away this set, and removing all triples containing them (as well as all bad triples from the other matchings) gives us the code $V'$ a required (as in Claim~\ref{cla-refinelarge}).
\end{proof}


\section{Barthe's theorem}\label{sec-barthe}

The main purpose of this section is to derive Lemma~\ref{cor-barthe2},  a result of F. Barthe \cite{Bar98} which, given a set of points sufficiently close to being in general position, finds a linear transformation that `moves' these points so that their `directions'  point in a close to uniform way. More precisely, for a set $U = (u_1, \ldots,u_n) \in (\R^d)^n$ let $\cB(U)$ be the set of all subsets of $[n]$ of size $d$ such that the corresponding vectors of $U$ form a basis of $\R^d$. Suppose that there is a distribution $\mu$ supported on $\cB(U)$ such when sampling a random basis from $\mu$, each element of $U$ is chosen with some good probability. Then there is an invertible linear transformation such that after normalizing, the new points are ``approximately isotropic".  This result is formalized in Lemma~\ref{cor-barthe2} which we state below:

\begin{lem}\label{cor-barthe2}
	Let $U = (u_1, \ldots,u_n) \in (\R^d)^n$. Let $S \subseteq [n]$, and suppose $\mu$ is a distribution supported on $\cB(U)$ such that for all $j \in S$	
	$$\alpha \leq \Pr_{I \sim \mu}[ j \in I]$$ Then, there exists an invertible  linear map $M: \R^d \mapsto \R^d$ so that, denoting $\hat u_j = \frac{Mu_j}{\|Mu_j\|}$, we have for all unit vectors $w \in \R^d$
	$$
	\sum_{j\in S}  \ip{\hat u_j}{w}^2  \leq \frac{2}{\alpha}
	$$	
\end{lem}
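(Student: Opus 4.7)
The plan is to feed the distribution $\mu$ directly into Barthe's theorem (which is presumably established in this section prior to our corollary). First I would define, for each $j \in [n]$, the weight
\[
c_j := \Pr_{I \sim \mu}[\, j \in I\,].
\]
Since $\mu$ is supported on $\cB(U)$ and every element of $\cB(U)$ is a basis of $\R^d$ (hence has exactly $d$ elements), we get $\sum_{j=1}^n c_j = \E_{I \sim \mu}[|I|] = d$. Moreover, by construction $(c_1,\ldots,c_n)$ is a convex combination of $\{0,1\}$-indicator vectors of bases in $\cB(U)$, i.e.\ it lies in the ``basis polytope'' of the configuration $U$. This is exactly the hypothesis Barthe's theorem needs, and observing it is the only combinatorial step in the proof.

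Next I would apply Barthe's theorem to this weighted configuration. This produces an invertible linear map $M:\R^d \to \R^d$ such that, setting $\hat u_j := Mu_j/\|Mu_j\|$, the weighted second-moment form is controlled by the identity:
\[
\sum_{j=1}^n c_j \,\ip{\hat u_j}{w}^2 \;\leq\; 2
\]
for every unit vector $w \in \R^d$. (In its cleanest formulation Barthe's theorem actually gives equality with $1$, i.e.\ $\sum_j c_j\,\hat u_j \hat u_j^T = I$; the factor of $2$ absorbs any approximate form used in the section, e.g.\ a perturbation argument needed to ensure the maximizer of the relevant potential is attained with strictly positive weights.)

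With that in hand, the conclusion is a one-line averaging. For any unit $w$, using $c_j \geq \alpha$ for all $j \in S$ and non-negativity of the remaining terms,
\[
\sum_{j \in S} \ip{\hat u_j}{w}^2 \;\leq\; \frac{1}{\alpha}\sum_{j \in S} c_j \,\ip{\hat u_j}{w}^2 \;\leq\; \frac{1}{\alpha}\sum_{j=1}^n c_j\,\ip{\hat u_j}{w}^2 \;\leq\; \frac{2}{\alpha},
\]
which is the desired inequality. The genuine obstacle is of course Barthe's theorem itself, a nontrivial result from convex geometry whose proof (via a variational / reverse Brascamp--Lieb argument) is what the bulk of Section~\ref{sec-barthe} is devoted to. Conditional on that black box, everything else in Lemma~\ref{cor-barthe2} reduces to the two easy observations above: that a distribution over bases of $U$ gives weights lying in the basis polytope and summing to $d$, and that the pointwise lower bound $c_j \geq \alpha$ on $S$ lets us convert an aggregate isotropy bound into a bound restricted to $S$.
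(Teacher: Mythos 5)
Your proposal is correct and follows essentially the same route as the paper: you observe that the marginals $(c_j)$ lie in the basis polytope $\cK(U)$ (the paper's Claim~\ref{cla-propKU}), invoke Barthe's theorem after a small perturbation to push the weights into the relative interior (accounting for the factor of $2$), and then use the pointwise lower bound $c_j \geq \alpha$ on $S$ to restrict the isotropy identity. The only cosmetic difference is bookkeeping: the paper perturbs $\gamma$ to $\gamma'$ with $\gamma'_j \geq \alpha/2$ and applies Barthe exactly ($\sum_j \gamma'_j \ip{\hat u_j}{w}^2 = 1$), whereas you phrase it as an approximate Barthe bound $\leq 2$ with the original weights and then divide by $\alpha$; both yield $2/\alpha$.
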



Observe that if the vectors are in general position then the uniform distribution on distinct $d$-tuples gives $\alpha = d/n$, in which case we would get $$
	\sum_{j\in [n]}  \ip{\hat u_j}{w}^2  \leq \frac{2n}{d}.
	$$	

One can just assume the lemma above which follows in a straightforward way from  from  \cite{Bar98}, and skip to the next section. However for completeness, we present a proof here. Before we give the proof, we first set up some notation.

For a finite set $S$, a distribution supported on $S$ is a function $\mu :S \mapsto [0,1]$ so that $\sum_{x \in S}\mu(x) = 1$. For two vectors $u,v \in \R^d$ we denote by $u \otimes v$ the tensor product of $u$ and $v$, namely the $d \times d$ matrix with entries $A_{ij} = u_iv_j$. We denote by $I_{d \times d}$ the $d \times d$ identity matrix. For $u \in \R^d$ we denote by $\|u\|$ the Euclidean (or $\ell_2$) norm.

\begin{define}[$\cB(U), \,\cK(U)$]
Let $U = (u_1, \ldots,u_n) \in (\R^d)^n$ be a list of $n$ points. Let $I \subseteq [n]$. We denote by $U_I = (u_i)_{i \in I}$ the sub-list of $U$ with indices in $I$. We denote by $$\cB(U) = \{ I \subset [n]\,\,|\,\, U_I \text{ is a basis of } \R^d \}$$ the set of index sets corresponding to sub-lists of $U$ of length $d$ which are linearly independent (and so span $\R^d$). For each $I \subset [n]$ we let $1_I \in \R^n$ denote the indicator vector of the set $I$. Finally we denote by $\cK(U) \subset \R^n$ the convex hull of the vectors $1_I$ for all $I \in \cB(U)$. We denote by $\cK(U)^o$ the relative interior of $\cK(U)$\footnote{The relative interior of a set is a subset of the points of the set that are not on the boundary of the set, relative to the smallest subspace containing the set}.
\end{define}

\begin{claim}[Properties of $\cK(U)$]\label{cla-propKU}
Let $U = (u_1, \ldots,u_n) \in (\R^d)^n$ be a list of $n$ points spanning $\R^d$. Let $\mu$ be a distribution supported on $\cB(U)$. For each $j \in [n]$, let $\gamma_j \in [0,1]$ be the probability that $j \in I$, when $I \subset [n]$ is sampled according to $\mu$. Then $\gamma = (\gamma_1,\ldots,\gamma_n)$ is in $\cK(U)$.
\end{claim}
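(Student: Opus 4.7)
The plan is to observe that the claim reduces to a direct computation: $\gamma$ is nothing but the expectation of the random indicator vector $1_I$ when $I$ is drawn from $\mu$, and an expectation of vectors against a probability distribution is by definition a convex combination of those vectors. So the proof should simply unfold the definitions and identify $\gamma$ with a convex combination of the $1_I$'s, which puts $\gamma \in \cK(U)$ by the very definition of $\cK(U)$.

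Concretely, I would first write, for each coordinate $j \in [n]$,
\[
\gamma_j \;=\; \Pr_{I \sim \mu}[j \in I] \;=\; \sum_{I \in \cB(U) : j \in I} \mu(I) \;=\; \sum_{I \in \cB(U)} \mu(I)\,(1_I)_j,
\]
using only the fact that $\mu$ is supported on $\cB(U)$ and the definition of the indicator vector $1_I$. This identity holds for every coordinate $j$ simultaneously, so I can assemble these scalar equalities into a single vector equality
\[
\gamma \;=\; \sum_{I \in \cB(U)} \mu(I)\cdot 1_I.
\]

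At this point the proof is essentially finished: the coefficients $\mu(I)$ are nonnegative and sum to $1$ (because $\mu$ is a probability distribution), so the right-hand side is by definition a convex combination of the extreme points $\{1_I : I \in \cB(U)\}$ whose convex hull defines $\cK(U)$. Therefore $\gamma \in \cK(U)$, completing the argument. There is no real obstacle here; the only thing one could fuss over is whether the sum makes sense, but $\cB(U)$ is finite (it is a subset of $\binom{[n]}{d}$), so the sum is a finite convex combination and no limiting argument is needed. The hypothesis that $U$ spans $\R^d$ is not even used in the proof; it is only needed to ensure that $\cB(U)$ is nonempty so that distributions $\mu$ supported on $\cB(U)$ exist in the first place.
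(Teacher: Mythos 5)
Your proof is correct and takes exactly the same route as the paper: both identify $\gamma$ as the convex combination $\sum_{I \in \cB(U)} \mu(I)\cdot 1_I$ and conclude by the definition of $\cK(U)$. You merely spell out the coordinate-wise computation that the paper leaves as ``easily seen.''
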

\begin{proof}
The vector $\gamma$ is easily seen to be equal to the convex combination $$ \sum_{I \in \cB(U)}\mu(I)\cdot 1_I. $$
\end{proof}

\begin{thm}[\cite{Bar98}]\label{thm-barthe}
Let $U = (u_1, \ldots,u_n) \in (\R^d)^n$ be a list of $n$ points spanning $\R^d$ and let $\gamma = (\gamma_1,\ldots,\gamma_n) \in \cK(U)^o$. Then there exists a real invertible $d \times d$ matrix $M$ such that,  denoting $\hat u_j = \frac{Mu_j}{\|Mu_j\|}$, we have
\begin{equation}\label{eq-barthe}
\sum_{j=1}^n \gamma_j \cdot (\hat u_j \otimes \hat u_j) = I_{d \times d}
\end{equation}
\end{thm}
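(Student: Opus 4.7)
The plan is to use a variational argument. On the cone of positive-definite $d \times d$ matrices $\cP_d$, define the functional
$$F(A) \;=\; \sum_{j=1}^n \gamma_j \log(u_j^T A u_j) \;-\; \log \det A.$$
Since $\gamma \in \cK(U)$ is a convex combination of indicator vectors $1_I$ each with $d$ ones, we have $\sum_j \gamma_j = d$, which gives the scaling invariance $F(\lambda A) = F(A)$ for every $\lambda > 0$. The search can therefore be restricted to the slice $\cS_d = \{A \succ 0 : \det A = 1\}$. A direct computation yields
$$\nabla F(A) \;=\; \sum_{j=1}^n \gamma_j \,\frac{u_j u_j^T}{u_j^T A u_j} \;-\; A^{-1},$$
so any interior critical point $A^* \in \cP_d$ satisfies $\sum_j \gamma_j \,u_j u_j^T/(u_j^T A^* u_j) = (A^*)^{-1}$. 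Setting $M = (A^*)^{1/2}$ and $\hat u_j = M u_j/\|M u_j\|$, conjugating both sides by $M$ (and using $\|M u_j\|^2 = u_j^T A^* u_j$) turns this into
$$\sum_{j=1}^n \gamma_j \,\hat u_j \otimes \hat u_j \;=\; I_{d\times d},$$
which is exactly~\eqref{eq-barthe}.

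What remains is the existence of such a critical point on $\cS_d$, which I would establish by showing $F$ is coercive and attains its minimum. Suppose $A_k \in \cS_d$ escapes to the boundary; after passing to a subsequence I can diagonalize $A_k$ in a limiting orthonormal basis and split the eigenvalues into those tending to $0$ and those tending to $+\infty$ (the two groups are coupled through $\det A_k = 1$). Let $W$ be the limiting subspace spanned by the vanishing eigendirections. The term $\sum_j \gamma_j \log(u_j^T A_k u_j)$ breaks up according to whether $u_j$ lies in $W$ or not, and a careful bookkeeping of logarithmic rates shows that the leading behaviour of $F(A_k)$ is controlled by the comparison between $\sum_{j : u_j \in W} \gamma_j$ and $\dim W$. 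The interior hypothesis $\gamma \in \cK(U)^o$ enters here via a Hall-type inequality,
$$\sum_{j:\, u_j \in W} \gamma_j \;<\; \dim W \qquad \text{for every proper } W \subsetneq \R^d,$$
which follows from the vertex description of $\cK(U)$: every $1_I \in \cB(U)$ satisfies the weak inequality $|\{j \in I : u_j \in W\}| \leq \dim W$, and being in the relative interior of the convex hull strictly separates $\gamma$ from any supporting hyperplane $\langle \cdot,\,1_W\rangle = \dim W$ that is not identically tight on $\cK(U)$; in the latter degenerate case, the direction of escape leaves $F$ invariant and may be quotiented out. The strict inequality is precisely what drives $F(A_k) \to +\infty$.

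The main obstacle, and the real content of the argument, is making the coercivity analysis rigorous: one must simultaneously control eigenvalues tending to $0$ and to $\infty$ at different logarithmic rates, keep track of where each $u_j$ sits relative to the limiting flag of eigenspaces, and combine these contributions using the strict Hall inequality. Note that $F$ is not globally convex (it is a sum of a concave and a convex piece), so existence does not follow from convexity; one really does need the compactness/coercivity argument above. Once coercivity is established on $\cS_d$, the continuity of $F$ and standard compactness produce a minimizer $A^* \in \cS_d$, and the critical-point calculation of the first paragraph then delivers the desired matrix $M = (A^*)^{1/2}$.
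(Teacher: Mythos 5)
Your variational setup is genuinely different from the paper's. The paper works in the exponential/entropy parametrization: it maximizes $f(\gamma,t) = \langle\gamma,t\rangle - \ln\det\bigl(\sum_j e^{t_j}\,u_j\otimes u_j\bigr)$ over $t\in\R^n$, whereas you minimize $F(A) = \sum_j \gamma_j\log(u_j^T A u_j) - \log\det A$ over $A \succ 0$. Your first-order computation is correct: $\nabla F(A^*) = 0$ (equivalently, the critical point on $\cS_d$ — and the Lagrange multiplier vanishes because $\sum_j\gamma_j = d$) gives $\sum_j \gamma_j\, u_ju_j^T/(u_j^T A^* u_j) = (A^*)^{-1}$, and conjugating by $M=(A^*)^{1/2}$ yields \eqref{eq-barthe}, matching the paper's algebra after setting $M^2 = X(t^*)^{-1}$. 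One structural advantage of the paper's formulation that you lose: $t \mapsto \ln\det X(t)$ is convex, so $f(\gamma,\cdot)$ is globally concave and the optimization is well-posed in a way that $F$ is not — as you yourself observe, $F$ is concave-plus-convex and you must argue coercivity by hand. This is where the two arguments diverge in completeness: the paper outsources the existence of the maximizer to Barthe's Proposition~6 (quoted as Claim~\ref{cla-prop6}), while you sketch a direct coercivity proof via eigenvalue flags and the strict Hall inequality $\sum_{j:u_j\in W}\gamma_j < \dim W$. The Hall inequality from $\gamma\in\cK(U)^o$ and the flag decomposition are indeed the right ingredients, but the sketch leaves the decisive step open: you must simultaneously track the logarithmic rates of eigenvalues going to $0$ and to $\infty$ along a limiting flag and show the net rate is strictly positive, and the degenerate case — when some $\langle\cdot,1_{\{j:u_j\in W\}}\rangle=\dim W$ is tight on all of $\cK(U)$ — is dismissed in one clause ("may be quotiented out") that actually requires an argument (this is precisely the content of Barthe's proposition). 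So: the critical-point derivation is correct and a legitimate alternative to the paper's, but the existence step is at the same level of incompleteness as the paper's (which at least names what it is citing), and carrying your coercivity analysis to completion would essentially reprove Barthe's Proposition~6 in the dual parametrization.
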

\begin{proof}
We will show how the proof follows from one of the propositions proved in \cite{Bar98} (whose proof we will not repeat here). The idea is to define a certain optimization problem parametrized by $\gamma$ and to show that the maximum is achieved for all $\gamma \in \cK(U)$. Then, the matrix $M$ will arise from equating the gradient to zero at the maximum and solving the resulting equations.

We start be defining the optimization problem. For $t \in \R^n$ we define $$ X = X(t) = \sum_{j =1}^n e^{t_j} \cdot (u_j \otimes u_j).$$ Notice that $X(t)$ has a positive determinant for all $t \in \R^n$, since $U$ spans $\R^d$. Let $f: \R^n \times \R^n \mapsto \R$ be defined as $$f(\gamma,t) = \ip{\gamma}{t} - \ln\det(X(t)). $$ The optimization problem is defined as $$ \phi^*(\gamma) = \sup_{t \in \R^n} f(\gamma,t). $$ We now state a claim from \cite{Bar98} which give sufficient conditions for the supremum $\phi^*(\gamma)$ to be realized.

\begin{claim}[Rephrased from Proposition 6 in  \cite{Bar98}]\label{cla-prop6}
If $\gamma \in \cK(U)^o$ then the supremum $\phi^*(\gamma)$ is achieved. That is, there exists $t^* \in \R^n$ such that $f(\gamma, t^*) = \phi^*(\gamma)$.
\end{claim}

Let $t^* \in \R^n$ be a maximizer given by the claim. 
We can now use the fact that the partial derivatives $\frac{\partial f(\gamma,t)}{\partial t_j}$ all vanish at the point $t^*$. Recall that $\frac{d}{ds}\ln\det(A) = \tr\left( A^{-1} \frac{d}{ds} A\right)$ at all points where $A$ is invertible \cite[Ch. 9, Thm. 4]{LaxBook}. Taking the derivative of $f$ at $t^*$ then gives:
$$
0 = \frac{\partial f(\gamma,t)}{\partial t_j}(t^*) = \gamma_j - \tr\left(X(t^*)^{-1} e^{t^*_j} (u_j \otimes u_j)\right).
$$
Since $X(t^*)^{-1}$ is positive definite, there exists a symmetric matrix $M$ so that $M^2 = X(t^*)^{-1}$. Plugging this into the last equation and using properties of the trace function, we get:
$$ 0 = \gamma_j - e^{t^*_j}\| Mu_j\|^2.  $$ This means that
$$ M^{-2} = X(t^*) = \sum_{j=1}^n \frac{\gamma_j}{\| Mu_j\|^2} \cdot (u_j \otimes u_j) = \sum_{j=1}^n \gamma_j \cdot \left( \frac{u_j}{\| Mu_j\|} \otimes \frac{u_j}{\| Mu_j\|}\right). $$ Multiplying by $M$ from both sides we get
$$ I_{d \times d} = \sum_{j=1}^n \gamma_j \cdot \left( \frac{Mu_j}{\| Mu_j\|} \otimes \frac{Mu_j}{\| Mu_j\|}\right)$$ as was required.
\end{proof}

%

\begin{proof}[Proof of Lemma~\ref{cor-barthe2}]
	Let $\gamma \in \R^n$ be such that $\gamma_j =  \Pr_{I \sim \mu}[ j \in I] $ for all $j \in [n]$.  By Claim~\ref{cla-propKU}, $\gamma \in \cK(U)$. This means we can find $\gamma' \in \cK(U)^o$ of distance at most $\barconst$ from $\gamma$ for all $\barconst > 0$. Hence, we can choose $\barconst$ sufficiently small so that $\alpha/2 \leq \gamma_j'$ for all $j \in S$. Using Theorem~\ref{thm-barthe} we get that there exists an invertible $M$ so that
	$$ I_{d \times d} = \sum_{j=1}^n \gamma'_j (\hat u_j \otimes \hat u_j).$$ Multiplying by the column vector $w$ from the left and by the row vector $w^t$ from the right we get that\
\begin{equation*}
	1 = \ip{w}{w} = \sum_{j=1}^n \gamma'_j \ip{\hat u_j}{w}^2 \geq (\alpha/2)\sum_{j\in S} \ip{\hat u_j}{w}^2.
\end{equation*}	
	This completes the proof.	
\end{proof}


\section{Reducing to the well-spread case}\label{sec-reducewell}

In this section we prove a lemma saying that, when analyzing an LCC $V = (v_1,\ldots,v_n)$ over $\R$, we can assume  that the elements of $V$  are unit vectors pointing in `well spread' directions. The precise form of `well spread' is that given by Barthe's theorem (Lemma~\ref{cor-barthe2}).  More formally, the lemma will say that {\em any} list of vectors can be transformed into  `well-spread' list as long as it does not contain a large low dimensional subset. We formalize this result in Theorem~\ref{thm-bartheconvenient}. Below we state a lemma which basically follows as a corollary of the above theorem when the original list of vectors is an LCC. We first state and prove this lemma. 

\begin{lem}\label{lem-reducewell}
Let $V = (v_1,\ldots,v_n) \in (\R^d)^n$ be a $(3,\delta)$-LCC be so that any subset $V' \subset V$ with $|V'| \geq (\delta/4)n$ satisfies  $\dim(V') > 4\beta d$. Then, there exists a subset $U = (u_1,\ldots,u_{n'}) \subset V$ that is a $(3,\delta/2)$-LCC with $|U| = n' \geq (1 - \delta/2)n$, and an invertible linear map $M:\R^d \mapsto \R^d$ so that, denoting $\hat u_j = \frac{Mu_j}{\|Mu_j\|}$, we have for all unit vectors $w \in \R^d$. 
	\begin{equation*}
	\sum_{j\in [n']}  \ip{\hat u_j}{w}^2  \leq \frac{n}{\beta d}.
	\end{equation*}
\end{lem}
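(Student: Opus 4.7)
The plan is to apply Lemma~\ref{cor-barthe2} (Barthe's theorem) to a suitable sublist $U \subset V$, setting $\alpha := 2\beta d/n$, which gives $2/\alpha = n/(\beta d)$ as required by the target bound. Concretely, I aim to find $U = (u_1,\ldots,u_{n'}) \subset V$ with $n' \geq (1-\delta/2)n$ and a probability distribution $\mu$ supported on $\cB(U)$ such that every coordinate $j \in [n']$ satisfies $\Pr_{I \sim \mu}[j \in I] \geq \alpha$. Once such a pair $(U,\mu)$ is in hand, Lemma~\ref{cor-barthe2} (with $S = [n']$) furnishes the invertible map $M: \R^d \mapsto \R^d$ together with the desired well-spread bound, and Claim~\ref{cla-refinelarge} (applied since $|U| \geq (1-\delta/2)n$) certifies that $U$ remains a $(3,\delta/2)$-LCC of the same dimension as $V$. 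Observation~\ref{obs-linearmap} further ensures that the renormalized list $\hat U$ inherits the LCC property.

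By the standard characterization of the matroid base polytope (Edmonds' theorem), the existence of a distribution $\mu$ on $\cB(U)$ with the required per-coordinate coverage is equivalent to the combinatorial condition $\alpha |S| \leq \rank(U_S)$ for every subset $S \subseteq [n']$. Geometrically, this says that for every linear subspace $H \subseteq \R^d$, one has $|U \cap H| \leq \dim(H) \cdot n / (2\beta d)$. To build such a $U$ inside $V$, I would iteratively remove elements: while some subspace $H$ of minimum dimension violates the inequality, discard the smallest possible number of points from $V \cap H$ to restore it. The hypothesis that every subset of $V$ of size $\geq (\delta/4)n$ has dimension $> 4\beta d$ tightly controls how many vectors of $V$ can concentrate in any low-dimensional subspace, and a careful bookkeeping of the removals should yield a cumulative total of at most $(\delta/2)n$ discarded elements.

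The main obstacle is precisely this accounting step: bounding the total number of removals by $(\delta/2)n$ requires handling the fact that different low-dimensional subspaces overlap, so fixing one violation may reintroduce another. A clean way to sidestep explicit iterative removal is via LP duality on the matroid polytope: maximize the number of coordinates $j$ with $\gamma_j \geq \alpha$ over $\gamma \in \cK(V)$, and show that if the optimum were smaller than $(1-\delta/2)n$, an extreme-point / dual argument would produce a subset $S \subseteq V$ with $|S| \geq (\delta/4)n$ yet $\rank(V_S) \leq 4\beta d$, directly contradicting the hypothesis. Once this combinatorial reduction is in place, the remaining pieces---invoking Barthe's theorem through Lemma~\ref{cor-barthe2}, preserving the LCC property through Claim~\ref{cla-refinelarge} and Observation~\ref{obs-linearmap}, and tracking the normalization $\hat u_j = Mu_j / \|Mu_j\|$---are straightforward.
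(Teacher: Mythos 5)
Your overall strategy matches the paper's: pass to a large sublist, produce a distribution on bases with large per-coordinate marginals, and then hand everything to Lemma~\ref{cor-barthe2}. The two proofs diverge only in the one step that actually carries the lemma's content, namely showing that such a distribution exists after discarding at most $(\delta/2)n$ points, and this is exactly where your write-up has a genuine gap. You reduce (correctly, via the base-polytope description) to enforcing the family of inequalities $\alpha|S|\le\rank(V_S)$, but you never show that $O(\delta n)$ deletions suffice, and you flag the bookkeeping as "the main obstacle." Your proposed rescue via "LP duality" is not an argument as written: the quantity you want to maximize --- the \emph{number} of coordinates with $\gamma_j\ge\alpha$ --- is not linear in $\gamma$, so there is no LP and no dual, and the claimed "extreme-point / dual argument" producing a large low-rank set is asserted rather than derived.

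The paper closes exactly this hole in Lemma~\ref{lem:notbarthe}, with a completely elementary argument that makes no appeal to polymatroid feasibility: take $\mu$ to be the greedy random basis (repeatedly pick a uniform unspanned vector), and show directly that any index with small selection probability is, with probability at least $1/2$, already spanned by the first $2\tau d$ greedily chosen vectors, so that a large low-dimensional subset pops out by averaging. For what it's worth, your Edmonds route can also be completed cleanly: the set function $f(S)=\alpha|S|-\rank(S)$ is supermodular, so its maximizers form a lattice with a unique maximal element $S^*$; the hypothesis forces $|S^*|<(\delta/4)n$ (else $\rank(S^*)<\alpha|S^*|\le 2\beta d$, contradicting $\dim(S^*)>4\beta d$), and a one-line submodularity computation ($f(S^*\cup T)\ge f(S^*)+f(T)$ for $T$ disjoint from $S^*$) shows that deleting $S^*$ alone kills every violating set. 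With that supplied --- plus the minor fix of applying Lemma~\ref{cor-barthe2} to the full $V$ with $S=[n]\setminus S^*$ rather than to the restricted list (so that $\cB$ is nonempty) --- your argument would go through, but as submitted the key combinatorial step is missing.
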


Recall that (Observation~\ref{obs-linearmap}) applying an invertible linear map to the elements of an LCC $V$ preserves the property of being an LCC. Hence, if we are aiming to prove that a $(3,\delta)$-LCC $V$ has a large low dimensional subset, we could use Lemma~\ref{lem-reducewell} to reduce to the case that the points of  $V$ are `well-spread'.


We will prove Lemma~\ref{lem-reducewell} using Lemma~\ref{cor-barthe2}. Recall that, Lemma~\ref{cor-barthe2} provides us with sufficient conditions under which a linear map $M$ as in the lemma exists. Namely, that there exists a distribution $\mu$ on spanning $d$-tuples of $V$ which hits each element in $V$ with probability not too small. We will show that, if this condition does not hold (that is, if such a $\mu$ does not exist), we can find a large low dimensional subset $V'$. The high level idea is to consider the greedy distribution on $d$-tuples that is sampled as follows: iteratively pick a random unspanned element from $V$ and add it to the spanning set until we cover all of $V$. If this distribution gives low probabilities for many elements of $V$ then we show that it must be due to the fact that these elements lie in some low dimensional subspace. The following definition will be crucial to this argument.

\begin{define}[$(\eta,\tau)$-independent set]
Let $U = (u_1, \ldots,u_n) \in (\R^d)^n$ be a list of $n$ points spanning $\R^d$. We say that $U$ is $(\eta,\tau)$-independent, if there exists a distribution $\mu$ supported on $\cB(U)$, and a set $S \subseteq [n]$ with $|S| \geq (1-\eta)n$ such that for all $j \in S$	
$$\tau \frac{d}{n} \leq \Pr_{I \sim \mu}[ u_j \in I]$$
\end{define}

Since every $I \sim \mu$ has exactly $d$ elements, observe that for every distribution $\mu$, $$E_j[\Pr_{I \sim \mu}[ u_j \in I]] = d/n.$$ Moreover, if the points were in ``general position", i.e. every $d$ of the points were linearly independent, then by taking the distribution $\mu$ to be the uniform distribution on $d$-tuples with distinct elements, we would get a $(0,1)$-independent set.

\begin{lem}\label{lem:notbarthe}
Let $U = (u_1, \ldots,u_n) \in (\R^d)^n$. If $U$ is not $(\eta,\tau)$-independent, then there exists a subspace $W$ of dimension at most $2\tau d$ which contains at least $\eta n/2$ elements of $U$. 
\end{lem}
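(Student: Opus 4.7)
The plan is to prove the contrapositive. Assume $U$ is not $(\eta,\tau)$-independent and construct the following explicit distribution $\mu$ on $\cB(U)$: sample a uniformly random permutation $\pi$ of $[n]$, and return the greedy basis $I_\pi = \{j : u_j \notin \span\{u_{\pi(t)} : t < \pi^{-1}(j)\}\}$. Let $p_j = \Pr_{I \sim \mu}[j \in I]$. By hypothesis applied to this specific $\mu$, the set $S_{\mathrm{bad}} = \{j : p_j < \tau d/n\}$ has cardinality strictly greater than $\eta n$; the low-dimensional subspace we seek will be found inside $S_{\mathrm{bad}}$.

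The first key step is to rewrite $p_j$ in a more tractable form. Conditioning on the position $k = \pi^{-1}(j)$ (uniform on $[n]$), the preceding $k-1$ elements form a uniformly random $(k-1)$-subset of $[n]\setminus\{j\}$, and $j \in I_\pi$ iff this subset fails to span $u_j$. Summing over $k$ yields $p_j = E[K_j]/n$, where $K_j$ is the random variable recording the first $k$ such that the first $k$ elements of a uniformly random ordering of $[n]\setminus\{j\}$ span $u_j$ (with $K_j$ extended appropriately in the degenerate case). Thus $j \in S_{\mathrm{bad}}$ forces $E[K_j] < \tau d$, and Markov's inequality gives $\Pr[K_j < 2\tau d] > 1/2$; by monotonicity in the subset size, a uniformly random $(2\tau d - 1)$-subset of $[n]\setminus\{j\}$ spans $u_j$ with probability strictly greater than $1/2$.

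The second step is a direct averaging argument. Draw $T \subseteq [n]$ uniformly of size $2\tau d$ and let $W_T = \span\{u_i : i \in T\}$, so $\dim(W_T) \leq 2\tau d$. A short case analysis on whether $j \in T$ (using the bound from the previous step in the conditional complement) gives $\Pr[u_j \in W_T] > 1/2$ for every $j \in S_{\mathrm{bad}}$. By linearity of expectation, $E\bigl[|\{j \in S_{\mathrm{bad}} : u_j \in W_T\}|\bigr] > |S_{\mathrm{bad}}|/2 > \eta n/2$, so some realization $T^*$ of the random subset produces a subspace $W := W_{T^*}$ of dimension at most $2\tau d$ containing at least $\eta n/2$ elements of $U$, which is the desired conclusion.

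I expect the main point of care to lie in threading the strict inequalities cleanly through the Markov step and the averaging step so that the final count genuinely exceeds $\eta n/2$, and in handling the slight loss from transferring random subsets of $[n]\setminus\{j\}$ to random subsets of $[n]$ (via the $\{j \in T\}$ term, which only helps us); besides taking ceilings when $2\tau d$ is not an integer, this is straightforward bookkeeping.
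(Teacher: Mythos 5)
Your proof is correct and the high-level strategy matches the paper's: both apply the definition of not-$(\eta,\tau)$-independent to the explicit ``greedy'' distribution and then use a probabilistic averaging argument to extract a small subspace containing many elements. But the technical implementation is genuinely different, and yours is arguably cleaner. The paper proves that for $u$ in the bad set, the span of the first $t = 2\tau d$ vectors of a greedy $\mu$-sample contains $u$ with probability at least $1/2$, via a somewhat involved conditioning argument (conditioning on the event $A$ that $u$ is either uncovered or explicitly picked, and comparing $\Pr_{I\sim\mu^{(t)}\mid A}[u\in I]$ against $\Pr_{I\sim\mu}[u\in I]/\Pr[A]$). You instead identify $p_j$ with $\E[K_j]/n$ where $K_j$ is the stopping time at which a random prefix of $[n]\setminus\{j\}$ first spans $u_j$, apply Markov's inequality to get $\Pr[K_j < 2\tau d] > 1/2$, and then average over a \emph{uniformly} random $\lfloor 2\tau d\rfloor$-subset $T$ of $[n]$ (rather than a greedy prefix) with a short case split on whether $j\in T$. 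The permutation formulation of $\mu$ you use is distributionally equivalent to the paper's sequential greedy sampling, so applying the non-independence hypothesis to it is legitimate. A couple of small wrinkles, both within what you flagged as bookkeeping: to keep $\dim(W_T)\le 2\tau d$ you should take $|T|=\lfloor 2\tau d\rfloor$ (not a ceiling), which still works because $K_j$ is integer-valued so $\Pr[K_j\le\lfloor 2\tau d\rfloor]>1/2$; and the ``degenerate case'' extension of $K_j$ (setting $K_j=n$, say, when $u_j\notin\span(U\setminus\{u_j\})$) is needed for the identity $\E[K_j]=np_j$ to hold exactly. Neither affects correctness.
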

\begin{proof}
Consider the following distribution $\mu$  supported on $\cB(U)$ that is sampled as follows: For $i$ going from $1$ to $d$, sample $u_i'$ uniformly at random from $U \setminus {\span(u_1', u_2', \ldots u_{i-1}')}$. 
Since $U$  is not $(\eta,\tau)$-independent, there exists a set $T \subset [n]$, with $|T| \geq \eta n$, such that for all $j \in T$	
$$\tau \frac{d}{n} \geq \Pr_{I \sim \mu}[ u_j \in I].$$

For $t \geq 2 \tau d$, uniformly sample $t$ linearly independent vectors $u_1^\ast,\ldots, u_t^\ast$ from $U$ and let $W$ be the subspace they span. Observe that the distribution on  $u_1^\ast,\ldots, u_t^\ast$ is the same as that obtained by taking a sample from $\mu$ and keeping only the first $t$ vectors in the list. Call this distribution $\mu^{(t)}$.
\begin{claim}\label{clm:notbarthe}
For every vector $u \in T$, $\Pr[u \in  W] \geq 1/2$.
\end{claim}

\begin{proof}[Proof of Claim~\ref{clm:notbarthe}]
Let $u \in T$. Let $A$ be the event that $u \in (U \setminus W) \cup \{u_1^\ast,\ldots, u_t^\ast\}$.
Let $p = \Pr[A]$. 
Observe that, as long as the vector $u$ is {\em not} picked, the $i$th vector in the  distribution $\mu^{(t)} \mid A$ is sampled uniformly at random from $(U \setminus {\span(u, u_1^\ast, u_2^\ast, \ldots u_{i-1}^\ast)}) \cup {u}$. 
Therefore, $$\Pr_{I \sim \mu^{(t)}\mid A}[ u \in I] \geq 1-\prod_{i=1}^t (1-1/n-i+1) = t/n \geq 2 \tau d/n.$$
However, $$\Pr_{I \sim \mu^{(t)}\mid A}[ u \in I] = \Pr_{I \sim \mu^{(t)}}[ u \in I]/\Pr[A] \leq \Pr_{I \sim \mu}[ u \in I]/\Pr[A] \leq \tau d/n\Pr[A].$$

Thus $$ p = \Pr[A] \leq 1/2.$$
Hence it follows that $\Pr[u \in  W] \geq 1/2$.

\end{proof}

Now the lemma easily follows, since Claim~\ref{clm:notbarthe} implies that the expected number of vectors in $T$ that lie in $W$ is at least $(1/2)|T| \geq \eta n/2$. Thus there exists a fixed subspace $W$ of dimension at most $2\tau d$ which contains at least $\eta n/2$ vectors of $U$. 
\end{proof}

\proof[Proof of Lemma~\ref{lem-reducewell}]

Applying Lemma~\ref{lem:notbarthe} we get that $V$ must be $(\delta/2, 2\beta)$-independent. Otherwise, $V$ would contain a subset $V'$ of size $(\delta/4)n$ and dimension at most $4\beta d$ (contradicting the assumption in the lemma). Hence, there exists a distribution $\mu$  on $\cB(U)$ and a set $S \subset [n]$ with $|S| \geq (1-\delta/2)n$ such that for all $j \in S$	
$$2\beta \frac{d}{n} \leq \Pr_{I \sim \mu}[ j \in I].$$ Let $U = V_S = \{v_i  \mid i \in S\} = (u_1,\ldots,u_{n'})$ with $n' = |S|$.
Lemma~\ref{cor-barthe2} now implies that there  there exists an invertible linear map $M$ so that, denoting $\hat u_j = \frac{Mu_j}{\|Mu_j\|}$, we have for all unit vectors $w \in \R^d$
\begin{equation*}
\sum_{j\in S}  \ip{\hat u_j}{w}^2  \leq \frac{n}{\beta d} 
\end{equation*}
Notice that $U$ is a $(3,\delta/2)$-LCC since the complement of $U$ can intersect at most $\delta n/2$ triples from each matching in $V$. This completes the proof of the lemma. \qed

\subsection{A convenient form of Barthe's theorem}

The proof of Lemma~\ref{lem-reducewell} actually gives a more general result (not mentioning LCCs) that might be of independent interest.

\begin{thm}\label{thm-bartheconvenient}
Let $V = (v_1,\ldots,v_n) \in (\R^d)^n$ with $\dim(V) = d$ be so that any subset $U \subset V$ of size $|U| \geq \alpha n$ has $\dim(U) \geq \beta d$. Then, there exists an invertible linear map $M: \R^d \mapsto \R^d$ and a subset $S \subset V$ of size $|S| \geq (1 - 2\alpha)n$ so that, if we denote by $\hat v = \frac{Mv}{\|Mv\|}$, we have for all unit vectors $w \in \R^d$
$$ \sum_{v \in S} \ip{\hat v}{w}^2 \leq \frac{4n}{\beta d}. $$
\end{thm}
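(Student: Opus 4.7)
}
The plan is to extract exactly the ``Barthe'' part of the proof of Lemma~\ref{lem-reducewell}, throwing away the LCC-specific layer (matchings, Claim~\ref{cla-refinelarge}) since the statement makes no reference to codes. The skeleton is:  (i) use the hypothesis together with Lemma~\ref{lem:notbarthe} to exhibit a good distribution on bases of $V$, (ii) feed that distribution into Lemma~\ref{cor-barthe2} to produce the map $M$ and the bound.

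First I would show that $V$ is $(2\alpha,\beta/2)$-independent. Suppose it is not. Then Lemma~\ref{lem:notbarthe}, applied with $\eta = 2\alpha$ and $\tau = \beta/2$, produces a subspace $W \subset \R^d$ with $\dim(W) \le 2\tau d = \beta d$ that contains at least $\eta n/2 = \alpha n$ vectors of $V$. Calling this sub-list $U$, we have $|U| \ge \alpha n$ and $\dim(U) \le \beta d$, which up to the $\le$ vs.\ $<$ boundary contradicts the hypothesis of the theorem. To make the contradiction strict I would apply Lemma~\ref{lem:notbarthe} with $\tau = \beta/2 - 1/(2d)$, so that $2\tau d = \beta d - 1 < \beta d$; this changes the final constant only by a negligible amount which is easily absorbed into the slack between $2$ and $4$ in the bound $4n/(\beta d)$.

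Having established $(2\alpha,\beta/2)$-independence of $V$, there is a distribution $\mu$ on $\cB(V)$ and a set $S \subseteq [n]$ with $|S| \ge (1-2\alpha)n$ such that
\[
\Pr_{I \sim \mu}[j \in I] \;\ge\; \frac{\beta}{2}\cdot\frac{d}{n} \;=\; \frac{\beta d}{2n} \qquad \text{for every } j \in S.
\]
I would then apply Lemma~\ref{cor-barthe2} to $V$, $S$, $\mu$, with its parameter $\alpha_{\text{Barthe}} = \beta d/(2n)$. It yields an invertible linear map $M:\R^d \to \R^d$ such that, writing $\hat v = Mv/\|Mv\|$, every unit vector $w \in \R^d$ satisfies
\[
\sum_{v \in S} \ip{\hat v}{w}^2 \;\le\; \frac{2}{\alpha_{\text{Barthe}}} \;=\; \frac{4n}{\beta d},
\]
which is exactly the conclusion of the theorem, with the required $|S| \ge (1-2\alpha)n$.

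The only real obstacle is the dimensional bookkeeping in the contrapositive step: Lemma~\ref{lem:notbarthe} produces a subspace of dimension \emph{at most} $2\tau d$ whereas the hypothesis forbids dimension \emph{strictly less than} $\beta d$, so one cannot literally take $\tau = \beta/2$ without a tiny adjustment. The fix is purely quantitative (round $\tau$ down by $1/(2d)$, or equivalently use a strict-inequality variant of Lemma~\ref{lem:notbarthe}); the constant $4$ on the right-hand side is the slack that makes this robust. All substantive mathematical content lies in Lemmas~\ref{cor-barthe2} and~\ref{lem:notbarthe}, which have already been proved, so nothing new is required here.
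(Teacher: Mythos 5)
Your overall route is exactly the paper's: establish $(2\alpha,\beta/2)$-independence of $V$ by contradiction via Lemma~\ref{lem:notbarthe}, then hand that distribution to Lemma~\ref{cor-barthe2}. The paper's own proof is two sentences saying precisely this, and you are right that, read literally, it has the boundary defect you flag (Lemma~\ref{lem:notbarthe} outputs $\dim(W)\le 2\tau d$, and with $\tau=\beta/2$ the conclusion $\dim(W)\le\beta d$ is compatible with the hypothesis $\dim(U)\ge\beta d$, so there is no contradiction; the corresponding step in Lemma~\ref{lem-reducewell} escapes this because its hypothesis is a strict inequality $>4\beta d$, but Theorem~\ref{thm-bartheconvenient} is stated with $\ge$).

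However, your numerical fix does not close the gap, because the slack you invoke between ``$2$'' and ``$4$'' does not exist. With $\alpha_{\text{Barthe}}=\beta d/(2n)$, Lemma~\ref{cor-barthe2} gives exactly $2/\alpha_{\text{Barthe}}=4n/(\beta d)$; the factor $4$ in the theorem is precisely $2$ (from Lemma~\ref{cor-barthe2}) times $2$ (from $\tau=\beta/2$), with nothing to spare. Replacing $\tau$ by $\beta/2-1/(2d)$ gives $\alpha_{\text{Barthe}}=(\beta d-1)/(2n)$, hence the bound $4n/(\beta d-1)$, which is \emph{larger} than $4n/(\beta d)$, so the conclusion as stated is not reached. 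To actually close the boundary issue you need one of two things: (i) a compactness argument --- for each $\tau<\beta/2$ the hypothesis yields a witnessing pair $(\mu_\tau,S_\tau)$; since the $S_\tau$ range over a finite family, some fixed $S$ works for a sequence $\tau_k\uparrow\beta/2$, and the corresponding sets of valid $\mu$ are nested nonempty compacta whose intersection provides a $\mu$ witnessing $(2\alpha,\beta/2)$-independence on the nose; or (ii) exploit the slack hidden \emph{inside} the proof of Lemma~\ref{cor-barthe2}, where the factor $2$ in $2/\alpha$ came from an arbitrarily small perturbation $\gamma\to\gamma'\in\cK(U)^o$, so that its conclusion actually holds with any $(1+\eps)/\alpha$ in place of $2/\alpha$. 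Either of these absorbs your infinitesimal adjustment to $\tau$ and recovers $4n/(\beta d)$. With that repair your argument is complete and coincides with the paper's.
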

\begin{proof}
The conditions on $V$ and Lemma~\ref{lem:notbarthe} imply that $V$ is $(2\alpha,\beta/2)$-independent. Then, using Lemma~\ref{cor-barthe2}, we get the map $M$ and a set $S$ as required.
\end{proof}


\section{Reduction to the low triple-multiplicity case}\label{sec-reducelowmult}

In this section we prove a lemma showing that, when analyzing a $(3,\delta)$-LCC $V$ over any field $\F$, it is enough to consider codes in which the matchings $M_v, v \in V$ used in the decoding are such that each triple appears in a small number of matchings (otherwise we can find a large low dimensional subset). 

\begin{define}[Triple multiplicity]
We say that a $(3,\delta)$-LCC $V$ with matchings $M_v, v\in V$ satisfy {\em triple multiplicity} at most $r$  if each triple in each $M_v$ appears in at most  $r$ of the matchings.
\end{define}

\begin{lem}\label{lem-reducelowmult}
Let $\F$ be a field, $n \geq (1/\delta)^{\omega(1)}$ and $\beta >0$ a constant. Let $V = (v_1,\ldots,v_n) \in (\F^d)^n$ be a $(3,\delta)$-LCC with matchings $M_v,v \in V$. Suppose that any subset $V' \subset V$ with $|V'| > (\delta^2/36)n$ satisfies  $\dim(V') > n^{1/2-\beta/4}$. Then, there exists a $(3,\delta/24)$-LCC  $\,U \subset V$ with $|U|  \geq (\delta/4)n$ and matchings $M'_v, v \in U$ so that $U$ (with the matchings $M'_v$) has triple multiplicity at most $n^\beta$ and the matchings $M'_v$ are subsets of the corresponding matchings $M_v$.
\end{lem}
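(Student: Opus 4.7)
My plan is a dichotomy on how many vertices' matchings use many heavy triples. Call a triple $T$ \emph{heavy} if it appears in more than $n^\beta$ of the matchings $M_v$, and call $v\in V$ \emph{bad} if $M_v$ contains more than $(\delta/6)n$ heavy triples; let $B$ denote the set of bad vertices and $H$ the set of heavy triples. When $|B|\le(\delta/6)n$, take $U:=V\setminus B$ and, for each $v\in U$, set $M'_v:=\{T\in M_v:T\notin H\text{ and }T\cap B=\emptyset\}$. Since $M_v$ is a matching, at most $|B|$ of its triples meet $B$, giving $|M'_v|\ge \delta n-(\delta/6)n-|B|\ge 2\delta n/3\ge(\delta/24)|U|$ and $|U|\ge(1-\delta/6)n\ge(\delta/4)n$. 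Every triple of $M'_v$ lies in $U$, has multiplicity at most $n^\beta$ in the submatchings (which can only decrease multiplicities), and satisfies $M'_v\subseteq M_v$, giving the desired sub-LCC.

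When instead $|B|>(\delta/6)n$, I derive a contradiction by building a 2-query LDC that violates Theorem~\ref{thm-2LDC}. Using Claim~\ref{cla-regular} I may assume $V$ is in regular form, so that $v\in\span^*\{a,b,c\}$ whenever $\{a,b,c\}\in M_v$. Define $S_T:=\{v\in V:v\in\span(T)\}$; for heavy $T$, $|S_T|>n^\beta$. A standard double count of (heavy triple, bad $v$) incidences gives $|H|\ge\delta^2 n/36$ in this case. Greedily pick heavy triples $T_1,\ldots,T_k$ so as to maximize $|S_{T_i}\setminus\bigcup_{j<i}S_{T_j}|$ at each step, which yields $|\bigcup_i S_{T_i}|\ge \tfrac12 k n^\beta$ for the range of $k$ I need. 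Set $Z:=\bigcup_i T_i$ (so $|Z|\le 3k$) and $Z^*:=V\cap\span(Z)\supseteq\bigcup_i S_{T_i}$, and let $\pi:\R^d\to\R^d/\span(Z)$ be the projection; it kills every vertex of $Z^*$.

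For each $v\in V\setminus Z^*$ let $P(v)$ count triples $T\in M_v$ with $|T\cap Z^*|=1$; under $\pi$, each such triple becomes a pair of vectors in $\pi(V\setminus Z^*)$ spanning $\pi(v)$ (by regular form), and these pairs are pairwise disjoint. Double counting using $|Z^*|\ge \tfrac12 k n^\beta$ and the total triple-incidence degree $\sum_u\deg(u)\ge 3\delta n^2$, together with a Markov bound excluding vertices whose matching triples intersect $Z^*$ twice or thrice too often, produces a set $V_{\text{good}}\subseteq V\setminus Z^*$ of size at least $n/2$ on which $P(v)\ge c\delta k n^\beta$ for a universal constant $c>0$. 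Since $|V_{\text{good}}|>(\delta^2/36)n$, the lemma's hypothesis gives $\dim V_{\text{good}}>n^{1/2-\beta/4}$, so $d'':=\dim\pi(V_{\text{good}})\ge n^{1/2-\beta/4}-3k$. Choosing $d''$ linearly independent vectors of $\pi(V_{\text{good}})$ as decoding positions together with the pair matchings yields a 2-LDC of block length $\le n$, dimension $d''$, and relative matching size $\delta'''\ge c\delta k n^{\beta-1}$. Theorem~\ref{thm-2LDC} forces $d''\le 16(\log n+1)/\delta'''$, whereas setting $k:=C(\log n)n^{1/2-3\beta/4}/\delta$ for a sufficiently large constant $C$---and using $n\ge(1/\delta)^{\omega(1)}$, which gives $1/\delta\le n^{o(1)}$---yields $3k\ll n^{1/2-\beta/4}$ and $16(\log n+1)/\delta''' \ll n^{1/2-\beta/4}/2\le d''$, the contradiction sought. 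The hardest step is the Markov cleanup: ensuring that most killed vertices have triple-incidence degree near the average $\Theta(\delta n)$ and that few triples of any single $M_v$ meet $Z^*$ more than once requires the greedy construction to spread $Z^*$ across matchings, which is what is delicate to justify rigorously.
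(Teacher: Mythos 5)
Your Case~1 (few \emph{bad} vertices) is correct and is a clean way to get the conclusion of the lemma directly: removing $B$ and the heavy triples leaves each surviving matching of size at least $2\delta n/3$, supported in $U$, with multiplicity at most $n^\beta$, and the parameter bounds check out. The trouble is in Case~2, whose overall shape---kill a low-corank subspace spanned by heavy triples, note that many matchings then lose exactly one element per triple and become pair matchings, and invoke Theorem~\ref{thm-2LDC}---matches what the paper does in the proof of Claim~\ref{cla-lowmult}. Two steps in your Case~2, however, are genuine gaps, and both are precisely the places where the paper inserts an extra device.

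First, the greedy covering claim $|\bigcup_i S_{T_i}|\ge\tfrac12 kn^\beta$ does not follow from the rule you describe: the sets $S_T=V\cap\span(T)$ for different heavy triples can overlap arbitrarily, so the greedy can stall with new coverage near zero well before $k$ reaches the size you need, and the hedge ``for the range of $k$ I need'' does not supply a reason it cannot. The paper sidesteps this by an averaging step that exhibits a \emph{single vertex} $v$ lying in $\Omega(\delta^2 n)$ high-multiplicity triples, and then greedily picks heavy triples \emph{through $v$}: since every $M_u$ is a matching and all chosen triples contain $v$, no two of them can lie in the same $M_u$, so the sets $\{u : T_i\in M_u\}$ are pairwise disjoint and the spanned-element count is additive. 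Your argument needs a replacement for this disjointness mechanism.

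Second, the ``Markov cleanup'' requires a lower bound on $\sum_{u\in Z^*}\deg(u)$, where $\deg(u)$ counts triples (with multiplicity) containing $u$. You only invoke the global total $\sum_u\deg(u)\ge 3\delta n^2$, but $Z^*$ is a highly structured set and could consist entirely of vertices of degree zero, in which case $P(v)=0$ for all $v$ and your $2$-LDC has nothing to decode with. The paper handles this by a preliminary refinement: iteratively delete vertices incident to fewer than $\gamma n=\delta^2 n/6$ triples (this discards at most $\gamma n^2$ triples in total, so a sub-LCC of comparable parameters survives), after which every surviving vertex has degree $\ge\gamma n$ and the killed set has the needed degree bound by fiat. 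You flag this part as delicate yourself, but it is not merely a rigor issue: without some such refinement there is no reason the killed vertices participate in any triple at all. The closing arithmetic (choice of $k$, comparison of $d''$ with the $2$-LDC bound) is fine once these two inputs are in place.
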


\begin{proof}
	We first reduce to the situation where every element participates in many triples. Unless mentioned otherwise, we will count triples with multiplicity. Let $0<  \gamma = \delta^2/6$ be a real number. Iteratively delete vertices from $V$ that participate in $< \gamma n$ triples (counted with multiplicity), and the triples they participate in.  Let $B \subseteq V$ be be the subset of deleted elements, and let $V' = V \setminus B$. Since each deleted vertex only gets rid of $\gamma n$ triples, the total number of triples which include some vertex of $B$ is at most $\gamma n^2$.  Thus each element in $V'$ participates in at least $\gamma n$ triples, and at least $(\delta-\gamma) n^2 > (2\delta/3) n^2$ of the triples in $V$ are supported entirely in $V'$. Call this set of triples $T'$.

	\begin{claim} $|V'| > 2\delta n$.
	\end{claim}

	\begin{proof}
	 This is because there must be some $v \in V$ with at least  $(2\delta/3)n$ triples in its matching that still survive in $T'$ -- if this was not the case, we would have $|T'| <  (2\delta/3)n^2$. Since the triples in the matching corresponding to $v$ are disjoint, $|V'| \geq 2\delta n$. 
	\end{proof}

	Let $B' \subset V'$ be the subset of points in $V'$ which have less than  $\delta n/2$ of the triples in their matching supported in $V'$. Let $V'' = V'\setminus B'$. 

	\begin{claim}
	 $|V''| \geq  \delta n$, and $V''$ is a $(3,(\delta/6)(n/|V''|))$-LCC. 
	\end{claim}

	\begin{proof}
	There can be at most $\delta n/3$ elements in $V'$ such that $\delta n/2$ triples in their matchings include an element from $B$ -- if there were more than that, then the total number of triples including a element from $B$ would be greater than $\delta n/3\cdot\delta n/2 \geq \delta^2 n^2/6 \geq \gamma n^2$, which is not possible. Thus, at least $|V'| -\delta n/3$ of the elements in $V'$ have a matching of size at least $\delta n/2$ decoding them, lying wholly within $V'$. Thus $|B'| \leq \delta n/3 $. Hence $|V''| \geq |V'| - |B'| \geq |V'|- \delta n/3 > \delta n$. Moreover, for each $v \in V''$, it has a matching of size at least $\delta n/2 - |B'| \geq\delta n/6$ supported in $V''$. Thus $V''$ is a $(3,(\delta/6)(n/|V''|))$-LCC. Let $T''$ be the union of all the triples in the LCC $V''$. 

	\end{proof}

We will call a triple in $T''$ a {\em high multiplicity} triple if it has multiplicity at least  $n^\beta$  in $T''$ (otherwise we will call it a {\em low multiplicity triple}).

\begin{claim}\label{cla-lowmult}
At least  $(1- \delta/24)|V''|$ of the elements in $V''$ have a matching of size $(\delta/12) |V''|$ of low multiplicity triples decoding them.
\end{claim}
\begin{proof}
Suppose the claim does not hold. That is, at least  $(\delta/24) |V''|$ of the elements in $V''$ have at least half  of their matchings (in $T''$) composed of high multiplicity triples. 

	We now delete all the triples of low multiplicity from $T''$. Since there are at least $(\delta^2/288) |V''|^2$ triples (counting multiplicity) of multiplicity  at least $n^{\beta}$ in the LCC  $V''$, by averaging, there exists $v \in V''$ that participates in at least $(\delta^2/288)|V''|$ triples (counted with multiplicity), and each of the triples has multiplicity at least $n^{\beta}$. Observe that since all these triples contain $v$, no two triples are part of a matching corresponding to the same element. 

	By greedily choosing distinct triples containing $v$ of highest multiplicity, one can pick a set $T^\ast$ of distinct triples of size at most $n^{1/2- \beta/2}$ such that together they span at least $n^{1/2+\beta/2}$ distinct elements  of $V''$ (since $n^{1/2+\beta/2} \leq (\delta^2/288)|V''|$, and each triple of multiplicity $n^{\beta}$ spans at least $n^{\beta}$ distinct elements, and distinct triples sharing an element must span distinct elements). 

	Let $L$ be a linear transformation of co-rank at most $3n^{1/2-\beta/2}$ which maps each element participating in a triple of $T^\ast$ to $0$. Since all the elements spanned by the triples of $T^\ast$ also get mapped to $0$, at least $n^{1/2+\beta/2}$ elements of $V''$ get mapped to $0$ under $L$. Let this set be $V^\ast$. Recall that each element of $V'$ (and hence of $V^\ast$) participates in $\gamma n$ triples which together decode $\gamma n$ distinct elements of $V$. 

	Let $S\subset V$ be the subset of all elements whose matching contains at least $(\gamma/6) n^{1/2+\beta/2}$ triples that each contain some element from $V^\ast$. Since the total number of triples containing some element from $V^\ast$ is at least $|V^\ast| \cdot \gamma n/3$, by a simple counting argument we get that $|S| \geq (\gamma/6) n$.  

	\begin{claim}
	$\dim(S) \leq 2n^{1/2-\beta/3}< n^{1/2-\beta/4}$.
	\end{claim}

	\begin{proof}
	If possible let $\dim(S) >  2n^{1/2-\beta/3}$, then $\dim(L(S)) > 2n^{1/2-\beta/3} - 3n^{1/2-\beta/2}> n^{1/2-\beta/3} $. Moreover, since $L$ sends $V^\ast$ to $0$, all triples containing some element of $V^\ast$ now have at most $2$ nonzero elements, and thus the triples can be replaced by {\it pairs}.  Thus $L(V)$ is a $(2,(\gamma/6)n^{-1/2+\beta/2})$-LDC of size $n$, decoding to linearly independent vectors spanning at least $n^{1/2-\beta/3}$ dimensions.  Using Theorem~\ref{thm-2LDC} (lower bound for 2-query LDCs) we  get that $$n \geq 2^{\frac{\gamma/6 n^{\beta/6}}{16} -1}.$$ 

	Since $n \geq (1/\delta)^{\omega(1)}$, $\gamma= \poly(\delta)$ and $\beta = \Omega(1)$, this is a contradiction (for large enough $n$). 

	\end{proof}

Thus, the set $S$ has size at least $(\gamma/6) n = \delta^2n/36$ and  dimension at most $n^{1/2-\beta/4}$, contradicting the assumption in Lemma~\ref{lem-reducelowmult}. This completes the proof of Claim~\ref{cla-lowmult}
	\end{proof}

Applying Claim~\ref{cla-lowmult}, we see that one can delete all triples of multiplicity greater than $n^{\beta}$ and delete at most $\delta|V''|/24$ elements to get a subset $U$ such that each element of $U$ has a matching of $\delta |U|/24 $ triples decoding to it where the triples are supported in $U$. Thus $U$ is a $(3,\delta/24)$-LCC with $|U| \geq \delta n/4 $, and with all triples of multiplicity at most $n^{\beta}$. This completes the proof of Lemma~\ref{lem-reducelowmult}.

\end{proof}


\section{LCCs over $\R$ can be clustered}\label{sec-reducecluster}
In this section we prove the `clustering step' described in the introduction. 

\begin{define}[Clustering]\label{def-clsutering}
Let $S_1,\ldots,S_m \subset [n]$. We say that a triple $\tau \in {[n] \choose 3}$ is {\em clustered} by the family of sets $S_1,\ldots,S_m$ if there exists $i \in [m]$ so that $|\tau \cap S_i| \geq 2$. If $M$ is a multiset of triples, we say that $M$ is clustered by $S_1,\ldots,S_m$ if every triple in $M$ is clustered.
\end{define}

We prove the clustering result as a sequence of three lemmas. First we state the final clustering lemma that will be used later in the proof of our main result.

\begin{lem}[Final clustering]\label{lem-reducecluster}
Let $n> (1/\delta)^{\omega(1)}$ and let $\beta>0$ be a constant. Let $V = (v_1,\ldots,v_n) \in (\R^d)^n$ be a $(3,\delta)$-LCC so that 
every subset $U \subset V$ of size $|U| \geq (\delta^2/288)n$ has dimension  at least $\max\{ 8\delta^6d, n^{1/2 - \beta/4}\}$. Then, there exists a $(3,\hat \delta)$-LCC $\hat V = (\hat v_1,\ldots,\hat v_{\hat n}) \subset V$ of dimension $\hat d \leq d$, size $\hat n \geq (\delta/10)n$ and $\hat \delta \geq \delta^2/4$ and sets $S_1,\ldots,S_m \subset [\hat n]$ so that
\begin{enumerate}
	\item $|S_i| \leq O(\hat n/\hat \delta^6 \hat d)$ for all $i \in [m]$.
	\item $\Omega(\hat \delta^{19} \hat d^3/\hat n^{1+2\beta}) \leq m \leq O(\hat n^{1+2\beta}/\hat \delta^{10}  \hat d).$
	\item If $\hat M_{\hat v}, \hat v \in \hat V$ are the matchings used to decode $\hat V$, then every triple in each $\hat M_{\hat v}$ is clustered by $S_1,\ldots,S_m$.
\end{enumerate}
\end{lem}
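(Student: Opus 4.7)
The plan has three stages: normalize the code by two reductions, extract a strongly correlated pair from every triple, and cover all such pairs by a small family of bounded caps.  The hypotheses supply both reductions: since $n^{1/2-\beta/4}$ lower-bounds the dimension of every subset of size at least $(\delta^2/288)n$, Lemma~\ref{lem-reducelowmult} will pass us to a sub-LCC $V_1$ of size $\Omega(\delta n)$ with triple multiplicity at most $n^\beta$; and since $8\delta^6 d$ similarly lower-bounds the dimension of every large subset, Lemma~\ref{lem-reducewell} with $\beta' = 2\delta^6$, combined with Observation~\ref{obs-linearmap}, will yield (after an invertible change of basis and normalization) a further sub-LCC of unit vectors satisfying the Barthe-type inequality $\sum_j \langle v_j, w\rangle^2 \leq O(\hat n/(\delta^6 \hat d))$ for all unit $w \in \R^{\hat d}$.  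A final application of Claim~\ref{cla-regular} will put the code into regular form.  The resulting $\hat V$ will have $\hat n = \Omega(\delta n)$, $\hat \delta = \Omega(\delta^2)$, dimension $\hat d \leq d$, triple multiplicity $\leq \hat n^\beta$, and matchings $\hat M_v$ that are sub-matchings of the original $M_v$.

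For the geometric step, I will fix $v \in \hat V$ and a small absolute constant $c < 1/6$.  Well-spreadness bounds $|\{u \in \hat V : |\langle v, u\rangle| \geq c\}| \leq O(\hat n/(c^2 \hat\delta^6 \hat d))$, and the hypothesis $n \geq (1/\delta)^{\omega(1)}$ forces $\hat d > \poly(1/\hat\delta)$, so this count is $o(\hat\delta \hat n)$; discarding the triples of $\hat M_v$ that meet this set will remove only a negligible fraction of each matching.  For a surviving triple $(v_i, v_j, v_k)$, regular form gives $v = a_i v_i + a_j v_j + a_k v_k$ with all $a_\ell \neq 0$; taking the inner product of both sides with each $v_\ell$ and using $|\langle v, v_\ell\rangle| < c$ yields $|a_\ell| \leq c + \sum_{\ell' \neq \ell} |a_{\ell'}| \, |\langle v_{\ell'}, v_\ell\rangle|$.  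If all three pairwise inner products $|\langle v_a, v_b\rangle|$ were below some threshold $c'$, summing these inequalities and using $\sum_\ell |a_\ell| \geq \|v\| = 1$ would force $c' \geq (1-3c)/3$; picking $c$ small thus yields an absolute constant $c' = \Theta(1)$ such that every surviving triple contains a pair $\{v_a, v_b\}$ with $|\langle v_a, v_b\rangle| \geq c'$.

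For the clustering step, I will define the cap $B_t := \{u \in \hat V : |\langle u, t\rangle| \geq c'/2\}$ for any unit direction $t \in \R^{\hat d}$; well-spreadness gives $|B_t| \leq O(\hat n/(\hat\delta^6 \hat d))$, matching the required size bound on $S_i$.  For a correlated pair $(u, u')$ from a surviving triple, let $t_{u,u'} := (u + \sign(\langle u,u'\rangle) u')/\|u + \sign(\langle u,u'\rangle) u'\|$; direct computation gives $|\langle u, t_{u,u'}\rangle|, |\langle u', t_{u,u'}\rangle| \geq \sqrt{(1+c')/2} > c'/2$, so both $u$ and $u'$ lie in $B_{t_{u,u'}}$.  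I will choose centers $t_1, \ldots, t_m$ greedily from the collection $\{t_{u,u'}\}$ so that every correlated pair lies in some $B_{t_i}$, and set $S_i := B_{t_i}$.  The upper bound on $m$ will come from double counting correlated pairs against the per-cluster budget $\binom{|S_i|}{2}$; the lower bound will come from double counting the $\Theta(\hat\delta \hat n^2)$ surviving triples against the number of triples any one $B_t$ can contain, where the triple multiplicity bound $\leq \hat n^\beta$ enters crucially.  The main obstacle I anticipate is the careful parameter bookkeeping: the polynomial exponents in $\hat n, \hat d, \hat\delta$ in the conclusion reflect the combined losses through both reductions, the constants $c$ and $c'$, and the triple-multiplicity factor $\hat n^\beta$, and tracking them through the greedy covering to match the precise formulas in the statement is where most of the technical work lies.
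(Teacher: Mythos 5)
Your preprocessing plan is sound: applying Lemma~\ref{lem-reducelowmult}, Lemma~\ref{lem-reducewell} (with Observation~\ref{obs-linearmap}) and Claim~\ref{cla-regular} to normalize the code is what the paper does, though the paper applies the well-spread reduction before the low-multiplicity one (both orders work with a little bookkeeping). Your derivation that each surviving triple contains a constant-correlated pair is also in the right spirit and parallels the paper's Propositions~\ref{prop:type2corr} and~\ref{prop:type2ball}.

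The genuine gap is in your covering step, specifically the \emph{upper bound on $m$}. You propose to choose cap centers greedily from $\{t_{u,u'}\}$ and bound $m$ by ``double counting correlated pairs against the per-cluster budget $\binom{|S_i|}{2}$.'' That double count would only control $m$ if every chosen cap is guaranteed to cover roughly $\binom{|S_i|}{2}$ \emph{previously uncovered} pairs. Greedy set cover gives no such guarantee: after the first few caps the marginal gains can collapse, and without an a priori bound on the \emph{optimal} cover, greedy only gives $m = O(\text{OPT}\cdot\log)$ for an unknown OPT. The paper handles this by an entirely different mechanism: the basic clustering step (Lemma~\ref{lem-clusterbasic}) passes to a \emph{high-minimum-degree} subgraph of the correlated-pair graph (Proposition~\ref{prop:degree}), which guarantees the extracted cluster $S$ covers at least $\Omega(\delta^4 n^{2-\beta}/t)$ triples from the current residual multiset $\bar M$; since the $T_i$'s are disjoint and $|M| \le \delta n^2$, the number of iterations is bounded. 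Your sketch omits this extraction argument, which is the crux of why the clusters don't proliferate.

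A second, smaller gap is in your \emph{lower bound on $m$}. You say the triple-multiplicity bound $\le \hat n^\beta$ ``enters crucially,'' but multiplicity alone is not enough: a fixed pair $(v_i, v_j)$ could a priori lie in $\Omega(n)$ \emph{distinct} triples, each with multiplicity up to $n^\beta$. The paper's Proposition~\ref{prop-tripleupperbound} separately bounds the number of distinct triples through any pair by $O(t)$, using the well-spread condition via Claim~\ref{cla:planecorr} (correlation with the plane spanned by the pair). You would need that geometric bound in addition to the multiplicity bound to make the double count close.
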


We will prove this lemma using the following lemma, which adds conditions on the given code.
\begin{lem}[Intermediate Clustering]\label{lem-clusterintermediate}
Let $n \geq (1/\delta)^{\omega(1)}$ and $\beta>0$ a constant. Let $V =(v_1,\ldots,v_n) \in (\R^d)^n$ be a  $(3,\delta)$-LCC with triple multiplicity at most $n^\beta$ and so that for each  unit vector $w \in \R^d$
	\begin{equation*}
	\sum_{j=1}^n  \ip{v_j}{w}^2  \leq \frac{n}{\delta^6d}.
	\end{equation*}
Let $ t = \frac{n}{\delta^6d}$ and suppose that $d > \frac{10^8 \cdot 200 }{\delta^8}$. Then, there exist  $m$ subsets $S_1,\ldots,S_m \subset V$ such that
	\begin{enumerate}
		\item $|S_i| \leq O(t)$ for all $i \in [m].$
		\item $\Omega(\delta n^{2-\beta}/t^3) \leq m \leq O(t\cdot n^\beta/\delta^4)$.
		\item If $M = \cup_{v \in V} M_v$ is the multiset of all triples in all matchings used to decode $V$, then there are at most $\delta^2 n^2/100$ triples in $M$ that are not clustered by $S_1,\ldots,S_m$
	\end{enumerate}
\end{lem}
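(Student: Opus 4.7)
My plan is to first prune triples with non-trivial projection onto their decoder vector, then show every surviving triple harbors a strongly correlated pair, and finally construct the sets $S_i$ as correlation-graph neighborhoods around suitably chosen heavy vertices.

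\emph{Pruning.} For each $v\in V$ set $B_v=\{j:\langle v,v_j\rangle^2\geq c^2\}$ with $c$ a small absolute constant (think $c=1/100$). The well-spread hypothesis gives $|B_v|\leq t/c^2=O(t)$, so removing from $M_v$ every triple that meets $B_v$ deletes at most $3|B_v|=O(t)$ triples and, using $t=n/(\delta^6 d)\ll\delta n$ which follows from the lower bound on $d$, leaves a sub-matching $M'_v$ of size $\geq (\delta/2)n$ whose every triple has three vectors essentially orthogonal to $v$.

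\emph{Correlated pair per surviving triple.} Fix $(v_i,v_j,v_k)\in M'_v$, let $G_3$ be its $3\times 3$ Gram matrix and $c_\ell=\langle v,v_\ell\rangle$. Writing $v=\alpha v_i+\beta v_j+\gamma v_k$ and taking inner products with each $v_\ell$ gives $G_3(\alpha,\beta,\gamma)^\top=(c_i,c_j,c_k)^\top$; substituting back into $\|v\|^2=1$ yields the identity $c^\top G_3^{-1}c=1$, while $\|c\|^2\leq 3c^2$. Together these force $\lambda_{\min}(G_3)\leq 3c^2$, so $(v_i,v_j,v_k)$ lie within an $O(c)$-angular tube of a $2$-dimensional subspace. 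A pigeonhole on three unit vectors in $\R^2$ modulo $\pi$ produces a pair with $|\cos|$-value at least $1/2$, and absorbing the $O(c)$ perturbation error yields a pair among $\{v_i,v_j,v_k\}$ with $|\langle v_a,v_b\rangle|\geq 1/3$. Fix one such pair as the \emph{canonical correlated pair} of the triple.

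\emph{Clustering.} Let $\Gamma$ be the graph on $V$ whose edges are pairs of inner product at least $1/3$ in absolute value; well-spread gives $\deg_\Gamma(v)\leq 9t$. For $v\in V$ define $D(v)$ to be the number of matching-triples whose canonical pair has $v$ as an endpoint, so that $\delta n^2\leq \sum_v D(v)\leq 2|M|$. Pick a threshold $K=\Theta(\delta^2 n/n^\beta)$, set $V^\star=\{v:D(v)\geq K\}$, and take the clusters to be
\[
S_v \;:=\; N_\Gamma[v] \;=\; \{v\}\cup\{u:|\langle u,v\rangle|\geq 1/3\}, \qquad v\in V^\star.
\]
Then $|S_v|\leq 9t+1=O(t)$ (condition 1); every canonical pair with a heavy endpoint sits inside the corresponding $S_v$, while the canonical pairs with two light endpoints contribute at most $\sum_{v\notin V^\star}D(v)\leq nK\leq \delta^2 n^2/100$ triples (condition 3, with the $n^\beta$ factor in $K$ forced by converting distinct-triple counts to matching-triple counts via the multiplicity bound). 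The upper bound $m=|V^\star|\leq \sum_v D(v)/K\leq 2|M|/K=O(tn^\beta/\delta^4)$ of condition 2 is plain averaging; the matching lower bound is a dual pigeonhole---each cluster of size $O(t)$ contains $O(t^2)$ pairs, each pair extends to $O(n/t)$ distinct triples (applying well-spread to the $3$-space $\mathrm{span}(v_a,v_b,v_k)$ in which the decoder vector must lie), each distinct triple has multiplicity $\leq n^\beta$, and covering the $\Omega(\delta n^{2-\beta})$ distinct good triples forces $m=\Omega(\delta n^{2-\beta}/t^3)$.

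\emph{Main obstacle.} The subtlest step is the perturbation argument of the second stage: the bound $\lambda_{\min}(G_3)\leq 3c^2$ must be turned into a robust lower bound of $1/3$ on a pairwise inner product, which requires projecting $(v_i,v_j,v_k)$ onto the dominant $2$-plane of $G_3$ and verifying that the projection error does not eat through the $1/2$ pigeonhole gap---pinning down $c$ small enough while still keeping $t/c^2 \ll \delta n$ is the delicate calibration. Secondarily, the threshold $K$ in the third stage must be balanced against three competing demands simultaneously (cluster size, cluster count, uncovered-triple count), and the exponent $n^\beta$ in both bounds on $m$ is exactly what the triple-multiplicity hypothesis enforces.
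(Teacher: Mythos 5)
Your high-level strategy (prune triples correlated with their decoder, show each surviving triple contains a strongly correlated pair, build clusters from the correlation graph) matches the paper's, and your Gram-matrix argument for the correlated pair is a clean, unified replacement for the paper's Type~A/Type~B case split. However, there is a genuine quantitative gap in your cluster construction: the upper bound on $m$ does not come out. Your threshold $K$ is forced to satisfy $nK \lesssim \delta^2 n^2/100$ (else condition 3 fails), so $K = O(\delta^2 n)$, and your averaging bound then gives only $m \leq 2|M|/K = O(n^{1+\beta}/\delta^2)$ (or worse with your extra $n^\beta$ in $K$). The required bound is $O(t n^\beta/\delta^4) = O\bigl(n^{1+\beta}/(\delta^{10} d)\bigr)$, which is smaller by a factor of roughly $\delta^8 d \gg 1$ under the hypothesis $d > 10^8\cdot 200/\delta^8$. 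The tension is structural: each of your clusters only ``certifies'' the $K$ triples incident to its heavy center, while the required bound needs each cluster to account for $\Omega(\delta^4 n^{2-\beta}/t)$ triples. The paper resolves this by extracting clusters \emph{iteratively}: each round passes to an induced subgraph of minimum degree $\Omega(\delta^2 n)$ (Proposition~\ref{prop:degree}), picks a vertex $w$ there, and takes $S$ to be a $2$-neighborhood of $w$, which covers all $\Omega(\gamma^2 n^{2-\beta}/t)$ triples incident to the $\Omega(\gamma n^{1-\beta}/t)$ distinct neighbors of $w$; the disjointness of the removed triple sets then yields the upper bound on $m$. This upper bound is not cosmetic --- it is used later in Claim~\ref{cla-probbot} to lower-bound the fraction of heavy clusters --- so the gap must be closed.

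A secondary issue: in your lower bound on $m$ you assert that each pair extends to $O(n/t)$ distinct triples. The correct bound (Proposition~\ref{prop-tripleupperbound}) is $O(t)$ distinct triples per pair, proved by showing that for each triple $(v_i,v_j,v_k)\in M_v$ one of $v,v_k$ has correlation $\Omega(1)$ with the plane $\span\{v_i,v_j\}$ and invoking the well-spread bound for planes. Your figure $O(n/t)$ is unjustified, and even granting it, your own arithmetic gives $m \geq \Omega(\delta n^{1-\beta}/t)$ rather than the claimed $\Omega(\delta n^{2-\beta}/t^3)$; the stated bound only follows from the $O(t)$-per-pair count.
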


To prove the intermediate clustering lemma we first prove a basic clustering lemma.
\begin{lem}[Basic Clustering]\label{lem-clusterbasic}
Let $n,t,\beta,\delta$ and $V \in (\R^d)^n$ be as in Lemma~\ref{lem-clusterintermediate} and let $M$ be the multiset of triples obtained by taking the union of all $M_v, v \in V$. Let $\bar M \subset M$ be of size at least $\delta^2 n^2/100$ and  suppose that $d > \frac{10^8 \cdot 200 }{\delta^8}$. Then there exists a subset $S \subset V$ with $|S| \leq O(t)$ and a subset $T \subset \bar M$ with $|T| \geq \Omega(\delta^4 n^{2-\beta} /t) $ such that each triple in $T$ contains at least two elements from $S$.
\end{lem}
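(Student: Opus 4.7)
The plan has three stages: reduce to ``Type B'' triples (those where the decoded vertex is nearly orthogonal to each of the three triple-members), extract a pair with constant-magnitude inner product from each Type B triple, and identify a small cluster covering the required number of triples.

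For the first stage, fix a small absolute constant $\epsilon$ (say $\epsilon=1/10$), and call a triple $(v_i,v_j,v_k)\in M_v$ Type A if $\max_{a\in\{i,j,k\}}|\langle v,v_a\rangle|\ge \epsilon$, and Type B otherwise. The well-spread inequality $\sum_j\langle v_j,w\rangle^2\le t$ applied with $w=v$ gives at most $t/\epsilon^2$ vectors of $V$ that are $\epsilon$-correlated with $v$, and since $M_v$ is a matching this yields at most $3t/\epsilon^2$ Type A triples per $v$. Summing over $v\in V$ and invoking the hypothesis $d>10^8\cdot 200/\delta^8$ brings the total Type A count below $|\bar M|/2$, so the Type B sub-multiset $\bar M_B\subseteq\bar M$ satisfies $|\bar M_B|\ge \delta^2 n^2/200$.

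For the second stage, inside any Type B triple write $v=\lambda_i v_i+\lambda_j v_j+\lambda_k v_k$; from $\langle v,v\rangle=1$ and $|\langle v,v_a\rangle|\le\epsilon$ one gets $|\lambda_i|+|\lambda_j|+|\lambda_k|\ge 1/\epsilon$, and then the linear system obtained by taking inner products with each $v_a$ forces the Gram determinant $1-(x_{ij}^2+x_{ik}^2+x_{jk}^2)+2x_{ij}x_{ik}x_{jk}$ (with $x_{ab}=\langle v_a,v_b\rangle$) to be $O(\epsilon^2)$. A short calculation then shows that some $|x_{ab}|\ge c_0$ for a constant $c_0$ slightly below $1/2$, since otherwise the determinant is bounded away from $0$. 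Assign each Type B triple $\tau$ one such correlated pair $\phi(\tau)\in E(G)$, where $G$ is the graph on $V$ of pairs $(u,w)$ with $|\langle u,w\rangle|\ge c_0$. Well-spread bounds every degree in $G$ by $t/c_0^2$, hence $|E(G)|=O(nt)$, and the weights $f(e)=|\phi^{-1}(e)|$ satisfy $\sum_e f(e)\ge |\bar M_B|\ge \delta^2 n^2/200$.

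For the third and most delicate stage, for each $v\in V$ set $S_v=\{v\}\cup N_G(v)$, so $|S_v|\le 1+t/c_0^2=O(t)$. A naive pigeonhole on $\sum_v\sum_{e\ni v}f(e)=2\sum_e f(e)\ge \delta^2 n^2/100$ produces some $v^*$ with $\sum_{e\ni v^*}f(e)\ge \delta^2 n/100$, and every triple $\tau$ with $v^*\in \phi(\tau)$ has both endpoints of its correlated pair in $S_{v^*}$. This falls short of the target $|T|\ge \Omega(\delta^4 n^{2-\beta}/t)$ by roughly a factor of $n^{1/2-2\beta}$ in the regime where $\beta$ is small, and closing this gap is the main technical obstacle of the proof. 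To do so one must also count triples whose assigned correlated pair $(u,w)$ lies entirely inside $N_G(v^*)$ but does not contain $v^*$, using the bound $|N_G(u)\cap N_G(w)|\le t/c_0^4$ (immediate from applying well-spread with $w=u$ and noting $\langle v,w\rangle^2\le 1$ inside the sum $\sum_v\langle v,u\rangle^2\langle v,w\rangle^2\le t$), together with the triple-multiplicity hypothesis $n^\beta$, which forces $f(e)\le n\cdot n^\beta$ and so prevents adversarial concentration of $f$-mass on a handful of pairs. Setting up the right weighted double-count of $\sum_v\sum_{\{u_1,u_2\}\subseteq S_v}f(u_1,u_2)$, splitting triples by which of their three pairs is the correlated one, and extracting the maximizer then yields the required $v^*$; taking $S=S_{v^*}$ and $T=\{\tau\in\bar M:|\tau\cap S|\ge 2\}$ completes the proof.
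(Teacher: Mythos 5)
Your stages~1 and~2 are essentially sound, and stage~2 is actually a slicker route than the paper's Proposition~\ref{prop:type2corr}: the Gram-determinant estimate $\det G = O(\epsilon^2)$ combined with the factorization $1-3c^2-2c^3 = (1-2c)(1+c)^2$ does force a pair in every $M^\ast$-triple to have correlation at least $1/2 - O(\epsilon^2)$, which is sharp (three unit vectors in a plane pairwise at $120^\circ$ attain $1/2$). But $1/2$ is precisely too weak a threshold for the geometry you need in stage~3, and that is where the proposal breaks down.

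The paper further partitions the $M^\ast$-triples into Type A (some pair with correlation $\ge 9/10$) and Type B (all pairs $< 9/10$, but some pair $\ge 1/100$ via Proposition~\ref{prop:type2corr}), and treats the two cases by genuinely different arguments. For Type A, the cluster $S$ is the \emph{two-step} neighborhood of a high-degree vertex $w$ in the correlation graph; this is $O(t)$ only because $9/10$ is so close to $1$ that two hops still give correlation $\ge 3/5$ with $w$ (Claim~\ref{cla:vectorcorr} then gives $|S|=O(t)$). With your $c_0 \approx 1/2$, a single hop is already distance $\sqrt{2-2c_0}\approx 1$ and two hops give distance up to $2$, so the two-step neighborhood is not confined near $\pm w$ and can be all of $V$; your 1-neighborhood $S_v=\{v\}\cup N_G(v)$ is small, but then you can only charge triples whose assigned pair is incident to $v$, which is the naive $\delta^2 n$ you already observe is off by $\sim n^{1/2}$. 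For Type B, where no two-step argument is possible, the paper instead finds a single highly-used pair $(v,v')$ and shows (Proposition~\ref{prop:type2ball}) that the third vertices of the $\Omega(\gamma n^{1-\beta}/t)$ distinct triples through $(v,v')$ concentrate in a ball of \emph{radius} $5\cdot 10^{-4}$; smallness of $S$ then comes from the tiny ball radius, not from the correlation threshold. Your proposal has no analogue of this ball-packing step.

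Two concrete problems with your double-count in stage~3. First, $\sum_v\sum_{\{u_1,u_2\}\subseteq S_v} f(u_1,u_2) = \sum_e f(e)\bigl(2+|N_G(u_1)\cap N_G(u_2)|\bigr)$, and your bound $|N_G(u)\cap N_G(w)|\le t/c_0^4$ is an \emph{upper} bound, going the wrong direction: with no lower bound on common neighborhoods there is no gain over the naive pigeonhole. Second, the edge-weight bound you cite, $f(e)\le n\cdot n^\beta$, is far too weak. The paper's Proposition~\ref{prop-tripleupperbound} (which uses the well-spread condition through Claim~\ref{cla:planecorr}) shows only $O(t)$ \emph{distinct} triples pass through any fixed pair, hence $f(e)\le O(t n^\beta)$; this is exactly what turns a weighted degree of $\gamma n$ into $\Omega(\gamma n^{1-\beta}/t)$ distinct neighbors, the count you need. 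Replacing it with $n^{1+\beta}$ gives $\gamma n / n^{1+\beta} = \gamma n^{-\beta}$, which is less than one. Without the Type A/Type B split (or an equivalent geometric device) and without Proposition~\ref{prop-tripleupperbound}, the outlined double-count does not close the gap you identify.
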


The proofs of the two clustering lemmas (Basic Clustering and Intermediate Clustering), are given below after some preliminaries. First, we show how they are used to prove Lemma~\ref{lem-reducecluster}.

\begin{proof}[Proof of Lemma~\ref{lem-reducecluster}]
	At a high level, the proof follows by first applying Lemma~\ref{lem-reducewell} to get the `well spread' condition on the points in a large sub-LCC $V'$ of $V$. Then, we use Lemma~\ref{lem-reducelowmult} on $V'$ to get a subcode $V''$ with low triple multiplicity (this does not ruin the `well spread' condition by much).  Finally, we apply Lemma~\ref{lem-clusterintermediate} on $V''$ to get clustering for almost all triples. The only reason why one of these steps could fail is if we found a large low dimensional subset in $V$ (which will contradict our assumptions). A final refinement step, using Claim~\ref{cla-refinelarge} shows the existence of a subcode $\hat V$ as required. The details follow.

	\paragraph{Reducing to the well-spread case:} We apply Lemma~\ref{lem-reducewell} on $V$, with $\beta = 2\delta^6$,  to obtain a subset $V'$ of size $n' \geq (1 - \delta/2)n$ so that $V'$ is a $(3,\delta'=\delta/2)$-LCC and so that for each unit vector $w \in \R^d$ we have
	\begin{equation*}
	\sum_{v' \in V'}  \ip{v'}{w}^2  \leq \frac{n}{2\delta^6d}.
	\end{equation*}
	If we cannot apply Lemma~\ref{lem-reducewell}, it means that there is a subset $U$ in $V$ of size $|U| \geq (\delta/4)n$ and dimension at most $8 \delta^6 d$, which would contradict our assumptions.

	\paragraph{Reducing to low  triple multiplicity:}
	We now apply Lemma~\ref{lem-reducelowmult} on the LCC $V'$ to get a $(3,\delta/48)$-LCC $V'' \subset V'$ of size $n'' \geq (\delta/8)n$ and with triple multiplicity at most $(n')^{\beta} \leq (n'')^{2\beta}$. If we cannot apply the lemma, it means that there is a subset $U \subset V'$ of size $|U| \geq (\delta'^2/36)n' \geq (\delta^2/288)n$ and dimension $\dim(U) \leq (n')^{1/2 - \beta/4} \leq n^{1/2 - \beta/4},$ which would contradict our assumptions. Let $d'' = \dim(V'')$ and $\delta'' = \delta^2/2$. We can think of $V''$ as a $(3,\delta'')$-LCC over $\R^{d''}$ in which the `well spread condition' above can be written as
	\begin{equation*}
	\sum_{v'' \in V''}  \ip{v''}{w}^2  \leq \frac{n}{2\delta^6d} \leq \frac{n''}{\delta''^6d''},
	\end{equation*}
	for all unit vectors $w \in \R^{d''}$ (we took $\delta'' = \delta^2/2$ to compensate for the drop in $n''$ in the above inequality). Notice that moving from $\R^d$ to $\R^{d''}$ is not a problem since we can orthogonally project all vectors on the span of $V''$ and maintain all inner products with all unit vectors.

	\paragraph{Clustering:} We  can now apply Lemma~\ref{lem-clusterintermediate} on $V''$ to find sets $S_1,\ldots,S_m$ that cluster all but $(\delta''^2/100)n''^2$ of the triples in the decoding matchings of $V''$. With $|S_i| \leq O(n''/\delta''^6 d'')$ for all $i \in [m]$ and (using $t = n''/\delta''^6 d''$) $$\Omega(\delta''^{19} d''^3/n''^{1+2\beta}) \leq m \leq O(n''^{1+2\beta}/\delta''^{10} \cdot d'').$$ If we cannot apply the lemma it means that $d'' \leq (1/\delta'')^{O(1)}$, which would contradict our assumptions on $V$ (since it would have a subset $V''$ of size $n'' \geq (\delta/8)n$ and dimension $(1/\delta)^{O(1)} < n^{1/4}$).

	\paragraph{Refinement:} To complete the proof, observe that, there are at least $(1 - \delta''/10)n''$ points in $V''$ that have at least half of their matchings clustered by $S_1,\ldots,S_m$. Hence, we can use Claim~\ref{cla-refinelarge} to find a $(3,\hat \delta)$-LCC $\hat V \subset V''$ of size $\hat n \geq (1 - \delta''/10)n'' \geq (\delta/10)n$ with $\hat \delta \geq \delta''/2 \geq \delta^2/4$ so that the sets $S_1,\ldots,S_m$ (restricted to indices in $\hat V$) cluster all the triples in the matchings of $\hat V$. Notice that, since $\hat d = d''$, $\hat \delta = \theta(\delta'')$ and $\hat n = \theta(n'')$, the bounds on the sizes of the sets $S_i$ and on $m$ still hold (the difference in constants will be swallowed by the big `O'). This completes the proof of Lemma~\ref{lem-reducecluster}.
\end{proof}

\subsection{Preliminaries  for the proofs of the clustering lemmas}

We denote by $\|v\|$  the $\ell_2$ norm of a vector $v$. Notice that for two unit vectors $u$ and $v$, $\|u-v\|^2 = 2 -2\ip uv$. We denote the {\em correlation} between two unit vector $v,u$ as $|\ip{v}{u}|$.

Let $V$ be as in Lemma~\ref{lem-clusterintermediate} with matchings $M_v, v\in V$.  The conditions of Lemma~\ref{lem-clusterintermediate} (which we will  assume to hold for the rest of this section) tell us  that for all unit vectors $u \in \R^d$ we have
\begin{equation}
\sum_{j=1}^n  \ip{v_j}{u}^2  \leq \frac{n}{\delta^6 d}=t
\end{equation}

This  gives the following useful claim:
\begin{claim}\label{cla:vectorcorr}
For every unit vector $u \in \R^d$ we have $$|\{v \in V \mid |\ip{v}{u} | \geq \alpha \}| \leq t/\alpha^2.$$ 
\end{claim}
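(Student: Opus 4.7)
The plan is to use a direct Markov-type inequality on the quadratic sum hypothesis stated just above the claim. Fix a unit vector $u \in \R^d$, and let $S = \{v \in V : |\ip{v}{u}| \geq \alpha\}$ denote the set we wish to bound.

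First I would observe that by the definition of $S$, each $v \in S$ contributes at least $\alpha^2$ to the sum $\sum_{j=1}^n \ip{v_j}{u}^2$, since $\ip{v}{u}^2 \geq \alpha^2$. Thus
\[
|S|\cdot \alpha^2 \;\leq\; \sum_{v \in S} \ip{v}{u}^2 \;\leq\; \sum_{j=1}^n \ip{v_j}{u}^2 \;\leq\; t,
\]
where the middle inequality uses that $S$ is a sub-list of $V$ and all summands are nonnegative, and the last inequality is exactly the hypothesis of Lemma~\ref{lem-clusterintermediate} (restated as the displayed inequality just before the claim). Dividing by $\alpha^2$ yields the desired bound $|S| \leq t/\alpha^2$.

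There is no real obstacle here — the claim is a one-line consequence of the ``well-spread'' bound combined with a Chebyshev/Markov-style truncation. The only subtlety to flag is the convention that $V$ is a list (multiset) rather than a set, so $S$ should also be understood as a sub-list with multiplicities counted, which is consistent with the conventions set in Section~\ref{sec-prelim}. With that understood, the inequality is immediate.
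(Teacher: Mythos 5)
Your proof is correct and is exactly the immediate Markov-type argument the paper has in mind; the paper states this claim without a written proof precisely because it follows in one line from the well-spread hypothesis, as you show.
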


We can also bound the number of points in $V$ that correlate with a given plane:
\begin{claim}\label{cla:planecorr}
Let $P \subset \R^d$ be a two dimensional subspace. We have $$|\{v \in V \mid |\ip{v}{u} | \geq \alpha \textrm{ for some unit vector } u \in P \}| \leq (80/\alpha^3)\cdot t$$ 
\end{claim}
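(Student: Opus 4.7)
The plan is to reduce Claim~\ref{cla:planecorr} to the one-dimensional case already handled by Claim~\ref{cla:vectorcorr} via a net argument on the unit circle of $P$. First I would observe that if $v \in V$ is correlated with some unit vector $u \in P$, then writing $v_P$ for the orthogonal projection of $v$ onto $P$, we have $\ip{v}{u} = \ip{v_P}{u}$, so $\|v_P\| \geq |\ip{v}{u}| \geq \alpha$. In particular $\hat v_P := v_P/\|v_P\|$ is a well-defined unit vector in $P$, and $|\ip{v}{\hat v_P}| = \|v_P\| \geq \alpha$. Thus every $v$ in the set we want to bound is correlated (with correlation at least $\alpha$) with \emph{some} unit vector of $P$, namely its own normalized projection.

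Next I would build an $\epsilon$-net $N_\epsilon \subset P$ of unit vectors with the property that every unit vector of $P$ lies within Euclidean distance $\epsilon$ of some element of $N_\epsilon$. Since the set of unit vectors in $P$ is a circle of circumference $2\pi$ and chord length is bounded above by arc length, we can take $|N_\epsilon| \leq \lceil 2\pi/\epsilon \rceil$. For each $v$ in the bad set, pick $u_i \in N_\epsilon$ with $\|\hat v_P - u_i\| \leq \epsilon$; then by Cauchy--Schwarz
\[
|\ip{v}{u_i}| \;\geq\; |\ip{v}{\hat v_P}| - \|v\|\cdot\|\hat v_P - u_i\| \;\geq\; \alpha - \epsilon.
\]
Thus every $v$ we are counting is $(\alpha-\epsilon)$-correlated with at least one element of $N_\epsilon$.

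Now I would apply Claim~\ref{cla:vectorcorr} to each $u_i \in N_\epsilon$: the number of $v \in V$ with $|\ip{v}{u_i}| \geq \alpha - \epsilon$ is at most $t/(\alpha-\epsilon)^2$. Union-bounding over the net gives a total of at most $|N_\epsilon| \cdot t/(\alpha-\epsilon)^2 \leq \lceil 2\pi/\epsilon \rceil \cdot t/(\alpha-\epsilon)^2$. Finally I would optimize, e.g.\ taking $\epsilon = \alpha/2$, which yields a bound of roughly $(4\pi/\alpha) \cdot (4t/\alpha^2) \leq 80\,t/\alpha^3$ as claimed. There is no real obstacle here: the only mildly delicate point is matching the constant $80$, which requires being slightly careful with the size of the net (using $|N_{\alpha/2}| \leq \lceil 4\pi/\alpha\rceil$ and keeping track of the rounding), but this is a straightforward calculation and any reasonable choice of $\epsilon$ proportional to $\alpha$ gives the stated form $O(t/\alpha^3)$.
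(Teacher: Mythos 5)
Your proof is correct and takes essentially the same route as the paper: both cover the unit circle of $P$ with a net of $O(1/\alpha)$ balls of radius $\alpha/2$, transfer the correlation from $u(v)$ (or $\hat v_P$) to the nearby net point via the triangle inequality and Cauchy--Schwarz, and then invoke Claim~\ref{cla:vectorcorr}. The only cosmetic difference is that the paper closes with a pigeonhole argument (one ball must catch an $\alpha/20$-fraction of the $u(v)$'s) whereas you close with a union bound over net points; these are algebraically identical.
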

\begin{proof}
Let $K = |\{v \in V \mid |\ip{v}{u} | \geq \alpha \textrm{ for some unit vector } u \in P \}|$. For each such $v \in V$ let $u(v) \in P$ be a unit vector with $|\ip{v}{u(v)}| \geq \alpha$. Now, cover the unit circle in $P$ with at most $20/\alpha$ balls\footnote{We consider balls in $\R^d$} of radius at most  $\alpha/2$. By a pigeon hole argument, one of these balls must contain at least $\alpha K/20$ of the points $u(v)$. Now, the center of this ball must have correlation at least $\alpha/2$ with all the $\alpha K/20$ corresponding $v$'s. Applying Claim~\ref{cla:vectorcorr} we get that $K \leq (80/\alpha^3)t$.
\end{proof}

For every unit vector $u \in \R^d$, let $$\Cor(u) = \{v \in V \mid |\ip uv| \geq 1/10^4\}.$$  For every $v \in V$, let $M_v^{\ast} \subseteq M_v$ be defined as $$ M_v^\ast = \{(v_i, v_j, v_k) \in M_v \mid v_i, v_j, v_k \in V \setminus \Cor(v)\} $$ be the subset of the triples decoding $v$ where each vector in each triple has low correlation  with $v$. Intuitively, such triples must be close to a two dimensional plane and hence `almost' dependent.

The following is an immediate corollary of Claim~\ref{cla:vectorcorr}. 
\begin{claim}\label{cla:sizemvcorr}
For every $v \in V$, $|M_v^{\ast}| \geq |M_v| -  10^8t \geq \delta n - 10^8 t$. 
\end{claim}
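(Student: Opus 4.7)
The plan is to bound the number of triples of $M_v$ that must be discarded to form $M_v^*$. A triple $(v_i,v_j,v_k) \in M_v$ fails to lie in $M_v^*$ precisely when at least one of its three elements lies in $\Cor(v)$. So the entire argument reduces to (i) bounding $|\Cor(v)|$ using the well-spread hypothesis, and (ii) exploiting the fact that $M_v$ is a matching to translate an upper bound on $|\Cor(v)|$ into an upper bound on the number of bad triples.

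For step (i), I would apply Claim~\ref{cla:vectorcorr} to the unit vector $u = v$ (recall that throughout this section the elements of $V$ are unit vectors) with correlation threshold $\alpha = 1/10^4$. The conclusion is immediately
\[
|\Cor(v)| \;\leq\; \frac{t}{\alpha^2} \;=\; 10^8\, t.
\]
For step (ii), since $M_v$ is a $3$-matching in $[n]$, its triples are pairwise disjoint, so each element of $V$ appears in at most one triple of $M_v$. Therefore the number of triples of $M_v$ that contain at least one member of $\Cor(v)$ is at most $|\Cor(v)| \leq 10^8 t$. Discarding these yields $|M_v^*| \geq |M_v| - 10^8 t$, and then the lower bound $|M_v| \geq \delta n$ guaranteed by the definition of a $(3,\delta)$-LCC gives the second inequality.

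There is no real obstacle: the heavy lifting (the well-spread condition producing the bound on correlations) was already done in Claim~\ref{cla:vectorcorr}, and the matching structure cleanly converts that bound on a set of ``bad'' vectors into a bound on the number of bad triples. The only thing to be careful about is to apply the correlation bound to $v$ itself (a unit vector in $V$) rather than to some auxiliary direction, and to use disjointness of the triples — i.e.\ the fact that it is genuinely a matching, not an arbitrary collection of triples — so that the bound on bad vectors is not inflated by a factor of $3$ or worse.
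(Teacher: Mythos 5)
Your proof is correct and matches the paper's intended argument (the paper merely states the claim is an immediate corollary of Claim~\ref{cla:vectorcorr}, and your two steps --- bounding $|\Cor(v)| \leq 10^8 t$ via Claim~\ref{cla:vectorcorr} with $u=v$, $\alpha=1/10^4$, then using disjointness of the matching to convert the bad-vertex bound into a bad-triple bound --- are exactly the intended reasoning). The one subtle point you correctly flag, that the matching structure prevents the loss of an extra factor of $3$, is indeed the reason the constant is $10^8$ and not $3\cdot 10^8$.
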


Let $M^{\ast}$ be the (multiset) union of all triples in $M_v^{\ast}$ for all $v\in V$. By  Claim~\ref{cla:sizemvcorr}, $M^\ast$ has size at  least $\delta n^2 - 10^8 tn$.

The following  proposition bounds the number of triples in $M^\ast$ containing a fixed pair of vertices.
\begin{prop}\label{prop-tripleupperbound}
For all $i \neq j \in [n]$, there are at most $O(t n^\beta)$ triples (counting multiplicities) in $M^\ast$ containing the pair $(v_i,v_j)$.
\end{prop}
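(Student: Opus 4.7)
The plan is to use the triple multiplicity hypothesis on $V$ to reduce to counting distinct $v_k$, and then exploit the near-orthogonality of $v$ with $v_i, v_j, v_k$ (coming from the definition of $M_v^{\ast}$) to show $v_k$ is forced to cluster near the 2-plane $P = \span\{v_i, v_j\}$. Since each triple appears in at most $n^\beta$ matchings $M_v$, each triple occurs with multiplicity at most $n^\beta$ in $M^{\ast}$, so it suffices to show that the set of indices $k$ for which $\{v_i, v_j, v_k\} \in M_v^{\ast}$ for some $v \in V$ has size $O(t)$.

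Fix such a $k$ together with a witnessing $v$. The definition of $M_v^{\ast}$ gives $|\ip{v}{v_\ell}| \le 10^{-4}$ for $\ell \in \{i,j,k\}$, and linear dependence gives $v = \alpha v_i + \beta v_j + \gamma v_k$ for some scalars. I plan to deduce that $v_k$ has correlation at least $1/2$ with a unit vector in (essentially) $P$, after which Claim~\ref{cla:planecorr} (or Claim~\ref{cla:vectorcorr} in the degenerate subcase) caps the number of admissible $v_k$ by $O(t)$.

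In the main case $|\ip{v_i}{v_j}| \le 1 - 1/10$, decompose $v = v_P + v_\perp$ with $v_P \in P$ and $v_\perp \in P^\perp$. Writing $v_P = a v_i + b v_j$ and inverting the $2 \times 2$ Gram matrix of $v_i, v_j$ (whose inverse has operator norm $1/(1-|c|) \le 10$, with $c = \ip{v_i}{v_j}$), the two small inner products $|\ip{v}{v_i}|, |\ip{v}{v_j}| \le 10^{-4}$ force $\|v_P\| = O(10^{-4})$; in particular $\|v_\perp\| \ge 1 - o(1)$ and $\gamma \ne 0$. Projecting the identity $\gamma v_k = v - \alpha v_i - \beta v_j$ onto $P^\perp$ gives $v_{k,\perp} = v_\perp/\gamma$, parallel to $v_\perp$. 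Substituting into $\ip{v_k}{v} = \ip{v_k}{v_P} + \ip{v_k}{v_\perp}$ and using $|\ip{v_k}{v}| \le 10^{-4}$ then gives $\|v_{k,\perp}\|\,\|v_\perp\| \le 10^{-4} + \|v_P\| = O(10^{-4})$, whence $\|v_{k,\perp}\| = O(10^{-4})$ and $\|v_{k,P}\| \ge 1/2$. Applying Claim~\ref{cla:planecorr} to $u = v_{k,P}/\|v_{k,P}\| \in P$ with threshold $1/2$ yields $O(t)$ such $v_k$.

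The remaining degenerate case is $|\ip{v_i}{v_j}| > 1 - 1/10$, where the Gram inverse above blows up and $v_i, v_j$ are nearly parallel. In this regime $\span\{v_i, v_j, v_k\}$ is effectively the 2-plane $\span\{v_i, v_k\}$, and a unit vector $v$ in this plane with small inner product with both $v_i$ and $v_k$ can only exist if $v_k$ is itself nearly parallel to $v_i$; Claim~\ref{cla:vectorcorr} applied with $u = v_i$ then gives $O(t)$ such $v_k$. The main obstacle I anticipate is cleanly controlling constants in this degenerate regime, but collapsing to the 1-dimensional correlation bound via Claim~\ref{cla:vectorcorr} sidesteps the blow-up of the Gram inverse. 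Combining the two cases gives the required $O(t)$ bound on distinct $v_k$, and multiplying by the triple multiplicity $n^\beta$ completes the proof.
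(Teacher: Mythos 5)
Your main-case argument is correct, and it does reach the paper's target ($v_k$ correlates with $P = \span\{v_i,v_j\}$) via a more explicit computation (Gram inversion plus projection onto $P^\perp$) than the paper uses. But the degenerate case $|\ip{v_i}{v_j}| > 9/10$ contains a genuine gap: the assertion that ``$v_k$ must be nearly parallel to $v_i$'' is simply false. Consider $v_i = e_1$, $v_j = c\,e_1 + s\,e_2$ with $c$ extremely close to $1$ (so $s$ tiny, say $s \approx 10^{-5}$), and $v_k = e_3$, which is orthogonal to $P$ and to $v_i$. Then $v = \alpha v_i + \beta v_j + \gamma v_k$ with $\beta \approx 1/s$, $\alpha \approx -\beta c$, and $\gamma$ a small nonzero number gives a unit vector $v \approx e_2$ that has inner product at most $10^{-4}$ with each of $v_i,v_j,v_k$, and all three coefficients are nonzero (so regularity holds). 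Here $v_k$ has \emph{zero} correlation with $P$ and with $v_i$; the vector that correlates with $P$ is $v$ itself.

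This exposes the structural issue: you only ever try to show that $v_k$ correlates with $P$, whereas the truth is that \emph{one of} $v$ or $v_k$ must correlate with $P$, and in the degenerate regime it can be $v$. The paper's proof captures both cases at once: since $P$ and $\Pi := \span\{v_k,v\}$ are two planes inside the three-dimensional $\span\{v_i,v_j,v_k\}$, they meet in a line, and the unit vector $w$ on that line lies in $P$ and also (being in $\span\{v_k,v\}$ with $v_k$ nearly orthogonal to $v$) must correlate with one of $v_k$ or $v$. It then bounds the number of distinct triples by applying Claim~\ref{cla:planecorr} to the multiset $\{v,v_k\}$ taken over all distinct triples containing $(v_i,v_j)$, using the fact that for distinct triples the $v$'s are also distinct (each $M_v$ is a matching). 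To repair your proof, you should track both $v$ and $v_k$ and bound the union by Claim~\ref{cla:planecorr}, rather than aiming exclusively at $v_k$. As a secondary remark, your split at threshold $9/10$ is cosmetic; the Gram-inversion bound $\|v_P\|^2 \lesssim 10^{-8}/(1-|c|)$ shows the main-case argument actually works up to $|c| = 1 - \Omega(10^{-8})$, and only beyond that does the genuinely degenerate situation above arise.
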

\begin{proof}
We will show a bound of $O(t)$ on the number of {\em distinct} triples containing $(v_i,v_j)$. The $O(t n^\beta)$ bound will then follow by our assumption on the maximum multiplicity of triples in $M$ (and so also in $M^{\ast}$).

Let $P = \span\{v_i,v_j\}$. Consider a triple $(v_i,v_j,v_k)$ containing $v_i,v_j$ and suppose this triple belongs to some matching $M_v^\ast$. Let $\Pi = \span\{v_k,v\}$ and observe that both planes $P$ and $\Pi$ (both are indeed planes since the property of the LCC being regular implies the distinctness of the points in a triple and the point they are used to decode to) are contained in the three dimensional subspace $\span\{v_i,v_j,v_k\}$. Therefore, they must intersect in some unit vector $w \in P \cap \Pi$. Now, since $|\ip{v_k}{v}| \leq 10^{-4}$, a simple calculation shows that $w$ must have correlation at least $1/10$ with either $v_k$ or $v$ (since $w$ belongs to their span and they are close to being orthogonal). To summarize, we have shown that in every triple $(v_i,v_j,v_k) \in M_v$, one of the vectors $v,v_k$ has correlation at least $1/10$ with the plane $P$. Now, the union of $\{v,v_k\}$ as we go over all distinct triples containing $\{v_i,v_j\}$ is at most $O(t)$ by Claim~\ref{cla:planecorr}. If the total number of distinct triples is $r$, then at least $r/2$ of the $v$'s will correlate with $P$ or $r/2$ of the $v_k$'s will correlate with $P$. In either case we see that $r/2 = O(t)$, and hence $r = O(t)$. 


\end{proof}
 
\begin{define}[Triple types]
We split the triples appearing in $M^{\ast}$ into two {\it Types}.
\begin{itemize}

\item A triple $(v_i,v_j,v_k) \in M^{\ast}$ is defined to be of {\it Type A} if there exists a pair of vertices in the triple, say $(v_i,v_j)$, such that $|\ip{v_i}{v_j}| \geq 9/10$. 

\item A triple $(v_i,v_j,v_k) \in M^{\ast}$ is defined to be of {\it Type B} if $ |\ip{v_i}{v_j}| < 9/10$, $ |\ip{v_j}{v_k}| < 9/10$ and $|\ip{v_i}{v_k}| < 9/10$
\end{itemize}
\end{define}

When we refer to a triple as Type A or B we will implicitly assume that this triple is in $M^\ast$.

We first state and prove three simple propositions that will be useful in the proof of the basic clustering lemma. Below, we will sometimes refer to the elements of $V$  as `vertices'.

\begin{prop}\label{prop:type2corr}
Let $(v_i,v_j,v_k)$ be a triple of Type B then either  $|\ip{v_i}{v_j}| \geq 1/100$ or $|\ip{v_i}{v_k}| \geq 1/100$.
\end{prop}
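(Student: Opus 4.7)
Since $(v_i,v_j,v_k)\in M_v^{\ast}$ for some $v\in V$, we know $v\in\span\{v_i,v_j,v_k\}$ and $|\ip{v}{v_i}|,|\ip{v}{v_j}|,|\ip{v}{v_k}|<10^{-4}$. The plan is to argue by contradiction: suppose both $|\ip{v_i}{v_j}|<1/100$ and $|\ip{v_i}{v_k}|<1/100$, and derive that $v$ would have to be very short, contradicting $\|v\|=1$.

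First, I would write $v=\alpha v_i+\beta v_j+\gamma v_k$ and set $a=\ip{v_i}{v_j}$, $b=\ip{v_i}{v_k}$, $c=\ip{v_j}{v_k}$. Taking the inner product of $v$ with each of $v_i,v_j,v_k$ yields the $3\times 3$ linear system
\begin{align*}
\alpha + \beta a + \gamma b &= \epsilon_1,\\
\alpha a + \beta + \gamma c &= \epsilon_2,\\
\alpha b + \beta c + \gamma &= \epsilon_3,
\end{align*}
with $|\epsilon_i|\le 10^{-4}$. The Type B hypothesis gives $|c|<9/10$, so $1-c^2>19/100$ is bounded away from zero; this is the key coefficient that lets us invert the last two equations.

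Next, I would eliminate $\beta$ from the second equation into the third to obtain $\gamma(1-c^2)=\epsilon_3-c\epsilon_2-\alpha(b-ac)$, and symmetrically for $\beta$. Using $|a|,|b|<1/100$ and $|c|<9/10$, the numerators are $O(10^{-4})+O(|\alpha|/50)$ while the denominator exceeds $19/100$, yielding $|\beta|,|\gamma|\le O(10^{-3})+O(|\alpha|/10)$. Substituting back into the first equation then gives $|\alpha|\le 10^{-4}+(|\beta|+|\gamma|)/100$, which is a self-bounding inequality that forces $|\alpha|=O(10^{-4})$, and hence $|\beta|,|\gamma|=O(10^{-3})$.

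Finally, the triangle inequality bounds $\|v\|\le|\alpha|+|\beta|+|\gamma|=O(10^{-3})$, contradicting $\|v\|=1$. Therefore at least one of $|\ip{v_i}{v_j}|,|\ip{v_i}{v_k}|$ must be $\ge 1/100$. The only delicate point is bookkeeping the constants so that the self-bounding step on $\alpha$ actually closes; this is easy because both the $10^{-4}$ accuracy from $M^{\ast}$ and the $1/100$ assumed correlation bound are much smaller than the $19/100$ gap provided by Type B.
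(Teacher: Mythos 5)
Your proof is correct, but it takes a genuinely different route from the paper's. The paper works geometrically: after an orthogonal change of basis in the 3-dimensional space $\span\{v_i,v_j,v_k\}$, it sets $u=e_1$ (the decoded vector), $v_i\in\span\{e_1,e_2\}$, and then chains the near-orthogonality conditions to show that $v_j$ and $v_k$ must each have $e_3$-component of magnitude at least $\sqrt{99/100}$, whence $|\ip{v_j}{v_k}|\geq 99/100$, contradicting the Type~B bound $<9/10$ directly. You instead write $v=\alpha v_i+\beta v_j+\gamma v_k$, pass to the Gram system, and use the Type~B bound $|\ip{v_j}{v_k}|<9/10$ as an \emph{input} (so that $1-c^2>19/100$ lets you solve for $\beta,\gamma$), deriving the contradiction from $\|v\|=1$ rather than from Type~B. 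Both arguments are sound; they are in a sense dual — the paper shows the hypotheses force a large $|\ip{v_j}{v_k}|$, you show the hypotheses (including small $|\ip{v_j}{v_k}|$) force a short $v$. Your linear-algebraic formulation is arguably cleaner to carry out and generalizes more mechanically, while the paper's coordinate chase is more visual. One small point worth making explicit if you write this up: the elimination/division by $1-c^2$ is legitimate whether or not $v_i,v_j,v_k$ are linearly independent (you never need to invert the full Gram matrix), and the constants do indeed close — with $|\epsilon_i|\leq 10^{-4}$, $|a|,|b|<10^{-2}$, $|c|<9/10$ one gets $|\beta|,|\gamma|\leq 10^{-3}+0.1|\alpha|$, then $|\alpha|\leq 10^{-4}+0.01(|\beta|+|\gamma|)$ gives $|\alpha|\lesssim 1.3\cdot 10^{-4}$ and $\|v\|\lesssim 2.2\cdot 10^{-3}$, far below $1$.
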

\begin{proof}
Suppose in contradiction that $\ip{v_i}{v_j} < 1/100$ and $\ip{v_i}{v_k} < 1/100$.

Suppose the triple decodes to the vector $u$ and by an appropriate orthogonal change of basis (which does not change distances or inner products), let us assume that the vectors all  lie in the 3 dimensional space spanned by the unit vectors $e_1,e_2$ and $e_3$. We can also assume that $u = e_1$, $v_i$ is a linear combination of $e_1$ and $e_2$, and $v_j$ and $v_k$ are linear combinations of $e_1,e_2$ and $e_3$.

Since the vectors in the triple are uncorrelated to $u$, their inner product with $e_1$ has absolute value at most $1/10^4$. Since $v_i$ is a unit vector, $\ip{v_i}{e_1}^2 + \ip{v_i}{e_2}^2 = 1$ and hence  $|\ip{v_i}{e_2}| > |\ip{v_i}{e_2}|^2 \geq 1 -1/10^8$. 

Also since $|\ip{v_i}{v_j}| < 1/100$ and $|\ip{v_i}{v_k}| < 1/100$, $|\ip{v_j}{e_2}|< 1/100 \times 10^8/(10^8-1) < 2/100$. Similarly $|\ip{v_k}{e_2}|< 2/100$. Also since $v_j$ is a unit vector, $\ip{v_j}{e_1}^2 + \ip{v_j}{e_2}^2  + \ip{v_j}{e_3}^2 = 1$ and hence $\ip{v_j}{e_3}^2 \geq 1- 1/10^8 -4/10^4$, implying that $|\ip{v_j}{e_3}| \geq \sqrt{99/100}$.  Similarly $|\ip{v_k}{e_3}| \geq \sqrt{99/100}$. Hence $|\ip{v_k}{v_j}| \geq 99/100$, contradicting the property of being Type B. 
\end{proof}

\begin{prop}\label{prop:type2ball}
Suppose $T$ is a set of $m$ {\em distinct} triples of Type B, each sharing the pair $(v_i,v_j)$. Let $S$ be the set of size $m$ containing all the vertices of the triples in $T$ except $v_i$ and $v_j$. Then there is a ball of radius at most $5/10^4$ containing at least $m/10^5$ points of $S$. 
\end{prop}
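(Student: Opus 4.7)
The plan is to show that every vector $v_k^{(\ell)}$ (as $(v_i,v_j,v_k^{(\ell)})$ ranges over $T$) lies within distance $O(10^{-4})$ of the single two-dimensional plane $P = \span\{v_i,v_j\}$, and then to cluster them by pigeonholing on the unit circle of $P$. Setup: each triple in $T \subseteq M^{\ast}$ decodes some unit vector $u^{(\ell)}$, so it spans the three-dimensional subspace $W^{(\ell)} = \span\{v_i,v_j,u^{(\ell)}\}$, and by the definition of $M^{\ast}$ each of $v_i,v_j,v_k^{(\ell)}$ has correlation at most $1/10^4$ with $u^{(\ell)}$. Moreover, since each triple is of Type B, $|\ip{v_i}{v_j}| < 9/10$, making $\{v_i,v_j\}$ a well-conditioned basis of $P$.

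The heart of the proof is the distance estimate. I pass to the Gram--Schmidt orthonormal basis $e_1 = v_i$, $e_2 = (v_j - \rho v_i)/\sqrt{1-\rho^2}$ of $P$, where $\rho = \ip{v_i}{v_j}$. Letting $\alpha_i = \ip{u^{(\ell)}}{v_i}$ and $\alpha_j = \ip{u^{(\ell)}}{v_j}$, each of absolute value at most $1/10^4$, the identity $\ip{u^{(\ell)}}{e_2} = (\alpha_j - \rho\alpha_i)/\sqrt{1-\rho^2}$ combined with $|\rho| < 9/10$ yields $\|\pi_P(u^{(\ell)})\| = O(10^{-4})$. Extend to an orthonormal basis $(e_1,e_2,e_3)$ of $W^{(\ell)}$, write $v_k^{(\ell)} = xe_1 + ye_2 + ze_3$, and expand $\ip{v_k^{(\ell)}}{u^{(\ell)}}$ in this basis. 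Cauchy--Schwarz on the first two terms (using $x^2+y^2 \le 1$) together with the bound $|\ip{v_k^{(\ell)}}{u^{(\ell)}}| \le 1/10^4$ gives $|z\,\ip{e_3}{u^{(\ell)}}| \le 1/10^4 + \|\pi_P(u^{(\ell)})\| = O(10^{-4})$; since $|\ip{e_3}{u^{(\ell)}}|$ is essentially $1$, this yields $|z| \le O(10^{-4})$, i.e.\ the distance from $v_k^{(\ell)}$ to $P$ is at most $O(10^{-4})$. Normalizing $\pi_P(v_k^{(\ell)})$ to a unit vector $w^{(\ell)} \in P$ costs only a quadratically small extra error, so $\|v_k^{(\ell)} - w^{(\ell)}\| \le 5/10^4$ after careful bookkeeping of the absolute constants.

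To conclude, I pigeonhole on the unit circle of $P$: since the circle has length $2\pi$, it can be covered by fewer than $10^5$ balls in $\R^d$ centered on the circle of radius much smaller than $5/10^4$. Each $w^{(\ell)}$ lies on the circle, so some ball contains at least $m/10^5$ of them; enlarging this ball by $\|v_k^{(\ell)} - w^{(\ell)}\| \le 5/10^4$ captures the corresponding $v_k^{(\ell)}$'s in a single ball of radius at most $5/10^4$, as required. The main obstacle is the geometric step: decomposing $u^{(\ell)}$ and $v_k^{(\ell)}$ in the skew basis $(v_i,v_j)$ introduces a conditioning factor $1/(1-\rho^2)$ that loses an order of magnitude and only yields distance $O(10^{-3})$. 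It is essential to switch to the Gram--Schmidt basis and to bundle the $v_i$ and $v_j$ contributions via Cauchy--Schwarz; together these preserve the tight $O(10^{-4})$ distance bound needed to match the advertised ball radius $5/10^4$.
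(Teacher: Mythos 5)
Your proof follows essentially the same route as the paper's: pass to an orthonormal basis of the three\nobreakdash-dimensional span $\span\{v_i,v_j,u^{(\ell)}\}$ whose first two vectors span $P=\span\{v_i,v_j\}$, use the low correlation of $v_i,v_j,v_k^{(\ell)}$ with $u^{(\ell)}$ together with $|\ip{v_i}{v_j}|<9/10$ to conclude $v_k^{(\ell)}$ is within $O(10^{-4})$ of $P$, and then pigeonhole on the unit circle of $P$. Your use of explicit Gram--Schmidt and a Cauchy--Schwarz bundling of the in-plane terms is a minor reorganization of the same calculation (with comparably loose constant tracking at the end), not a different argument.
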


\begin{proof}
We will first show that every point of $S$ is  close to the subspace through $v_i$ and $v_j$, and then apply a pigeon hole argument. 

Let $v_k \in S$. Then $(v_i,v_j,v_k)$ is a triple of Type B, and in particular the triple is in $M_u^{\ast}$ for some vertex $u$.  

By an appropriate orthogonal change of basis (which does not change distances or inner products), we can assume that the vectors all lie in the 3 dimensional space spanned by the unit vectors $e_1, e_2$ and $e_3$. We can also assume that $v_i = e_1$, $v_j$ is a linear combination of $e_1$ and $e_2$, and $u$ and $v_k$ are linear combinations of $e_1,e_2$ and $e_3$.

Since we have a triple of Type B, $|\ip{v_i}{v_j}| < 9/10$. Thus $|\ip{v_j}{e_1}| < 9/10$. Since $\ip{v_j}{e_1}^2 + \ip{v_j}{e_2}^2 = 1$, this implies that $|\ip{v_j}{e_2}| > 2/5$. Also since $|\ip{u}{v_i}|< 1/10^4$ and $|\ip{u}{v_j}|< 1/10^4$, thus $|\ip{u}{e_1}|< 1/10^4$ and $|\ip{u}{e_2}| < 5/2 \times |\ip{u}{v_j}|< 5/2 \times 1/10^4$. Hence $|\ip{u}{e_3}| = \sqrt{1 -|\ip{u}{e_1}|^2 -  |\ip{u}{e_2}|^2} \geq 1-1/10^7$. 
Since $|\ip{u}{v_k}|< 1/10^4$, we get that $|\ip{v_k}{e_3}| \leq 1/10^4 \times 10^7/(10^7-1) \leq 2/10^4$.
Notice that  $|\ip{v_k}{e_3}|$ is precisely the distance of $v_k$ to the subspace spanned by $v_i$ and $v_j$. 

Now consider the unit circle $C$ in the subspace spanned by $e_1$ and $e_2$. We will show that each element of $S$ is at distance at most  $4/10^4$ from $C$.
To see this, observe that for $v_k \in S$, the projection $\bar v_k$ of $v_k$ onto the subspace spanned by $e_1$ and $e_2$ is of length at least $1-2/10^4$ (by the triangle inequality). Thus $\bar v_k$ is at distance at most $2/10^4$ from $C$ and also at distance at most $2/10^4$ from $v_k$. Thus again by the triangle inequality, the distance between $v_k$ and $C$ is at most $4/10^4$. Now cover $C$ with $10^5$ 2-dimensional discs of radius $1/10^4$. Clearly this can be done. Thus each element $v_k$ in $S$ is at distance at most $5/10^4$ from the center of one of these discs. Thus for one of these discs, there are $m/10^5$ points of $S$ that are at distance at most $5/10^4$ from the center of the disc.

\end{proof}

\begin{prop}\label{prop:degree}
Let $G$ be a edge-weighted $k$-regular  hypergraph on $n$ vertices with $k \geq 2$. Define the degree of a vertex to be the sum of the weights of all hyperedges containing it. Suppose the average degree of a vertex in $G$ is $D$. Then, there exists a vertex induced subgraph $G'$ of $G$ in which every vertex has degree at least $D$.
\end{prop}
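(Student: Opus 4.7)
The plan is to use the classical greedy peeling (pruning) argument. Starting from $G_0 := G$, iterate the following rule: if there is a vertex $v$ of the current graph $G_t$ with $\deg_{G_t}(v) < D$, delete $v$ together with every hyperedge containing $v$ to obtain $G_{t+1}$; otherwise halt and output $G' := G_t$. The terminal graph $G'$, provided it is nonempty, satisfies the conclusion by construction, so the whole content of the proof is to argue that the pruning process cannot remove all $n$ vertices.

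To control the process, I would track the total surviving edge weight $m_t$ together with the vertex count $n_t := n - t$. Because every hyperedge has exactly $k$ endpoints, the handshake identity gives $\sum_{v \in G_t} \deg_{G_t}(v) = k m_t$, and in particular $k m_0 = n D$. Each peeling step removes a vertex whose current degree is strictly less than $D$, so the total weight drops by strictly less than $D$: $m_{t+1} > m_t - D$. I would introduce a potential such as $\Phi_t := k m_t - n_t D$, note $\Phi_0 = 0$, and iterate the per-step inequality to show that $\Phi_t$ stays on the appropriate side of zero for as long as the process runs, which forces $m_t > 0$ (and hence $n_t \geq 1$) throughout.

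The main subtlety is the telescoping step itself, since the naive per-step bound on the edge-weight loss pulls against the handshake identity. I would handle this by contradiction: suppose that the process runs until all $n$ vertices have been deleted, i.e., for $t = n$ steps; then summing the strict drops $m_{t+1} > m_t - D$ against the initial value obtained from $k m_0 = n D$ yields an inequality that violates the nonnegativity of the edge weights before the last vertex can actually be removed, a contradiction. This forces the process to halt at some step $t^{\ast} < n$, at which point every surviving vertex has degree at least $D$ in $G_{t^\ast}$, and we may take $G' := G_{t^\ast}$. The argument is purely combinatorial, depends only on nonnegativity of edge weights and on $k \geq 2$ (via the handshake identity), and hence is field-independent.
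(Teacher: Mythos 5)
Your peeling scheme is the right idea, but the contradiction you sketch does not actually close, and the underlying reason is that the proposition is false with threshold $D$. Tracing your own arithmetic: the handshake identity gives $m_0 = nD/k$, so after all $n$ steps the telescoped bound is only $m_n > m_0 - nD = nD\left(\tfrac{1}{k}-1\right)$, which is \emph{negative} for every $k \geq 2$; that is perfectly compatible with $m_n = 0$, so nothing prevents the process from emptying the vertex set. The potential $\Phi_t = km_t - n_tD$ does not help either: the per-step change is $\Phi_{t+1}-\Phi_t = D - k\deg_{G_t}(v_t)$, and since $\deg_{G_t}(v_t)$ ranges over $[0,D)$, this increment can be either positive or as negative as $D(1-k)$, so $\Phi$ is not confined to one side of zero. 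A concrete counterexample to the statement as written: take $k=2$ and the path on three vertices $a,b,c$ with the two unit-weight edges $\{a,b\}$ and $\{b,c\}$. Then $D = 4/3$, but no nonempty induced subgraph has minimum degree $\geq 4/3$: the only vertex that can ever reach degree $2$ is $b$, and once $a$ and $c$ are both present to give $b$ that degree, they themselves have degree $1$. (The same example also refutes the justification given in the paper, that the average degree strictly increases after each deletion: deleting $a$ leaves the single edge $\{b,c\}$ with average degree $1 < 4/3$.)

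What is true, and what the two invocations of this proposition inside the proof of Lemma~\ref{lem-basiccluster-gamma} actually rely on, is the version with threshold $D/k$: there is a nonempty vertex-induced subhypergraph in which every vertex has degree at least $D/k$. For that statement your peeling argument closes cleanly and no potential function is needed. Peel at threshold $D/k$. Each hyperedge is destroyed exactly once, namely when the first of its $k$ vertices is removed, so the total weight removed over the whole process is exactly $m_0$; on the other hand each deletion removes weight strictly less than $D/k$, so if all $n$ vertices were deleted we would get $m_0 < n\cdot D/k = m_0$, a contradiction. Hence the process halts at a nonempty $G'$ with minimum degree at least $D/k$. Plugging in $k=2$, $D \geq 2\gamma n$ for Type A triples and $k=3$, $D = 3\gamma n$ for Type B triples recovers exactly the minimum degree $\gamma n$ that the surrounding argument uses, so the corrected threshold is harmless downstream.
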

\begin{proof}
To obtain $G'$ we iteratively delete vertices whose degree in $G$ is less than $D$. Observe that, after each deletion, the average degree in the hypergraph strictly increases. Thus the process must terminate when all vertices have  degree at least $D$. 
\end{proof}

\subsection{Basic clustering: Proof of Lemma~\ref{lem-clusterbasic}}

We first show that  having many triples of the same type implies that we can find a small set of vertices such that many  of the triples intersect the set in at least two of  their elements.  This will be the main step in the proof of Lemma~\ref{lem-clusterbasic} which is given below. Recall that we have an upper bound of $n^{\beta}$ on the multiplicity of each triple in $M^{\ast}$.

\begin{lem}\label{lem-basiccluster-gamma}
Suppose there is a subset $T$ of $ \gamma n^2$ triples (counting multiplicities) in $M^\ast$ of the same type (either Type A or B), then there is a set $S \subseteq V$ such that $|S| = O(t)$, and at least $\Omega(\gamma^2 n^{2-\beta}/t)$ triples in $T$ intersect $S$ in at least two of their elements.
\end{lem}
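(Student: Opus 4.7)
Proof proposal.

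The plan is to handle the two triple types in parallel, in each case building an auxiliary edge-weighted multigraph on $V$ that records a distinguished correlated pair inside every triple, and then extracting a small dense subset via Proposition~\ref{prop:degree}.

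For Type A, let $H$ denote the correlation graph with threshold $9/10$; by Claim~\ref{cla:vectorcorr}, every vertex of $H$ has degree $O(t)$, and each Type A triple contains an edge of $H$ by definition. For Type B, let $H$ instead be the graph of threshold $1/100$; again $\deg_H \leq O(t)$ by Claim~\ref{cla:vectorcorr}, and by Proposition~\ref{prop:type2corr} each Type B triple contains an edge of $H$. Choose one such edge $e(\tau) \subseteq \tau$ for each $\tau \in T$, and let $G$ be the multigraph on $V$ with edge multiset $\{e(\tau) : \tau \in T\}$. Then $G$ is a sub-multigraph of $H$ with total weight $\gamma n^2$, and by Proposition~\ref{prop-tripleupperbound} each distinct edge of $G$ has multiplicity at most $O(t n^\beta)$.

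Applying Proposition~\ref{prop:degree} to $G$ (whose average weighted degree is $2\gamma n$) yields a vertex-induced sub-multigraph $G' = G[W]$ in which every vertex has weighted degree at least $2\gamma n$. Hence $|E(G')| \geq \gamma n |W|$, while the multiplicity bound gives $|E(G')| \leq \binom{|W|}{2} \cdot O(tn^\beta)$. Combining these two inequalities produces the key lower bound
\[
|W| \geq \Omega\bigl(\gamma n^{1-\beta}/t\bigr).
\]

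If $|W| \leq Ct$ for a sufficiently large absolute constant $C$, I would simply take $S := W$. Then $|S| = O(t)$, every triple $\tau$ with $e(\tau) \subseteq S$ has two elements in $S$, and the total multiplicity of such triples equals $|E(G')| \geq \gamma n \cdot \Omega(\gamma n^{1-\beta}/t) = \Omega(\gamma^2 n^{2-\beta}/t)$, giving the lemma. If $|W| > Ct$, I would let $S$ be a uniformly random subset of $W$ of size $Ct$; by linearity of expectation the expected covered multiplicity is at least $|E(G')| \cdot \binom{Ct}{2}/\binom{|W|}{2} = \Omega(\gamma n t^2 / |W|)$, which matches the target whenever $|W|$ is not too much larger than $t^3/(\gamma n^{1-\beta})$, and a good $S$ is then fixed by a standard existence argument.

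The main obstacle is the extreme regime where $|W|$ is much larger than $t^3/(\gamma n^{1-\beta})$, where neither of the above estimates is sufficient. In this regime I expect the argument to exploit the type-specific geometric structure: for Type B, one fixes a heavy pair $\{v_i,v_j\}$ (forced to exist by the weighted degree bound on $G'$, since each vertex has only $O(t)$ simple neighbors in $H$) and applies Proposition~\ref{prop:type2ball} to locate a small Euclidean ball $B$ on the sphere holding a constant fraction of the third vertices, then takes $S := \{v_i,v_j\} \cup (B \cap V)$, which has size $O(t)$ by Claim~\ref{cla:vectorcorr} and covers the required number of triples; for Type A a parallel planar-confinement argument for the third vertex, based on the correlated pair $v_i, v_j$ being nearly proportional and the triple being nearly orthogonal to each decoded vector, plays the analogous role. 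This type-specific step is where the proof becomes most delicate.
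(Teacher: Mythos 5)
Your opening moves agree with the paper: an edge-weighted graph on $V$ keyed to a distinguished correlated pair inside each triple (threshold $9/10$ for Type~A, $1/100$ for Type~B via Proposition~\ref{prop:type2corr}), a pass through Proposition~\ref{prop:degree} to reach a sub-multigraph of high minimum weighted degree, and a use of Proposition~\ref{prop-tripleupperbound} to count distinct neighbors. Your cases~1 and~2 (take $S \subseteq W$ directly, or a random $Ct$-subset of $W$) are a genuinely different idea that the paper never considers, and they do dispose of some parameter ranges. But you correctly flag that they fail whenever $|W|$ is large, and it is precisely that regime --- which occurs for the parameters actually used later, where $t$ can be close to $\sqrt n$ --- that forces the geometric argument. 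So what you defer to ``case~3'' is in fact the whole proof, and your sketch of it has two gaps.

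For Type~B, the set you propose, $S = \{v_i,v_j\} \cup (B \cap V)$, does not cover $\Omega(\gamma^2 n^{2-\beta}/t)$ triples. Those would be triples with two elements among $\{v_i,v_j\}\cup(B\cap V)$, which essentially come from $T^\ast$, and $|T^\ast|$ is only $\Theta(\gamma n^{1-\beta}/t)$ --- short of the target by a factor $\gamma n$. The paper instead uses the ball $V^\ast = B\cap V$ only to anchor the cluster, and then takes $S$ to be the \emph{correlation neighborhood} of $V^\ast$, i.e.\ every $u$ with $|\ip{u}{w'}| > 1/100$ for some $w' \in V^\ast$. The point is that each vertex of $V^\ast$ sits in the high-min-degree subgraph, hence touches $\gamma n$ triples, and Proposition~\ref{prop:type2corr} hands each such triple a partner inside $S$; together with the anchor vertex in $V^\ast \subseteq S$ that already gives two elements of $S$ in every such triple. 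Size control then comes from the fact that all of $S$ correlates at least $1/10^3$ with a fixed $w\in V^\ast$ (triangle inequality with the $5/10^4$ ball radius), so Claim~\ref{cla:vectorcorr} gives $|S|=O(t)$.

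For Type~A your description (``planar confinement of the third vertex'') does not match what is needed, and is vaguer than the actual mechanism. The paper does not constrain the third vertex of a triple at all: it picks one vertex $w$ of high weighted degree, takes $V^\ast$ to be its $\Omega(\gamma n^{1-\beta}/t)$ distinct neighbors in the $9/10$-correlation graph, and sets $S$ to be $V^\ast$ together with all of \emph{their} neighbors in that graph. Every vertex of $S$ then lies at graph distance $\le 2$ from $w$, hence in a ball of radius $2/\sqrt5$ around $\pm w$, giving correlation $\ge 4/6$ with $w$ and so $|S|=O(t)$; the triple count follows because each $u\in V^\ast$ has weighted degree $\ge \gamma n$ and every edge incident to $u$ lies inside $S$. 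In short, your case split adds structure the paper does not need, and the essential geometric step is left as a sketch that, as written, produces the wrong set $S$ in Type~B and does not pin down the argument in Type~A.
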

\begin{proof} We separate into two cases according to the type of the triples in $T$.
In both cases, we will first refine to the situation where every vertex is incident to many ($\gamma n$) triples.
In both cases we will find a cluster of nearby vertices $V^\ast$, and let $S$ be some kind of neighborhood of  
$V*\ast$ such that every triple which intersects $V^\ast$ will also intersect $S$ in two elements. Since $V^\ast$ will be incident to many triples, we will conclude that many triples intersect $S$ in two elements.
 Moreoever we will ensure that every vertex in $S$ will have some constant {\it correlation} with some fixed carefully chosen vertex $w$. Since every element in $S$ correlates with vertex $w$, Claim~\ref{cla:vectorcorr} implies that $S$ cannot be too large. In the case of Type A triples, the argument is fairly straightforward, whereas in the case of Type B triples the argument is more delicate. 
	
\noindent{\bf Case 1: T has triples of Type A.}

Consider the following weighted graph $H$ on vertex set $V$ in which the edges are all pairs $v_i,v_j$ with $|\ip{v_i}{v_j}| \geq 9/10$ and the weight of an edge $(v_i,v_j)$ is the number of triples in $T$, counting multiplicities, that contain this pair (we can discard edges of weight zero). We define the degree of a vertex $\deg(v)$ as the sum of weights over all edges of $H$ that contain $v$. Since $(1/2)\sum_{v}\deg(v) \geq |T|$ we have that the average degree in $H$ is at least $D = 2|T|/n \geq 2\gamma n$. 

Let $H'$ be a vertex induced subgraph of $H$ in which every vertex has degree at least $D$ (such a subgraph exists by Proposition~\ref{prop:degree}). Let $w$ be any vertex in $H'$ and observe that, by Proposition~\ref{prop-tripleupperbound}, $w$ must have at least $r = \Omega(\gamma n^{1-\beta}/ t)$ distinct neighbors $u_1,\ldots,u_r$ (since the maximal weight of an edge is $O(tn^{\beta})$). Let $V^\ast = \{u_1,\ldots,u_r\}$. We define the set $S$ to contain these vertices $u_1,\ldots,u_r \in V^\ast$ as well as all of their neighbors.

First, we argue that $S$ cannot be too large. To see this, observe that, if $(v_i,v_j)$ is an edge in $H$ then $v_j$ must have $\ell_2$ distance at most $ 1/\sqrt{5}$ from either $v_i$ or $-v_i$. Thus, since all vertices in $S$ are at (graph) distance $\leq 2$ from $w$, we have that they are all contained in the union of two balls of radius $2/\sqrt{5}$ around $w$ and around $-w$. This means that all points in $S$ must have correlation at least $4/6$ with $w$. Using Claim~\ref{cla:vectorcorr} we get that $|S| \leq O(t)$.

To see that there are many triples with two elements in $S$ observe that the sum over all weights of edges touching  $u_1,\ldots,u_r$ is at least $r \cdot \gamma n \geq \Omega(\gamma^2 n^{2-\beta} /t)$ (using the fact that $H'$ has high minimum degree). Since every triple is counted at most $3$ times in this sum we conclude that there are at least $\Omega(\gamma^2 n^{2-\beta} /t)$  triples with a pair in $S$.

\noindent{\bf Case 2: T has triples of Type B.}

Consider the following 3-regular weighted hypergraph $G$: The set of vertices of $G$ is the set $V$. For each triple $(v_i,v_j,v_k) \in T$ we have a hyper-edge in $G$ with weight equal to the multiplicity of that triple in $T$. By Proposition~\ref{prop:degree}, there is a subgraph $G'$ of $G$ such that every vertex of $G'$ is incident to at least $\gamma n$ triples (counting weights) lying within $G'$. 

Pick any vertex $v \in G'$. Let $C_v$ be the multiset $\{v' \in V \mid |\ip{v}{v'}| > 1/100\}$. By Claim~\ref{cla:vectorcorr}, $|C_v| < t \cdot 10^4$. Also, by Proposition~\ref{prop:type2corr}, every triple containing $v$ has another vertex $v'$ such that $|\ip{v}{v'}|> 1/100$. Thus by a simple averaging argument,  it must be that for some $v' \in C_v$, the pair $(v,v')$ participates in at least $\frac{\gamma n}{ |C_v|}$ triples (counting multiplicities). Using the bound on triple multiplicity, we get that there is a {\em set}  $T^\ast$ of at least $\frac{\gamma n}{ |C_v|n^{\beta}}$ distinct triples containing $v$ and $v'$. Thus $|T^\ast| \geq \frac{\gamma n}{|C_v|n^{\beta}} \geq \frac{\gamma n^{1-\beta}}{ 10^4 \cdot t}$ and, by Proposition~\ref{prop:type2ball}, at least $|T^\ast|/10^5$ vertices (of G') lie in a ball of radius $5/10^4$. Call this set of vertices $V^\ast$. Thus what we have so far is a set $V^\ast$ of vertices of $G'$ all lying in a ball of radius $5/10^4$, where $|V^\ast| \geq \frac{\gamma n^{1-\beta}}{10^9 \cdot t}$. 

Recall that every point $v_k$ in $V^\ast$ is incident to at least $\gamma n$ triples lying within $G'$, and, by Proposition~\ref{prop:type2corr}, for each of the triples there exists a vertex $v_k'$ distinct from $v_k$ in that triple such that $|\ip{v_k}{v_k'}| > 1/100$. 

Let $S = \{u \in V \mid \exists w \in V^\ast s.t. |\ip{u}{w}| > 1/100\}$ be the set of all vertices that have correlation at least $1/100$ with some vertex of $V^\ast$. Fix $w\in V^\ast$. Then for any $u \in S$, by definition of $S$, there exists $w' \in V^\ast$ such that $\ip{u}{w'} > 1/100$. Also, since radius of $V^\ast$ is at most $5/10^4$, hence $\|w-w'\| \leq 1/10^3$. Together, these imply that $|\ip{u}{w}| > 1/10^3$. 
Since this holds for all $u \in S$ (and for the same fixed $w$), by Claim~\ref{cla:vectorcorr} we get that $|S| < 10^6t$.

Moreover observe that each triple that intersects $V^\ast$ must intersect $S$ in two elements. Since each tripe in $V^\ast$ is incident to at least $\gamma n$ triples, and each triple is counted at most 3 times, thus there must be at least $\Omega(\gamma n\times \frac{\gamma n^{1-\beta}}{10^9 \cdot t}) =\Omega(\gamma^2 n^{2-\beta} /t)$ triples with a pair in $S$. 
\end{proof}

\begin{proof}[Proof of Lemma~\ref{lem-clusterbasic}]
	Since $d > \frac{200 \cdot 10^8}{\delta^8}$ we have that $$ t = \frac{n}{\delta^6 d} < \frac{\delta^2 n}{200 \cdot 10^8}. $$ Thus, by Claim~\ref{cla:sizemvcorr} we have that for each $v \in V$
	$$ |M_v \setminus M_v^\ast| \leq 10^8 t \leq \delta^2 n /200.$$ So, the set $\bar M^\ast = \bar M \cap M^\ast$ must have size at least $|\bar M| - \delta^2 n^2 / 200 \geq \delta^2 n^2 /200$ triples. At least half of these triples are of the same type (A or B) and so we can apply Lemma~\ref{lem-basiccluster-gamma} with $\gamma=\delta^2/400$ to get the required sets $S$ and triples $T$.
\end{proof}

\subsection{Intermediate clustering: Proof of Lemma~\ref{lem-clusterintermediate}}

We prove Lemma~\ref{lem-clusterintermediate} by iteratively applying Lemma~\ref{lem-clusterbasic} until we have gathered `enough' clustered triples, where we call a triple `clustered' if it has intersection size at least 2 with one of the sets $S_i$. 

We start with $\bar M = M$, which is initially of size $|\bar M| \geq \delta n^2 \geq \delta^2 n^2/100$. Applying Lemma~\ref{lem-clusterbasic} we get  sets $T_1 \subset M$ and $S_1 \subset V$ with $|S_1| \leq O(t)$ and so that all triples in $T_1$ are clustered. We now let $\bar M = M \setminus T_1$ and continue in this manner to generate $S_2,S_3,\ldots,S_m$ and (disjoint) $T_2,T_3,\ldots,T_m$, removing the triples in the $T_i$'s from $\bar M$ as we proceed, until there are at most $\delta^2 n^2/100$ triples in $M$ that are not clustered. 

This only leaves the task of bounding the number of iterations, $m$. The upper bound follows from the fact that the sets $T_i$ are disjoint, each of size at least $\Omega(\delta^4 n^{2-\beta}/t)$ and that $|M| \leq \delta n^2$. The lower bound follows from the observation that, by Proposition~\ref{prop-tripleupperbound}, each $T_i$ can have size at most  $|S_i|^2 \cdot O(t \cdot {n^\beta}) = O(n^\beta t^3)$. Since the union of the $T_i$'s contains at least $\Omega(|M|) \geq \Omega(\delta n^2)$ triples we get that $m \geq \Omega(\delta n^{2-\beta}/t^3)$. This completes the proof of Lemma~\ref{lem-clusterintermediate}.


\section{Clustering implies low dimension}\label{sec-clusterlowdim}

The main result of this section is the following lemma giving a dimension upper bound for LCCs in which the triples are `clustered'. Notice that this lemma works over any field $\F$.

\begin{lem}[Clustering implies low dimension]\label{lem-clusterlowdim}
Let $\F$ be a field, $0 < \barconst < 1/50$, $0 < \beta < \barconst/2$ and suppose $n > (1/\delta)^{\omega(1)}$. Let $V = (v_1,\ldots,v_n ) \in (\F^d)^n$ be a $(3,\delta)$-LCC with  matchings $M_v, v \in V$. Suppose there exists sets $S_1,\ldots,S_m \subset [n]$ with 
\begin{enumerate}
	\item $|S_i| \leq O(  n/  \delta^6   d)$ for all $i \in [m]$.
	\item $\Omega(  \delta^{19}   d^3/  n^{1+\beta}) \leq m \leq O(  n^{1+\beta}/  \delta^{10}    d).$
	\item Every triple in each $  M_{ v}$ is clustered by $S_1,\ldots,S_m$.
\end{enumerate}
Then, there is a subset $V' \subset V$ of size $|V'| \geq (\delta/2)n$ and dimension at most $n^{1/2 - \barconst}$.
\end{lem}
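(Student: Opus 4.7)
}

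The strategy is a random restriction guided by the clusters: pick a small random $U \subseteq V$ so that projecting along $\span(U)$ collapses most of $V$ into very few proportionality classes. If $d \leq n^{1/2-\barconst}$ we are done with $V' = V$, so I may assume $d > n^{1/2-\barconst}$. Set $r := n^{4\barconst}$ and $s := n^{1/4+3\barconst}$. In round $\ell = 1,\dots,r$ independently pick $i_\ell \in [m]$ uniformly and then $U_\ell \subseteq S_{i_\ell}$ uniformly of size $\min(s,|S_{i_\ell}|)$, and let $U := \bigcup_{\ell=1}^{r} U_\ell$, so $|U| \leq rs = n^{1/4+7\barconst}$. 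Before doing this I apply Claim~\ref{cla-regular} to reduce to regular form: its hypothesis is vacuously satisfied, since a sub-LCC of dimension only $\omega((1/\delta)\log n)$ would already be below the target $n^{1/2-\barconst}$ thanks to $n \geq (1/\delta)^{\omega(1)}$.

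Define the \emph{proportionality graph} $G$ on $V' := V \setminus U$ as follows: for each $v \in V'$ and each triple $(v_i,v_j,v_k) \in M_v$ with two of $\{v_i,v_j,v_k\}$ in $U$ and the third (call it $v_k$) in $V'$, add the edge $\{v,v_k\}$. Letting $\pi$ be the linear projection with kernel $\span(U)$, the regular form of $V$ gives $\pi(v) = \gamma\,\pi(v_k)$ for some nonzero scalar $\gamma$, so every connected component of $G$ maps into a single $1$-dimensional subspace. Writing $c(G)$ for the number of components,
\begin{equation*}
\dim(V') \;\leq\; \dim(U) + \dim(\pi(V')) \;\leq\; |U| + c(G),
\end{equation*}
while $|V'| \geq n - |U| \geq (\delta/2)n$. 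It thus suffices to show $c(G) \leq n^{10\barconst}$ with positive probability, for then $\dim(V') \leq n^{1/4+7\barconst} + n^{10\barconst} \leq n^{1/2-\barconst}$ whenever $\barconst < 1/50$.

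The heart of the proof, and the main obstacle, is bounding $c(G)$. Split components into \emph{large} (size $\geq n^{1-\barconst}$), of which there are at most $n^{\barconst}$ trivially, and \emph{small} otherwise. For a small component $C$ of the graph $G_\ell$ produced after $\ell$ rounds and any fixed $v \in C$, a double-count over the clustered pairs of $M_v$, using the hypotheses $|S_i| \leq O(n/\delta^6 d)$, $m \leq O(n^{1+\beta}/\delta^{10}d)$, the working assumption $d > n^{1/2-\barconst}$, and the choice $s = n^{1/4+3\barconst}$, shows that the expected number of triples of $M_v$ whose clustered pair lands inside $U_{\ell+1}$ is $\Omega(\delta^{O(1)} s^2 d^3/n^{2+\beta}) = n^{\Omega(\barconst)}$, invoking $\beta < \barconst/2$. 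Because $M_v$ is a matching its third vertices are all distinct, and since $\delta n \gg n^{1-\barconst} \geq |C|$ (using $n \geq (1/\delta)^{\omega(1)}$), at most an $n^{-\Omega(\barconst)}$-fraction of them lies in $C$, so with constant probability $v$ acquires an edge to $V' \setminus C$ in round $\ell+1$, i.e.\ $C$ merges with another component. Summing over small components across $r \gg \log n$ rounds yields $\mathbb{E}[c(G_{\ell+1}) \mid G_\ell] \leq (3/4) c(G_\ell)$ on small components, and Markov produces a specific $U$ with $c(G) \leq n^{10\barconst}$. The most delicate part is controlling the per-round merging probability conditionally on $G_\ell$: since the cluster-choice step only hits a prescribed cluster with probability $1/m$, one must argue using the lower bound $m \geq \Omega(\delta^{19} d^3/n^{1+\beta})$ together with the uniform upper bound on $|S_i|$ that the clustered pairs of $M_v$ are sufficiently spread across the $S_i$ for a constant fraction of random cluster choices to produce useful pairs; this spreading argument, combined with a second-moment calculation that turns a large-expectation statement into an $\Omega(1)$ hit probability per round, is where I expect most of the technical work to lie.
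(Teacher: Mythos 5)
Your overall strategy matches the paper's: reduce to regular form via Claim~\ref{cla-regular} (handled the same way, by an ``either the hypothesis holds or we are already done'' dichotomy, although your phrase ``vacuously satisfied'' is a slight misstatement), sample $U$ over $r=n^{4\barconst}$ rounds by picking a random cluster and then a random subset of it of size about $n^{1/4+3\barconst}$, form the proportionality graph, and bound $\dim(V)\leq |U| + c(G)$.

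The gap is in your estimate of the per-round merging probability. You claim that with \emph{constant} probability a given vertex $v$ in a small component $C$ acquires an outgoing edge in a round, which would give $\E[c(G_{\ell+1})\mid G_\ell]\leq (3/4)\,c(G_\ell)$. This is false. You correctly compute that the \emph{expected} number of pairs of $P_v$ captured in a round is $n^{\Omega(\barconst)}$, but this expectation is dominated by the low-probability event of drawing a cluster $S_i$ that contains many pairs of $P_v$: the cluster-choice step correlates all the pair-hit indicators, making the distribution of hits bimodal (either the chosen cluster is useless for $v$, or it yields many hits at once). A second-moment / Paley--Zygmund argument does not rescue this; $\E[X^2]$ is inflated by the same heavy tail. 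The paper's Claim~\ref{cla-probbot} gives only $\Pr[f_{A,i}(v)\neq\bot]\geq\Omega(\delta^{17}n^{-3\barconst})$ via a direct ``heavy cluster'' count: at most $n^{1/2-2\barconst}$ disjoint pairs of $P_v$ can avoid being in a ``heavy'' cluster, each heavy cluster contributes $\geq 1/2$ conditional success, while $m\leq O(\delta^{-10}n^{1/2+\barconst+\beta})$ caps the denominator. The lower bound on $m$ is \emph{not} used for spreading (contrary to what you suggest); it is used separately in Claim~\ref{cla-probpair} to bound the probability that $f_{A,i}(v)$ hits a \emph{specific} vertex, which feeds into Claim~\ref{cla-probfAbad}. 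The $r = n^{4\barconst}$ rounds that you chose are precisely what the $1-\Omega(n^{-3\barconst})$ per-round decay requires; under your claimed $3/4$ factor, $O(\log n)$ rounds would suffice, which is a signal that the constant-probability estimate cannot be right. So the final bound and the round count happen to be correct, but the argument you give for the merging rate would need to be replaced by the heavy-cluster counting to actually go through.
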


This lemma will be an easy corollary of the following lemma, which shows that there is a small subset in $V$ so that, when projecting this set to zero, the dimension of $V$ drops by a lot.

\begin{lem}[Restriction lemma]\label{lem-randomrestrict}
Let $n,\beta, \barconst$, $V$ and $S_1,\ldots,S_m$ satisfy the conditions of Lemma~\ref{lem-clusterlowdim}. Assume further that the matchings $M_v$ are in regular form (no `2-query' triples). If $d > n^{1/2-\barconst}$ then there exists a subset $U \subset V$ with  $$ |U| \leq n^{1/4+7\barconst}$$ such that, if $\cL : \F^d \mapsto \F^d$ is any linear map with $U \subset \ker(\cL)$ then $\cL(V) = \{ \cL(v)\,|\, v \in V\}$ is contained in a subspace of dimension at most $ n^{10\barconst}$
\end{lem}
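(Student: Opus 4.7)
}

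The plan is to build $U$ as a small random union of subsets drawn from the clusters, and to bound the dimension drop via a connected-component count on a graph built from the matchings. First I would associate to each candidate $U \subset V$ the graph $G_U$ on vertex set $[n]$ whose edges are the pairs $\{v, v_k\}$ such that some triple $(v_i, v_j, v_k) \in M_v$ has $v_i, v_j \in U$. Because the code is in regular form, writing $v = \alpha v_i + \beta v_j + \gamma v_k$ with $\gamma \neq 0$ and applying any $\cL$ with $\cL(U) = 0$ gives $\cL(v) = \gamma\,\cL(v_k)$, so $\cL$-images of vertices in the same connected component of $G_U$ are all proportional (possibly zero), yielding $\dim(\cL(V)) \leq |\pi_0(G_U)|$. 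The task thus becomes finding a small $U$ for which $G_U$ has at most $n^{10\barconst}$ components.

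The randomized construction: set $s := n^{1/4+3\barconst}$ and $r := n^{4\barconst}$. In round $t$, pick $i_t \in [m]$ uniformly at random and then a uniform subset $U_t \subset S_{i_t}$ of size $s$; take $U := \bigcup_t U_t$, so $|U| \leq rs = n^{1/4+7\barconst}$. Write $G_t := G_{U_1 \cup \cdots \cup U_t}$ so $G_0$ is edgeless and we want to bound $|\pi_0(G_r)|$. For each $v$ and each triple $\tau = (v_i, v_j, v_k) \in M_v$, the clustering hypothesis supplies an $S_{i(\tau)}$ containing a pair of $\tau$; conditioning on $i_t = i(\tau)$, this pair lands in $U_t$ with probability $\Theta(s^2/|S_{i(\tau)}|^2)$. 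Summing over the $\geq \delta n$ triples of $M_v$, averaging over $i_t$, and plugging in $|S_i| \leq O(n/(\delta^6 d))$, $m \leq O(n^{1+\beta}/(\delta^{10} d))$, $d > n^{1/2-\barconst}$, $\beta < \barconst/2$, gives an expected number of new edges at $v$ per round of order $n^{\Omega(\barconst)}$. Because $M_v$ is a matching, distinct triples contribute distinct neighbors of $v$, so this edge count is also a neighbor count.

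Call a component of $G_t$ \emph{large} if it has $\geq K := n^{1-10\barconst}$ vertices, otherwise \emph{small}; there are at most $n/K = n^{10\barconst}$ large components at any time. For a small $C$ and $v \in C$, the number of distinct matching-neighbors of $v$ lying in $C$ is at most $|C| \leq K$, which is polynomially smaller than the expected number of hits in round $t+1$. A Paley–Zygmund / second-moment estimate on the number of hits exiting $C$ then shows that $C$ merges with a different component in round $t+1$ with probability $1 - n^{-\Omega(\barconst)}$. Consequently the expected number of small components contracts by a constant factor per round, so after $r = n^{4\barconst} \gg \log n$ rounds it becomes $\ll 1$, giving $\E|\pi_0(G_r)| \leq 2 n^{10\barconst}$; by Markov, a realization achieving the claimed bound exists, which together with the Setup step yields $\dim(\cL(V)) \leq n^{10\barconst}$.

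\textbf{Main obstacle.} The technical heart is the concentration step for the per-round merge probability: the events ``triple $\tau$ hits in round $t+1$'' are strongly correlated because they all depend on the single random cluster $S_{i_t}$ and the single random subset $U_t$, so naive Bernoulli/Chernoff bounds fail. Handling this requires conditioning on $S_{i_t}$, exploiting the matching property of $M_v$ so that, inside a fixed cluster, the hitting events for distinct triples of $M_v$ involve essentially disjoint pairs of indices, and carefully tracking second moments — both within a round and across rounds, where the graph $G_t$ depends on the entire history — to convert the gap between $|C| \leq K$ and the $n^{\Omega(\barconst)}$ expected hits into the per-round merge probability $1 - n^{-\Omega(\barconst)}$ needed to drive the component count down to $n^{10\barconst}$.
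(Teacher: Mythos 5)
Your high-level skeleton matches the paper's: build $U$ as a union of $r \approx n^{4\barconst}$ rounds, each round choosing a random cluster and a subset of it of size $\approx n^{1/4+3\barconst}$; use regularity to turn triples with a pair in $U$ into proportionality edges; and bound $\dim(\cL(V))$ by the number of connected components in the resulting graph. That overall strategy is correct and is the same as the paper's.

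However, the central probabilistic claim in your argument is wrong, and it is precisely the one you flagged as the ``main obstacle.'' You assert that, for a small component $C$, a Paley--Zygmund/second-moment estimate on the number of hits exiting $C$ gives a per-round merge probability $1 - n^{-\Omega(\barconst)}$. It cannot: all hitting events in one round are coupled through the single choice of cluster $i_t$, and for any fixed $v$ only an $\Omega(\delta^{O(1)} n^{-3\barconst})$ fraction of clusters are ``heavy'' in the sense of containing $\gtrsim n^{1/2-2\barconst}$ pairs of $P_v$. Conditioned on picking a light cluster, $v$ is almost surely not hit at all; conditioned on a heavy cluster, $v$ is hit many times. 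So the number of hits $X$ for a fixed $v$ has the shape ``$X \approx n^{\Theta(\barconst)}$ with probability $\approx n^{-3\barconst}$, else $X \approx 0$.'' A second moment computation then gives $(\E X)^2 / \E X^2 \approx n^{-3\barconst}$, i.e.\ Paley--Zygmund itself recovers a per-round merge probability of order $n^{-3\barconst}$, not $1 - n^{-\Omega(\barconst)}$. Relatedly, the intermediate sentence ``the expected number of small components contracts by a constant factor per round'' is not correct; the contraction factor is $1 - \Omega(\delta^{17} n^{-3\barconst})$, which is why the paper needs $r = n^{4\barconst}$ rounds (a $\text{polylog}$ number of rounds does not suffice).

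The paper's proof sidesteps your correlation difficulty in a cleaner way: rather than bounding the number of hits exiting a component, it picks a single representative $u$ per small component and shows (i) $\Pr[f_{A,i}(u) \neq \bot] \geq \Omega(\delta^{17} n^{-3\barconst})$ via the heavy/light cluster dichotomy (a first-moment, not second-moment, argument), and (ii) for each fixed target $z$, $\Pr[f_{A,i}(u) = z] \leq O(\delta^{-19} n^{-1+6\barconst})$, so a union bound over the $\leq n^{1-10\barconst}$ vertices of the component shows the hit lands outside with the same $\Omega(n^{-3\barconst})$ probability. No degree count inside $C$ and no second moment is needed. Your final conclusion (small components vanish after $n^{4\barconst}$ rounds) does survive once the merge probability is corrected to $n^{-\Theta(\barconst)}$, since $n^{4\barconst} \cdot n^{-3\barconst} \to \infty$, but as written the proposed derivation of that merge probability is not correct and the ``constant contraction per round'' claim would need to be replaced.
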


We prove the Restriction lemma (Lemma~\ref{lem-randomrestrict}) below, following the short proof of Lemma~\ref{lem-clusterlowdim} from Lemma~\ref{lem-randomrestrict}.

\begin{proof}[Proof of Lemma~\ref{lem-clusterlowdim}]
	Using Claim~\ref{cla-regular} we can reduce to the case that the code $V$ and the matchings $M_v$ are in regular form (that is, there are no `2-query' triples). Indeed, replacing $V$ with the code given in Claim~\ref{cla-regular}  leaves us with a new code (with $n$ and $\delta$ the same up to a constant) satisfying the same clustering requirements (using the same sets $S_1,\ldots,S_m$) and with the same dimension. If we cannot apply Claim~\ref{cla-regular} it is because there is a subset $U \subset V$ of size $(\delta/2)n$ and dimension at most $O( (1/\delta)\log(n)) < n^{1/2 - \barconst}$, in which case the proof is done.
	
	Next, Suppose in contradiction that $d > n^{1/2-\barconst}$ (otherwise we let $V' = V$). Apply Lemma~\ref{lem-randomrestrict} to get a subset $U \subset V$ with $|U| \leq n^{1/4+7\barconst}$, such that, if we send $U$ to zero by a linear map, the dimension of $\span\{V\}$ goes down to at most $n^{10\barconst}$. The existence of such a $U$ implies that $$d = \dim(V) \leq |U|+ n^{10\barconst} \leq n^{1/4+7\barconst} + n^{10\barconst}$$ which gives a contradiction if $\barconst < 1/50$. \qed.
\end{proof}

\subsection{Proof of Lemma~\ref{lem-randomrestrict}}

Using the assumptions $d > n^{1/2 - \barconst}$  we get that for each $i \in [m]$,  $$|S_i| = O(\delta^{-6}n^{1/2+\barconst})$$ and the number of sets, $m$, is between
$$\Omega(\delta^{19} n^{1/2-3\barconst- \beta}) \leq m \leq O(\delta^{-10}n^{1/2+\barconst + \beta}).$$

For each $v \in V$ we know that all $\delta n$ triples in $M_v$  contain two elements in one of the sets $S_1,\ldots,S_m$. Let $P_v$ denote the set of all these pairs. That is, for each $S_i$, add to $P_v$ all the pairs in $S_i$ that are contained in a triple from $M_v$. We fix some arbitrary way to associate each pair in $P_v$ with a {\em single} set $S_i$ (if this pair is in more than one set $S_i$ just pick one arbitrarily). 

The properties of the sets $P_v, v \in V$ are summarized in the following claim.
\begin{claim}\label{cla-Pv}
Each $P_v$ is a matching of at least $\delta n$ pairs, each pair $(u,w) \in P_v$ is associated with a unique $S_i$ such that $u,w \in S_i$ and there exists a triple in $M_v$ containing both $u$ and $w$.
\end{claim}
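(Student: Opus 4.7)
The plan is to verify each property stated in the claim directly from the construction of $P_v$, using only that $M_v$ is a matching of size at least $\delta n$ and that every triple in $M_v$ is clustered in the sense of Definition~\ref{def-clsutering}.

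First, since $V$ is a $(3,\delta)$-LCC, the matching $M_v$ consists of at least $\delta n$ pairwise disjoint triples of elements of $V$. By the clustering hypothesis (item 3 of Lemma~\ref{lem-clusterlowdim}), for each triple $\tau \in M_v$ there exists some $i \in [m]$ with $|\tau \cap S_i| \geq 2$, so $\tau$ contains at least one (unordered) pair $(u,w)$ with both elements in $S_i$. For each $\tau \in M_v$ choose one such pair and one such index $i_\tau$; put the pair into $P_v$ and declare it to be associated with $S_{i_\tau}$ (this is exactly the arbitrary choice described just before the claim, since if several $S_i$ contain the chosen pair we simply fix one). This immediately gives the last property: each $(u,w) \in P_v$ is associated with a single set $S_i$ that contains both $u$ and $w$, and $(u,w)$ lies in a triple of $M_v$ by construction.

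To count, observe that distinct triples in $M_v$ contribute distinct pairs to $P_v$, so $|P_v| = |M_v| \geq \delta n$. (Even in the more liberal reading where $P_v$ is taken to be the set of \emph{all} pairs from $M_v$-triples that sit inside some $S_i$, one may refine it to a matching by keeping exactly one chosen pair per triple; the resulting refined family still has size at least $\delta n$ and satisfies the same association property, so there is no loss of generality.)

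Finally, to see that $P_v$ is a matching of $[n]$, recall that the triples in $M_v$ are pairwise disjoint as subsets of $[n]$. If $(u,w)$ and $(u',w')$ are two pairs in $P_v$ coming from triples $\tau \neq \tau' \in M_v$, then $\{u,w\} \subset \tau$ and $\{u',w'\} \subset \tau'$, and since $\tau \cap \tau' = \emptyset$ these two pairs are disjoint. Thus the pairs in $P_v$ are pairwise disjoint, i.e.\ $P_v$ is a matching, completing the proof. There is no real obstacle here: the claim is essentially a bookkeeping consequence of the clustering property combined with the fact that $M_v$ itself is a matching.
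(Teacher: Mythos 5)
Your proof is correct and amounts to the same bookkeeping the paper leaves implicit (the paper states Claim~\ref{cla-Pv} without a separate proof, treating it as immediate from the construction of $P_v$ and the disjointness of $M_v$). Your parenthetical remark actually clears up a real ambiguity: read literally, "add to $P_v$ all the pairs in $S_i$ that are contained in a triple from $M_v$" could put two overlapping pairs from the same triple into $P_v$, which would violate the matching property; selecting exactly one clustered pair per triple, as you do, is the reading that makes the claim true and is what the subsequent probability calculation (Claim~\ref{cla-probbot}) relies on.
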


\paragraph{The distribution $\mu$:} We denote by $\neg(n)$ any function of $n$ that is asymptotically upper bounded by $\exp(-n^{\alpha})$ for some  constant $\alpha >0$. We use the notation $A \sim \mu$ to mean `the random variable $A$ is sampled according to the distribution $\mu$'.

 We now define a distribution $\mu$ on subsets of $V$. To pick a set $A \sim \mu$ we first pick an index $i \in [m]$ uniformly at random and then pick $A \subset S_i$ to contain each element of $S_i$ independently with probability $n^{-1/4+\barconst}$. If $S_i$ happens to be empty, we let $A$ be the empty set. It will be convenient to treat $\mu$ also as a distribution on pairs of the form $(A,i)$ with $A \subset V$ and $i \in [m]$ so, we will sometimes write $(A,i) \sim \mu$ to denote that $i$ is the random index chosen in the sampling process of $A$ and, other times just write $A \sim \mu$.

\begin{claim}\label{cla-sizeA}
Let $A \sim \mu$ then $$ \Pr[|A| \geq n^{1/4+3\barconst}] \leq \neg(n).$$
\end{claim}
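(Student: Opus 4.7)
The plan is a direct Chernoff bound applied conditionally on the random choice of the index $i$.

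First I would condition on the choice of the index $i\in[m]$. Given $i$, the random set $A$ is obtained by including each element of $S_i$ independently with probability $p = n^{-1/4+\barconst}$, so $|A|$ is a sum of at most $|S_i|$ independent Bernoulli$(p)$ random variables. By the bound on the cluster sizes stated at the beginning of the lemma ($|S_i| \leq O(\delta^{-6} n^{1/2+\barconst})$), the conditional expectation satisfies
\[
\E\bigl[\,|A| \,\big|\, i\,\bigr] \;\le\; p \cdot |S_i| \;\le\; O\!\bigl(\delta^{-6}\, n^{1/4+2\barconst}\bigr).
\]

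Next I would use the assumption $n \geq (1/\delta)^{\omega(1)}$, which means $\delta^{-6} = n^{o(1)}$. Therefore, for all sufficiently large $n$, the conditional expectation is bounded by $n^{1/4+2\barconst+o(1)}$, which is at most $\tfrac{1}{2} n^{1/4+3\barconst}$. In particular, the threshold $n^{1/4+3\barconst}$ is at least twice the conditional mean.

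Now I would apply a standard multiplicative Chernoff bound: for a sum $X$ of independent Bernoullis with mean $\mu$ and any $t \ge 2\mu$,
\[
\Pr[X \ge t] \;\le\; \exp\!\bigl(-t/6\bigr).
\]
Applied to $X=|A|$ conditioned on $i$, with $t=n^{1/4+3\barconst}$, this gives
\[
\Pr\!\bigl[\,|A| \ge n^{1/4+3\barconst} \,\big|\, i\,\bigr] \;\le\; \exp\!\bigl(-n^{1/4+3\barconst}/6\bigr).
\]
Averaging over the uniform choice of $i\in[m]$ preserves this bound. Since $n^{1/4+3\barconst}/6 \ge n^{\alpha}$ for any constant $\alpha<1/4+3\barconst$, the probability is of the form $\exp(-n^{\alpha})$ for some fixed $\alpha>0$, i.e.\ $\neg(n)$, which is exactly the claim. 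No step is a real obstacle here; this claim is just verifying that the tail behaviour of the chosen distribution is well controlled so that subsequent arguments can freely assume $|A| \le n^{1/4+3\barconst}$.
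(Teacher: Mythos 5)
Your proof is correct and matches the paper's argument: condition on $i$, bound $\E[|A|\mid i]$ by $|S_i|\cdot n^{-1/4+\barconst} = O(\delta^{-6}n^{1/4+2\barconst})$, use $n\ge(1/\delta)^{\omega(1)}$ to absorb the $\delta^{-6}$ factor so the threshold $n^{1/4+3\barconst}$ dominates the mean, and apply a Chernoff bound. The only cosmetic difference is that the paper removes the conditioning by a union bound over the $m$ possible clusters while you average over $i$; both give $\neg(n)$ since $m=\poly(n)$.
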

\begin{proof}
Conditioning on the choice of the set $S_i$, the expectation of $|A|$ is at most $|S_i|\cdot n^{-1/4+\barconst} \leq O(\delta^6n^{1/4+2\barconst}) < n^{1/4+3\barconst}/100$. Thus, by a Chernoff bound, the probability that the size of $A$ exceeds $n^{1/4+3\barconst}$ is at most $\neg(n)$. Taking a union bound over the $m$ possible choices of $S_i$ the probability is still $\neg(n)$.
\end{proof}

\begin{obs}\label{obs-mu}
	We can  define a new distribution $\mu'$ that samples $A$ according to $\mu$ until it gets a set $A$ of size at most $n^{1/4+3\barconst}$. By the claim, the statistical distance between $\mu$ and $\mu'$ is at most $\neg(n)$. Hence, as long as we can tolerate a $\neg(n)$ error in our probabilities, we can switch between $\mu$ and $\mu'$ as needed.
\end{obs}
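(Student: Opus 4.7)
The plan is to recognize that $\mu'$ is simply $\mu$ conditioned on the event $E := \{A \subset V : |A| \leq n^{1/4+3\barconst}\}$. Rejection sampling (repeatedly drawing from $\mu$ until the outcome lies in $E$) produces exactly the conditional distribution $\mu(\cdot \mid E)$, so $\mu'(A) = \mu(A)/\Pr_\mu[E]$ for $A \in E$ and $\mu'(A) = 0$ otherwise. By Claim~\ref{cla-sizeA} applied to this very event, $\Pr_\mu[\neg E] \leq \neg(n)$, and therefore $\Pr_\mu[E] \geq 1 - \neg(n)$.

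Given this, the statistical distance bound reduces to the standard fact that the total variation distance between a distribution and its conditional version on a high-probability event is at most the probability of the complementary event. I would verify this directly by writing $\tfrac{1}{2}\sum_A |\mu(A) - \mu'(A)|$ and splitting the sum: the contribution from $A \notin E$ is exactly $\sum_{A \notin E}\mu(A) = \Pr_\mu[\neg E]$, while the contribution from $A \in E$ is $\sum_{A \in E} \mu(A)\bigl(1/\Pr_\mu[E] - 1\bigr) = 1 - \Pr_\mu[E] = \Pr_\mu[\neg E]$. Adding and halving gives $\Pr_\mu[\neg E] \leq \neg(n)$, as claimed. Note that the class of $\neg(n)$ functions is closed under the minor arithmetic here (e.g., bounding $1/\Pr_\mu[E] \leq 1 + O(\neg(n))$), so the final bound remains of the form $\exp(-n^{\alpha'})$ for some constant $\alpha' > 0$.

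The concluding sentence of the observation ("as long as we can tolerate a $\neg(n)$ error in our probabilities, we can switch between $\mu$ and $\mu'$") is not a separate claim to be proved but an immediate consequence: for any measurable event $F$ on subsets of $V$, $|\Pr_\mu[F] - \Pr_{\mu'}[F]|$ is at most the total variation distance, hence at most $\neg(n)$. This justifies substituting $\mu'$ for $\mu$ (or vice versa) in any downstream probabilistic argument whose target bound already absorbs $\neg(n)$ additive slack — which will be the case in the subsequent analysis of the random restriction, where all failure probabilities are polynomially small. There is no genuine obstacle in this proof; it is a one-line rejection-sampling lemma whose only nontrivial input is the tail bound of Claim~\ref{cla-sizeA}.
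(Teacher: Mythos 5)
Your proposal is correct and matches the paper's intended justification: the paper offers no separate proof beyond the inline remark, and the content is exactly what you wrote — $\mu'$ is $\mu$ conditioned on the high-probability event of Claim~\ref{cla-sizeA}, and the total variation distance to the conditional distribution is bounded by the probability of the complementary event. Your explicit computation of the two contributions to $\tfrac{1}{2}\sum_A |\mu(A)-\mu'(A)|$ is accurate and fills in the standard fact the paper takes for granted.
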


\paragraph{The functions $f_{A,i}(v)$: } For each set $A \subset S_i$ we define a partial function $f_{A,i} : V \mapsto V$. The value $f_{A,i}(v)$ is defined as follows: Consider the pairs in $P_v$ that are associated with $S_i$. If one of these pairs is contained in $A$ then $f_{A,i}(v)$ is defined to be the third element of the triple of $M_v$ associated with that pair. More formally, if there is a pair $u,w \in S_i$ so that a triple $(u,w,z)$ is in $ M_v$ then we define $f_{A,i}(v) = z$. If there is more than one such pair, we pick one arbitrarily, for instance the first one in some fixed order. If there is no such pair, we let $f_{A,i}(v) = \bot$ (undefined). 

We use the notation $x \sim y$, with $x,y \in \F^d$, to denote that $x$ is a constant multiple of $y$ and $y$ is a constant multiple of $x$. That is, either they are both zero, or they are both non zero multiples of each other. Notice that the relation $\sim$ is an equivalence relation.

\begin{claim}\label{cla-sim}
Let $i \in [m], A \subset S_i$ and let $f_{A,i}$ be define as above. If $\cL :\F^d \mapsto \F^d$ is any linear map sending $A$ to zero, then $\cL(v) \sim \cL(f_{A,i}(v))$ for all $v$ for which $f_{A,i}(v) \neq \bot$
\end{claim}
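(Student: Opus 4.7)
The plan is to directly unpack the definition of $f_{A,i}$ and use the fact that we have reduced to the regular form. Suppose $v \in V$ has $f_{A,i}(v) = z \neq \bot$. By the definition of $f_{A,i}$, this means there exist $u,w \in S_i$ with $u,w \in A$ and a triple $(u,w,z) \in M_v$. Since $V$ is a $(3,\delta)$-LCC, the vector $v$ lies in $\span\{u,w,z\}$, so we can write
\[
v = \alpha u + \beta w + \gamma z
\]
for some scalars $\alpha,\beta,\gamma \in \F$.

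The key point is that, because we have reduced (via Claim~\ref{cla-regular}) to the case where the code and its matchings are in regular form, we in fact have $v \in \span^*\{u,w,z\}$. This means every one of the three coefficients $\alpha,\beta,\gamma$ is nonzero; in particular $\gamma \neq 0$. (This is precisely the role of the regular form assumption carried into Lemma~\ref{lem-randomrestrict} from the reduction at the start of the proof of Lemma~\ref{lem-clusterlowdim}.)

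Now apply the linear map $\cL$, which by hypothesis vanishes on every element of $A$. Since $u,w \in A$, we have $\cL(u) = \cL(w) = 0$, hence
\[
\cL(v) = \alpha\,\cL(u) + \beta\,\cL(w) + \gamma\,\cL(z) = \gamma\,\cL(z).
\]
Because $\gamma$ is a nonzero scalar, this displays $\cL(v)$ and $\cL(z) = \cL(f_{A,i}(v))$ as nonzero constant multiples of each other (or both zero, in the event that $\cL(z) = 0$). By the definition of $\sim$, this gives $\cL(v) \sim \cL(f_{A,i}(v))$, as required.

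There is no real obstacle in this argument: it is a one-line consequence of the regular form of the code combined with linearity of $\cL$. The only place one might stumble is forgetting the regular form reduction and trying to argue from the raw LCC definition, where $\gamma$ could be zero (meaning $v$ is actually spanned by $\{u,w\}$ alone); then $\cL(v)$ would be zero even if $\cL(z)$ is not, breaking the $\sim$ relation. Since that pathological case has been eliminated at the outset of Lemma~\ref{lem-clusterlowdim}, the claim follows immediately.
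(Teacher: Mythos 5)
Your proof is correct and is essentially the paper's argument: both invoke the regular form to ensure the coefficient of $z$ in the expansion of $v$ is nonzero, and then use linearity of $\cL$ with $\cL(u)=\cL(w)=0$ to conclude $\cL(v)$ and $\cL(z)$ are (possibly zero) scalar multiples of each other. You write the linear combination out explicitly while the paper phrases the same two directions in terms of span containment; the content is identical.
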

\begin{proof}
If $f_{A,i}(v) \neq \bot$ then there is a triple $(x,y,z) \in M_v$ with $x,y \in A$ and $f_{A,i}(v)=z$. Since $v \in \span\{x,y,z\}$ we get that $\cL(v) \in \span\{\cL(x),\cL(y),\cL(z)\} = \span\{\cL(f_{A,i}(v))\}$. Similarly, since we are assuming that $v$ is not in the span of $x,y$ (since the matchings $M_v$ are in  regular form), $z$ is in the span of $v,x,y$ and so $\cL(z) \in \span\{\cL(v)\}$.
\end{proof}

\paragraph{Probability bounds:} The following three claims give bounds on certain probabilities involving the functions $f_{A,i}$, when $(A,i) \sim \mu'$.

\begin{claim}\label{cla-probbot}
Let $(A,i) \sim \mu'$ and let $v \in V$. Then, $\Pr[f_{A,i}(v) \neq \bot] \geq \Omega(\delta^{17}n^{-3\barconst} )$
\end{claim}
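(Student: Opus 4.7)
\textbf{Proof Plan for Claim~\ref{cla-probbot}.}

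The plan is to first compute the probability under the untruncated distribution $\mu$ and then transfer the bound to $\mu'$ using Observation~\ref{obs-mu} (the statistical distance between $\mu$ and $\mu'$ is $\neg(n)$, which is negligible against $n^{-3\barconst}$). Fix $v \in V$. By Claim~\ref{cla-Pv}, the set $P_v$ is a matching of at least $\delta n$ pairs, each assigned to a unique cluster. For each $i \in [m]$ let $P_v^i \subset P_v$ be the pairs assigned to $S_i$ and let $k_i = |P_v^i|$, so that $\sum_{i=1}^m k_i \geq \delta n$. Note that $P_v^i$ consists of disjoint pairs of elements of $S_i$ (since $P_v$ itself is a matching).

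Condition on the choice of $i$. Because the pairs in $P_v^i$ are disjoint and each element of $S_i$ is included in $A$ independently with probability $p := n^{-1/4+\barconst}$, the events $\{p \subset A\}$ for $p \in P_v^i$ are mutually independent, and each occurs with probability $p^2 = n^{-1/2+2\barconst}$. Writing $q = n^{-1/2+2\barconst}$ for brevity,
\[
\Pr[f_{A,i}(v) \neq \bot \mid i] \;=\; 1-(1-q)^{k_i} \;\geq\; \tfrac{1}{2}\min(k_i\,q,\,1),
\]
using $1-(1-q)^k \geq 1-e^{-kq} \geq \min(kq/2,\, 1-e^{-1}) \geq \tfrac12 \min(kq,1)$.

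To bound $\Pr[f_{A,i}(v) \neq \bot] = \tfrac{1}{m}\sum_{i}\Pr[f_{A,i}(v) \neq \bot \mid i]$ from below, I split the indices according to whether $k_i \geq 1/q = n^{1/2-2\barconst}$. Let $I_1 = \{i : k_i \geq 1/q\}$ and $I_2 = [m]\setminus I_1$; one of the sums $\sum_{I_1}k_i$ or $\sum_{I_2}k_i$ is at least $\delta n/2$. In the ``light'' case $\sum_{I_2}k_i \geq \delta n/2$, summing the lower bound $\tfrac12 k_i q$ over $I_2$ and dividing by $m \leq O(\delta^{-10}n^{1/2+\barconst+\beta})$ gives
\[
\Pr[f_{A,i}(v) \neq \bot] \;\geq\; \frac{\delta n\, q}{4m} \;=\; \Omega\!\left(\delta^{11}\, n^{\barconst-\beta}\right),
\]
which, using $\beta < \barconst/2$, is stronger than needed. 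In the ``heavy'' case $\sum_{I_1}k_i \geq \delta n/2$, I use the upper bound $k_i \leq |S_i|/2 \leq O(\delta^{-6}n^{1/2+\barconst})$ (pairs in $P_v^i$ are disjoint and lie inside $S_i$) to conclude that
\[
|I_1| \;\geq\; \frac{\delta n/2}{O(\delta^{-6}n^{1/2+\barconst})} \;=\; \Omega\!\left(\delta^{7}\,n^{1/2-\barconst}\right),
\]
and since each $i \in I_1$ contributes at least $1/2$ to the conditional probability,
\[
\Pr[f_{A,i}(v) \neq \bot] \;\geq\; \frac{|I_1|}{2m} \;=\; \Omega\!\left(\frac{\delta^{7}\,n^{1/2-\barconst}}{\delta^{-10}\,n^{1/2+\barconst+\beta}}\right) \;=\; \Omega\!\left(\delta^{17}\,n^{-2\barconst-\beta}\right),
\]
which (using $\beta < \barconst$) is at least $\Omega(\delta^{17}n^{-3\barconst})$. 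Taking the minimum over the two cases yields the claimed bound under $\mu$; the transfer to $\mu'$ costs an additive $\neg(n)$, which is absorbed.

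The main substantive step is the disjointness argument that turns the $k_i$ pair-inclusion events into \emph{independent} events, so that the simple formula $1-(1-q)^{k_i}$ applies; this is the key use of the matching property in Claim~\ref{cla-Pv}. Everything else is case analysis. The ``heavy'' case is the bottleneck, and the constraint $\beta < \barconst/2$ in Lemma~\ref{lem-clusterlowdim} is exactly what makes the exponent $n^{-2\barconst-\beta}$ dominate $n^{-3\barconst}$.
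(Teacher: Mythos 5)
Your proof is correct and follows essentially the same route as the paper: the same heavy/light threshold $k_i \gtrless n^{1/2-2\barconst}$, the same use of the matching structure of $P_v$ to make the pair-inclusion events independent, and the same count of heavy clusters via the bound $|S_i| = O(\delta^{-6}n^{1/2+\barconst})$ combined with $m \leq O(\delta^{-10}n^{1/2+\barconst+\beta})$. Your explicit two-case split is a harmless elaboration of the paper's argument, which instead observes directly that the light clusters together cannot account for $\delta n$ pairs (since $m_\ell \cdot n^{1/2-2\barconst} \leq O(\delta^{-10}n^{1-\barconst+\beta}) = o(\delta n)$) and so reads off a lower bound on $m_h$ without ever needing the light case.
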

\begin{proof}
By Observation~\ref{obs-mu}, it is enough to analyze the probability for the distribution $\mu$. Fixing $v \in V$ we call a set $S_i$ {\em heavy} if it contains at least $n^{1/2-2\barconst}$ pairs from $P_v$ (recall Claim~\ref{cla-Pv}). Since we are choosing each element of $S_{i}$ with probability $n^{-1/4+\barconst}$, the probability to `miss' a single pair from $P_v$ is exactly $(1 - n^{-1/2+2\barconst})$. If $S_{i}$ is heavy, then the probability that $A$ contains at least one of the pairs in $P_v$ is at least (using the fact that $P_v$ is a matching):
\begin{equation}\label{eq-probheavy}
\Pr\left[ P_v \cap {A \choose 2} \neq \emptyset\right] \geq 1 - \left( 1 - n^{-1/2+2\barconst}\right)^{n^{1/2-2\barconst} } \geq 1/2.
\end{equation}

We now bound from below the probability that $S_{i}$ is heavy. Recall that $|P_v| \geq \delta n$ and that $m \leq O(\delta^{-10}n^{1/2+\barconst+\beta})$. Let $m_h +m_\ell = m$ so that $m_h$ is the number of heavy sets $S_i$. Since each $S_i$ can contain at most $|S_i|/2 = O(\delta^{-6}n^{1/2+\barconst})$ disjoint pairs, we have that
\begin{eqnarray*}
	\delta n &\leq&  m_\ell \cdot n^{1/2-2\barconst} + m_h \cdot O(\delta^{-6}n^{1/2+\barconst}) \\
	&\leq& O(\delta^{-10}n^{1-\barconst+\beta}) + m_h \cdot O(\delta^{-6}n^{1/2+\barconst}).
\end{eqnarray*}
This implies (since $\beta < \barconst/2$) that $$ m_h \geq \Omega(\delta^7 n^{1/2-\barconst}).$$ Therefore,
$$\frac{m_h}{m} \geq \Omega\left( \frac{\delta^7 n^{1/2-\barconst}}{\delta^{-{10}}n^{1/2+\barconst+\beta}} \right) = \Omega(\delta^{17}n^{-3\barconst}).$$

Combining the above two bounds, we get that the probability of picking a heavy cluster {\em and} then picking some pair in $P_v$ is at least $\Omega(\delta^{17}n^{-3\barconst})$.
\end{proof}

\begin{claim}\label{cla-probpair}
Let $(A,i) \sim \mu'$. Then, for all $v,z \in V$,
$$ \Pr[ f_{A,i}(v)=z] \leq O(\delta^{-19}n^{-1+6\barconst})$$
\end{claim}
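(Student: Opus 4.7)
My plan is to bound $\Pr[f_{A,i}(v) = z]$ by a direct union bound on the unique candidate pair that can give rise to this event, then use the lower bound on $m$ to get the desired estimate.

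The key observation is that, because $M_v$ is a matching of disjoint triples, the element $z$ appears in at most one triple of $M_v$. If no such triple exists, the probability is zero and we are done, so assume $(u,w,z) \in M_v$ is the unique triple containing $z$. Under the construction of $P_v$ preceding the claim, the pair $(u,w)$ is either not in $P_v$ (again zero probability) or is associated with a single fixed cluster, say $S_{i^*}$. Hence the event $f_{A,i}(v) = z$ forces both $i = i^*$ and $\{u,w\} \subseteq A$.

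Next I compute the probability of this conjunction under $\mu$. The index $i$ is uniform in $[m]$, so $\Pr[i = i^*] = 1/m$. Conditioned on $i = i^*$, each element of $S_{i^*}$ is included in $A$ independently with probability $p = n^{-1/4+\barconst}$, and $u,w \in S_{i^*}$, so $\Pr[\{u,w\} \subseteq A \mid i = i^*] = p^2 = n^{-1/2+2\barconst}$. Multiplying, the $\mu$-probability is at most $p^2/m$. Using the assumed lower bound $m \geq \Omega(\delta^{19} n^{1/2 - 3\barconst - \beta})$ gives
\[
\frac{p^2}{m} \leq O\!\left(\frac{n^{-1/2+2\barconst}}{\delta^{19} n^{1/2-3\barconst-\beta}}\right) = O\!\left(\delta^{-19} n^{-1+5\barconst+\beta}\right),
\]
and since $\beta < \barconst/2 < \barconst$ this is bounded by $O(\delta^{-19} n^{-1+6\barconst})$, as required.

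Finally, moving from $\mu$ to the conditioned distribution $\mu'$ only changes probabilities by at most $\neg(n)$ by Observation~\ref{obs-mu}, which is negligible compared to the polynomial bound above. The only real subtlety is confirming that at most one pair in $P_v$ can produce the output $z$ — this is precisely the matching property of $M_v$ — and the rest is bookkeeping with the parameter bounds. I expect no serious obstacle here; the claim is essentially a first-moment calculation exploiting the large number $m$ of clusters to dilute the chance of hitting the specific cluster associated with the designated pair.
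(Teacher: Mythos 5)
Your proof is correct and follows essentially the same route as the paper: identify the unique triple $(u,w,z)\in M_v$ containing $z$ (by the matching property), reduce to the event that $i$ equals the unique associated cluster index and $\{u,w\}\subseteq A$, bound this by $p^2/m$ using the lower bound on $m$ and the independence of inclusions, and absorb the $\neg(n)$ discrepancy between $\mu$ and $\mu'$.
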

\begin{proof}
By Observation~\ref{obs-mu}, it is enough to analyze the probability for the distribution $\mu$. Suppose $z$ appears in a triple $(u,w,z) \in M_v$ that is associated with $S_{\hat i}$ for some $\hat i \in [m]$ (if there is no such $\hat i$ then the probability in question is equal to zero). By our definition of the functions $f_{A,i}$, it is only possible for $f_{A,i}(v)=z$ to hold if $i = \hat i$ and both $u$ and $w$ are chosen to be in the set $A \subset S_{\hat i}$.  The probability to pick $i = \hat i$ is $1/m \leq O(\delta^{-19}n^{-1/2+3\barconst+\beta})$. Now, conditioned on picking this event, the probability of picking both $u$ and $w$ to be in $A$ is $n^{-1/2+2\barconst}$. Multiplying, and using the bound $\beta < \barconst/2$, we get the required bound.
\end{proof}

\begin{claim}\label{cla-probfAbad}
Let $(A,i) \sim \mu'$ and let $B \subset V$ be a set with $|B| \leq n^{1-10\barconst}$. Then, for every $v \in V$,
$$\Pr[ f_{A,i}(v) \neq \bot \,\,\wedge \,\, f_{A,i}(v) \not\in B] \geq \Omega(\delta^{17}n^{-3\barconst}).$$
\end{claim}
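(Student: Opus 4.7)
The plan is to deduce this claim immediately from the two preceding claims, Claim~\ref{cla-probbot} and Claim~\ref{cla-probpair}, by a union bound. Observe that by inclusion-exclusion,
\[
\Pr[f_{A,i}(v) \neq \bot \,\wedge\, f_{A,i}(v) \notin B] \;\geq\; \Pr[f_{A,i}(v) \neq \bot] \;-\; \Pr[f_{A,i}(v) \in B],
\]
so it suffices to lower-bound the first probability using Claim~\ref{cla-probbot} and upper-bound the second using Claim~\ref{cla-probpair}.

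First I would invoke Claim~\ref{cla-probbot} to write $\Pr[f_{A,i}(v)\neq \bot] \geq c_1 \delta^{17} n^{-3\barconst}$ for some absolute constant $c_1 > 0$. Next, I would bound
\[
\Pr[f_{A,i}(v) \in B] \;\leq\; \sum_{z \in B} \Pr[f_{A,i}(v)=z] \;\leq\; |B| \cdot c_2\,\delta^{-19} n^{-1+6\barconst}
\]
via Claim~\ref{cla-probpair}, for some absolute constant $c_2$. Using the hypothesis $|B| \leq n^{1-10\barconst}$ collapses this to $c_2\,\delta^{-19} n^{-4\barconst}$.

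The only point that requires a sanity check is that the positive term dominates the subtracted term, i.e.\ that $c_1 \delta^{17} n^{-3\barconst} - c_2 \delta^{-19} n^{-4\barconst} = \Omega(\delta^{17} n^{-3\barconst})$. Dividing through, this amounts to $\delta^{36} n^{\barconst} \to \infty$, which is guaranteed by the standing assumption $n \geq (1/\delta)^{\omega(1)}$: this assumption gives $\delta^{-1} \leq n^{o(1)}$, hence $\delta^{36} n^{\barconst} \geq n^{\barconst - o(1)} \to \infty$ as $n \to \infty$. Therefore the subtracted term is negligible compared to the main term, and we conclude
\[
\Pr[f_{A,i}(v)\neq \bot \,\wedge\, f_{A,i}(v)\notin B] \;\geq\; \Omega(\delta^{17} n^{-3\barconst}),
\]
as claimed. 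There is no real obstacle here; the claim is essentially a one-line combination of Claims~\ref{cla-probbot} and~\ref{cla-probpair}, with the parameter setup (most importantly $|B| \leq n^{1-10\barconst}$ and the gap between $n^{-3\barconst}$ and $n^{-4\barconst}$) engineered precisely so that the union bound swallows at most a negligible fraction of the lower bound from Claim~\ref{cla-probbot}.
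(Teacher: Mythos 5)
Your proof is correct and follows essentially the same route as the paper: both arguments combine Claim~\ref{cla-probbot} and Claim~\ref{cla-probpair} via a union bound over $B$, use $|B| \leq n^{1-10\barconst}$ to turn $|B|\cdot n^{-1+6\barconst}$ into $n^{-4\barconst}$, and invoke $n \geq (1/\delta)^{\omega(1)}$ to make the subtracted term negligible. The paper phrases it as an upper bound on $1-p$ while you lower-bound $p$ directly, but this is a cosmetic difference.
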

\begin{proof}
Let $p = \Pr[ f_{A,i}(v) \neq \bot  \,\,\wedge \,\,f_{A,i}(v) \not\in B] $. Then, by Claims~\ref{cla-probbot} and \ref{cla-probpair}, we have
\begin{eqnarray*}
1-p &=& \Pr[ f_{A,i}(v) = \bot \,\,\vee \,\, f_{A,i}(v) \in B] \\
&\leq& \Pr[ f_{A,i}(v) = \bot]+  \Pr[ f_{A,i}(v) \in B] \\
&\leq& 1 - \Omega(\delta^{17}n^{-3\barconst}) + |B|\cdot O(\delta^{-19}n^{-1+6\barconst})\\
&\leq& 1 - \Omega(\delta^{17}n^{-3\barconst}) + O(\delta^{-19}n^{-4\barconst}).
\end{eqnarray*}
Rearranging, and using the fact that $n \geq (1/\delta)^{\omega(1)}$, we get that $p \geq \Omega(\delta^{17}n^{-3\barconst})$.
\end{proof}

\paragraph{The set $U$: } To define the set $U$ required in Lemma~\ref{lem-randomrestrict}, we proceed as follows. Let $r$ be an integer to be determined later, and pick $r$ sets $A_1,\ldots,A_r \subset V$ and $r$ indices $i_1,\ldots,i_r \in [m]$  so that each $(A_j,i_j)$ is sampled independently according to the distribution $\mu'$. Let $U = \bigcup_{j=1}^r A_j$. Let $f_1 = f_{A_1,i_1},\ldots,f_r = f_{A_r,i_r}$ be the corresponding (partial) functions on $V$. Our goal is to show that, with probability greater than zero, setting $U$ to zero by a linear map, reduces the dimension of $V$ to $n^{10\barconst}$. 

We begin by defining a sequence of undirected graphs $H_0,H_1,\ldots,H_r$ on vertex set $V$ which will depend on the choice of the sets $A_1,\ldots,A_r$. The first graph $H_0$ is the empty graph (containing no edges). We define $H_j$ inductively by adding to $H_{j-1}$ all edges of the form $(v,f_j(v))$ over all $v \in V$. For $j=1\ldots r$, let $k_j$ denote the number of connected components of $H_j$.

\begin{claim}\label{cla-dimcomp}
If $\cL:\F^d \mapsto\F^d$ is any linear map sending $U$ to zero, then $\span\{\cL(V)\}$ has dimension at most $k_r$.
\end{claim}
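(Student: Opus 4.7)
The plan is to use Claim~\ref{cla-sim} to transfer the graph structure on $V$ into an equivalence relation on $\cL(V)$ under $\sim$, and then observe that each equivalence class contributes at most one dimension to $\span\{\cL(V)\}$.

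First, I would unpack how the edges of $H_r$ interact with $\cL$. By construction, every edge of $H_r$ has the form $(v, f_j(v))$ for some $j \in \{1,\ldots,r\}$ and some $v \in V$ with $f_j(v) \neq \bot$. Since $U \supseteq A_j$ and $\cL$ is assumed to vanish on $U$, it in particular vanishes on $A_j$, so Claim~\ref{cla-sim} applies and gives $\cL(v) \sim \cL(f_j(v))$ for every such edge. Hence, for each edge $(x,y)$ of $H_r$, we have $\cL(x) \sim \cL(y)$.

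Next, I would promote this edge-wise statement to a component-wise statement. Recall that $\sim$ is an equivalence relation on $\F^d$ (as noted just before Claim~\ref{cla-sim}), so the set of pairs $(x,y) \in V \times V$ with $\cL(x) \sim \cL(y)$ is itself an equivalence relation on $V$. Since it contains every edge of $H_r$ by the previous paragraph, and since the edge relation of $H_r$ generates the connected-component partition of $V$, it follows that any two vertices $x, y$ lying in the same connected component of $H_r$ satisfy $\cL(x) \sim \cL(y)$.

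Finally, I would read off the dimension bound. Pick one representative vertex $v_C$ from each connected component $C$ of $H_r$; there are exactly $k_r$ such representatives. For every $v \in V$, the vertex $v$ lies in some component $C$, and by the previous step $\cL(v) \sim \cL(v_C)$, which by definition of $\sim$ means $\cL(v)$ is a scalar multiple of $\cL(v_C)$ (this covers the case $\cL(v) = 0$ as well). Therefore
\[
\span\{\cL(V)\} \subseteq \span\{\cL(v_C) : C \text{ a component of } H_r\},
\]
and the right-hand side has dimension at most $k_r$. There is no real obstacle here; the only thing to be careful about is that $\sim$ is genuinely an equivalence relation (so chaining edges along a path in $H_r$ is legal) and that $\cL$ vanishes on each $A_j$ (so that Claim~\ref{cla-sim} is in force edge-by-edge).
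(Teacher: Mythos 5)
Your proof is correct and takes essentially the same approach as the paper: apply Claim~\ref{cla-sim} edge by edge (since $\cL$ vanishes on $U \supseteq A_j$ for every $j$), use that $\sim$ is an equivalence relation to pass from edges to connected components, and then bound the dimension by the number of components. The paper's version is terser but identical in substance.
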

\begin{proof}
This is an easy corollary of Claim~\ref{cla-sim}. If $\cL(U)=0$ then, for every edge $(x,y)$ in $H_r$, we have $\cL(x) \sim \cL(y)$. Since the relation $\sim$ is transitive, each connected component is contained in a one dimensional subspace after applying $\cL$.
\end{proof}

Let $k_j'$ denote the number of connected components of $H_j$ of size at most $n^{1-10\barconst}$. Call these the `small' components of $H_j$. The next claim bounds the expectation of $k_j'$.

\begin{claim}\label{cla-expectedcj}
Let $1 \leq j \leq r$. Then,  $$\E[k_j'] \leq k_{j-1}'(1 - \Omega(\delta^{17}n^{-3\barconst})).$$
\end{claim}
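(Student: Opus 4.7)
The plan is to classify small components of $H_{j-1}$ according to whether they get absorbed into a larger component in $H_j$, and exploit the fact that only small components of $H_{j-1}$ can contribute to small components of $H_j$.

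First, observe that since $H_j$ is obtained from $H_{j-1}$ by adding edges, each connected component of $H_j$ is a disjoint union of components of $H_{j-1}$. Crucially, if a component $D$ of $H_j$ satisfies $|D| \leq n^{1-10\barconst}$, then every component of $H_{j-1}$ contained in $D$ also has size $\leq n^{1-10\barconst}$, hence is small. So every small component of $H_j$ decomposes into small components of $H_{j-1}$.

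Next, call a small component $C$ of $H_{j-1}$ \emph{unfused} if no edge of the form $(v, f_j(v))$ with $v \in C$ and $f_j(v) \notin C$ is added in step $j$, and \emph{fused} otherwise. Unfused small components survive verbatim as components of $H_j$. Fused small components get merged into strictly larger components of $H_j$, and any small component of $H_j$ that is not an unfused component of $H_{j-1}$ must be the union of at least two fused small components of $H_{j-1}$. Letting $u$ and $f$ denote the number of unfused and fused small components of $H_{j-1}$ (so $u + f = k_{j-1}'$), this gives the bound
$$k_j' \;\leq\; u \;+\; \frac{f}{2} \;=\; \frac{k_{j-1}' + u}{2}.$$

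To bound $\E[u]$, fix any small component $C$ of $H_{j-1}$ and any vertex $v \in C$. Since $|C| \leq n^{1-10\barconst}$, Claim~\ref{cla-probfAbad} applied with $B = C$ yields $\Pr[f_j(v) \neq \bot \wedge f_j(v) \notin C] \geq \Omega(\delta^{17} n^{-3\barconst})$; and when this event happens, the edge $(v, f_j(v))$ witnesses that $C$ is fused. Hence $\Pr[C \text{ unfused}] \leq 1 - \Omega(\delta^{17} n^{-3\barconst})$, and summing over all $k_{j-1}'$ small components gives $\E[u] \leq k_{j-1}'\bigl(1 - \Omega(\delta^{17} n^{-3\barconst})\bigr)$. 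Combining with the deterministic inequality above,
$$\E[k_j'] \;\leq\; \frac{k_{j-1}' + \E[u]}{2} \;\leq\; k_{j-1}'\!\left(1 - \Omega(\delta^{17} n^{-3\barconst})\right),$$
as desired.

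The only delicate point is the combinatorial accounting in the second paragraph: one must be careful to observe that large components of $H_{j-1}$ can never contribute to small components of $H_j$, and that when small components merge each new small component of $H_j$ consumes at least two fused components, yielding the $f/2$ bound. Once this bookkeeping is in place, the inequality reduces cleanly to a per-component application of Claim~\ref{cla-probfAbad}, which is why the threshold $n^{1-10\barconst}$ for ``small'' was chosen to match the set-size hypothesis of that claim.
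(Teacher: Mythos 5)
Your argument is essentially the paper's: fix one representative per small component of $H_{j-1}$, apply Claim~\ref{cla-probfAbad} to it with $B$ equal to that component, and observe that merging small components decreases their count. However, two of your intermediate combinatorial claims are imprecise. An unfused component $C$ need \emph{not} ``survive verbatim'' as a component of $H_j$: another component $C'$ may contain a vertex $v'$ with $f_j(v') \in C$, and then $C$ is absorbed even though $C$ itself has no outgoing edge. Similarly, a non-singleton small component of $H_j$ need not be a union of at least two \emph{fused} $H_{j-1}$-components, since a single fused component can emit edges into several unfused ones and consume them all. Your stated inequality $k_j' \leq u + f/2$ is nonetheless correct, but for a slightly different reason: if $n_1$ denotes the number of small components of $H_j$ that equal a single $H_{j-1}$-component, then each such component must be unfused (an outgoing edge would enlarge it), so $n_1 \leq u$; and each of the remaining small components of $H_j$ absorbs at least two small $H_{j-1}$-components, so there are at most $(k_{j-1}' - n_1)/2$ of them, giving $k_j' \leq (k_{j-1}' + n_1)/2 \leq (k_{j-1}' + u)/2 = u + f/2$. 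With that accounting corrected, the probabilistic step bounding $\E[u]$ via Claim~\ref{cla-probfAbad} and the final computation are the same as in the paper (which uses the equivalent inequality $k_j' \leq k_{j-1}' - \tfrac12\sum_i X_i$ with $X_i$ indicating that the $i$-th representative maps outside its component).
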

\begin{proof}
Let $s= k_{j-1}'$ and let $K_1,\ldots,K_{s}$ be the small components of $H_{j-1}$. Pick  representatives $u_i \in K_i$ in each of the components. For each $i =1 \ldots s$, let $X_i$ be an indicator variable so that $X_i=1$ if $f_j(u_i) \in V \setminus K_i$ (that is, $f_j(u_i)$ is defined and does not belong to $K_i$) and $X_i=0$ otherwise (if either $f_j(u_i)=\bot$ or if it is defined but in $K_i$). By Claim~\ref{cla-probfAbad}, we have that $$ \E[X_i]= \Pr[X_i=1] \geq \Omega(\delta^{17}n^{-3\barconst}).$$

Since having an edge $(u_i,f_j(u_i))$ going from $u_i$ to some vertex outside $K_i$ `merges' $K_i$ with another component,  we have that
$$ k_j' \leq s - \frac{1}{2}\sum_{i=1}^s X_i. $$ 
Taking expectations, and using the above bound on the expectations of the $X_i$'s, we get
$$ \E[k_j'] \leq s(1 - \Omega(\delta^{17}n^{-3\barconst}))$$ as was required. 
\end{proof}

Thus, for each $j =1,2,\ldots,r$ there is a choice of a set $A_j \subset S_{i_j}$ such that $H_j$ has at most $k_{j-1}'(1 - \Omega(\delta^{17}n^{-3\barconst})))$ small components. Taking $ r = n^{4\barconst}$, we get that there is a choice of $U$ for which $H_r$ does not have {\em any} small components. Since the number of large components is at most $n^{10\barconst}$, we get:
\begin{claim}
There is a choice of $U$ for which $H_r$ has at most $n^{10\barconst}$ connected components.
\end{claim}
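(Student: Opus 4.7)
The plan is to iterate Claim~\ref{cla-expectedcj} through all $r = n^{4\barconst}$ steps and apply a straightforward averaging argument. By the tower property of expectation, conditioning on $(A_1,i_1),\ldots,(A_{j-1},i_{j-1})$ before applying Claim~\ref{cla-expectedcj} and then taking expectation over that history gives
$$
\E[k_j'] \;\leq\; \E[k_{j-1}'] \cdot \bigl(1 - \Omega(\delta^{17}n^{-3\barconst})\bigr).
$$
Since $H_0$ has $n$ isolated vertices, $k_0' \leq n$, and iterating $r$ times yields
$$
\E[k_r'] \;\leq\; n \cdot \bigl(1 - \Omega(\delta^{17}n^{-3\barconst})\bigr)^{r}.
$$

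Next I would plug in $r = n^{4\barconst}$ and use the standard bound $(1-x)^r \leq \exp(-xr)$ to get $\E[k_r'] \leq n \cdot \exp(-\Omega(\delta^{17} n^{\barconst}))$. Since the hypothesis $n \geq (1/\delta)^{\omega(1)}$ ensures that $\delta^{17}n^{\barconst} \geq n^{\barconst/2}$ for sufficiently large $n$, this expectation is $\neg(n)$, in particular strictly less than $1$. Therefore there must exist a deterministic choice of $(A_1,i_1),\ldots,(A_r,i_r)$ for which $k_r' = 0$, i.e.\ $H_r$ has \emph{no} small components (every component has size strictly more than $n^{1-10\barconst}$).

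Finally I would count large components by a trivial pigeonhole: since $H_r$ has $n$ vertices and every remaining component has size $>n^{1-10\barconst}$, the number of large components is at most $n/n^{1-10\barconst} = n^{10\barconst}$. Combined with $k_r' = 0$, this gives the total number of connected components of $H_r$ bounded by $n^{10\barconst}$, as claimed.

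The argument is essentially routine once Claim~\ref{cla-expectedcj} is in hand, so there is no real obstacle; the only care needed is to apply the tower property correctly (so that the multiplicative decrement can be iterated) and to verify that the mild slack built into the exponents ($r = n^{4\barconst}$ against the per-step gain $n^{-3\barconst}$ and the smallness threshold $n^{1-10\barconst}$ against the bound $n^{10\barconst}$ on large components) is consistent. Each of these holds with room to spare under the assumption $\barconst < 1/50$ and $n \geq (1/\delta)^{\omega(1)}$, completing the proof.
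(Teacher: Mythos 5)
Your proof is correct and essentially matches the paper's argument: both rely on iterating Claim~\ref{cla-expectedcj} for $r=n^{4\barconst}$ rounds to drive the (expected) number of small components below $1$, then bound large components by pigeonhole. The only cosmetic difference is that you pass to the global expectation $\E[k_r']$ via the tower property and conclude by averaging, whereas the paper derandomizes greedily step-by-step (at each $j$, fix a choice of $A_j$ achieving at most the conditional expectation); these are interchangeable here since $k_r'$ is a nonnegative integer.
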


To conclude, we observe that, since we are using the modified distribution $\mu'$, we have $$|U|  \leq r \cdot n^{1/4+3\barconst} \leq n^{1/4+7\barconst}$$ and, using Claim~\ref{cla-dimcomp}, we have that, setting $U$ to zero by a linear map, reduces the dimension of $V$  to at most $n^{10\barconst}$. This completes the proof of Lemma~\ref{lem-randomrestrict}.


\section{Putting it all together - Proof of Theorem~\ref{thm:lccbound}}\label{sec-mainproof}

We will first prove that any $(3,\delta)$-LCC over $\R$ contains a large subset of small dimension. Later we will iterate this to get a global dimension bound.

\begin{lem}\label{lem-subsetlow}
Suppose $n > (1/\delta)^{\omega(1)}$ and let $0 < \barconst < 1/50$. Let $V =(v_1,\ldots,v_n) \in (\R^d)^n$ be a  $(3,\delta)$-LCC. Then, there exists a subset $U \subset V$ of size at least $$ |U| \geq (\delta^3/300)n$$ and dimension at most $$ \dim(U) \leq \max\{ 8\delta^6d, n^{1/2 - \barconst/16}\}.$$
\end{lem}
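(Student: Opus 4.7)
The plan is to combine the clustering step (Lemma~\ref{lem-reducecluster}) with the restriction step (Lemma~\ref{lem-clusterlowdim}), with a dichotomy based on whether the hypothesis of the clustering lemma is satisfied.

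First I would fix auxiliary parameters $\beta = \barconst/4$ (for Lemma~\ref{lem-reducecluster}) and $\barconst' \in (\barconst, 1/50)$ (for Lemma~\ref{lem-clusterlowdim}), which is possible since the hypothesis $\barconst < 1/50$ leaves room. These choices are designed so that $2\beta < \barconst'/2$ (the input constraint of Lemma~\ref{lem-clusterlowdim}) and $\beta/4 \geq \barconst/16$ (so that the failure-case dimension bound $n^{1/2-\beta/4}$ beats the target $n^{1/2-\barconst/16}$).

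Next, the dichotomy. If $V$ contains a subset $W$ with $|W| \geq (\delta^2/288)n$ and $\dim(W) < \max\{8\delta^6 d,\, n^{1/2-\beta/4}\}$, then setting $U = W$ finishes the proof, since $\delta^2/288 \geq \delta^3/300$ for $\delta \leq 1$ and the dimension bound is at most $\max\{8\delta^6 d,\, n^{1/2-\barconst/16}\}$. Otherwise, every such subset has dimension at least this maximum, which is precisely the hypothesis of Lemma~\ref{lem-reducecluster} with parameter $\beta$. Applying it produces a $(3,\hat\delta)$-LCC $\hat V \subset V$ of size $\hat n \geq (\delta/10)n$, dimension $\hat d \leq d$, $\hat\delta \geq \delta^2/4$, and clustering sets $S_1,\ldots,S_m$ satisfying the conditions of Lemma~\ref{lem-clusterlowdim} with parameter $\beta_{\mathrm{CLLD}} = 2\beta$ (since Lemma~\ref{lem-reducecluster}'s exponent $\hat n^{1+2\beta}$ matches Lemma~\ref{lem-clusterlowdim}'s exponent $n^{1+\beta_{\mathrm{CLLD}}}$).

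Then I would apply Lemma~\ref{lem-clusterlowdim} to $\hat V$ with parameters $\barconst'$ and $\beta_{\mathrm{CLLD}} = 2\beta$, which is valid by our choice of constants. This yields $V' \subset \hat V$ of size $|V'| \geq (\hat\delta/2)\hat n \geq (\delta^2/8)(\delta/10)n \geq \delta^3 n/300$ and dimension at most $\hat n^{1/2-\barconst'} \leq n^{1/2-\barconst'} \leq n^{1/2-\barconst/16}$, giving the desired conclusion.

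The main obstacle is simply the parameter bookkeeping: Lemma~\ref{lem-reducecluster}'s $\beta$ simultaneously controls (a) the dimension bound in the failure case, which needs $\beta \geq \barconst/4$, and (b) the clustering quality passed to Lemma~\ref{lem-clusterlowdim}, which forces $\barconst' > 4\beta \geq \barconst$; combined with the requirement $\barconst' < 1/50$, this uses up almost all the slack in the hypothesis $\barconst < 1/50$. The generous factor of $1/16$ in the statement (rather than $1/4$) exists exactly to absorb this loss, so once the constants are pinned down the argument is a clean chaining of the two main lemmas with no additional machinery required.
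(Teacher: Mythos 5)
Your proposal is correct and follows essentially the same route as the paper: set $\beta = \barconst/4$, use the dichotomy on whether Lemma~\ref{lem-reducecluster}'s hypothesis holds (with the failure case yielding the low-dimensional subset directly), then chain Lemma~\ref{lem-reducecluster} into Lemma~\ref{lem-clusterlowdim} with its $\beta$-parameter set to $2\beta$. Your introduction of $\barconst' \in (\barconst, 1/50)$ to make the strict inequality $\beta_{\mathrm{CLLD}} < \barconst'/2$ hold is actually slightly more careful than the paper, which writes $2\beta_1 < \barconst/2$ at what is really an equality — a minor bookkeeping slip your version sidesteps cleanly.
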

\begin{proof}
We will prove the lemma by first applying Lemma~\ref{lem-reducecluster} to show that $V$ has a large sub-LCC $V'$ in which the triples cluster. Then, we will apply Lemma~\ref{lem-clusterlowdim} to show that $V'$ has a large low dimensional sub list. The details follow.

Set  $\beta_1 = \barconst/4$ and apply Lemma~\ref{lem-reducecluster} with $\beta = \beta_1$. To apply the lemma we require that $V$ does not contain a subset $U$ of size $(\delta^2/288)n$ and dimension at most $\max\{ 8\delta^6d, n^{1/2 - \beta_1/4}\}$ = $\max\{ 8\delta^6d, n^{1/2 - \barconst/16}\}$. If this is the case, than our proof is done and there is no need to continue. 

Having applied Lemma~\ref{lem-reducecluster}, we obtain a $(3,\delta')$-LCC $V' \subset V$ with $n' = |V'| \geq (\delta/10)n$, $d' = \dim(V') \leq d$, $\delta' \geq \delta^2/4$ and sets $S_1,\ldots,S_m$ which cluster all the triples in the matchings $M_{v'}, v' \in V'$ used to decode $V'$ so that $$|S_i| \leq O(n'/\delta'^6d')$$ and $$\Omega(\delta'^{19}  d'^3/ n'^{1+2\beta_1}) \leq m \leq O(n'^{1+2\beta_1}/\delta'^{10} d').$$
We now apply Lemma~\ref{lem-clusterlowdim} with $\beta = 2\beta_1 < \barconst/2$ and the same $\barconst$ to  conclude that there exist a subset $V'' \subset V'$ of size $$n'' = |V''| \geq (\delta'/2)n' \geq (\delta^2/8)(\delta/10)n \geq (\delta^3/80)n$$ and dimension 
$$ \dim(V'') \leq n''^{1/2-\barconst} \leq \max\{8\delta^6d, n^{1/2 - \barconst/16}\},$$ as was required.
\end{proof}

We now prove an amplification lemma which uses Lemma~\ref{lem-subsetlow} iteratively. For this lemma we will use the following convenient notations: If $S \subset V$ is a subset of $V$, we denote by $\span_V(S) \subset V$ the subset of elements of $V$ that are spanned by elements of $S$ (we think of all these as lists/multisets).

\begin{lem}[Amplification lemma]\label{lem:iterate}
Suppose $n > (1/\delta)^{\omega(1)}$ and let $0 < \barconst < 1/50$. Let $V =(v_1,\ldots,v_n) \in (\R^d)^n$ be a linear $(3,\delta)$-LCC. 
Suppose $S \subset V$ is such that $\span_V(S) = S$ and $S \neq V$. Then there is
a set $S \subseteq S' \subseteq V$ with $\span_V(S') = S'$ such that
\begin{enumerate}
\item Either $S'=V$ or $|S'| \geq |S| + (\delta^4/400)n$.
\item $\dim(S') \leq \dim(S) + \max\{ \delta^6d, n^{1/2 - \barconst/16}\} $.
\end{enumerate}
\end{lem}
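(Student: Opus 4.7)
The plan is to reduce Lemma~\ref{lem:iterate} to Lemma~\ref{lem-subsetlow} via a quotient construction. Let $\pi : \R^d \to \R^d/\span(S)$ be the canonical projection onto $\R^{d^*}$ with $d^* = d - \dim(S)$; set $W = V \setminus S$, $n^* = |W|$, and $\bar W = \pi(W)$, a list of $n^*$ vectors in $\R^{d^*}$. The hypothesis $\span_V(S) = S$ guarantees that every element of $\bar W$ is nonzero, since any $v \in V$ projecting to zero would lie in $\span(S)$ and hence in $S$.

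The central claim is that $\bar W$ inherits a $(3,\delta')$-LCC structure with $\delta' = \Omega(\delta)$. For each $v \in W$, those triples of $M_v$ lying entirely inside $W$ form a sub-matching $M_v^+$, and they project to triples in $\bar W$ whose span still contains $\pi(v) \neq 0$. Because $M_v$ is a matching on $[n]$, at most $|S|$ of its triples intersect $S$, so $|M_v^+| \geq \delta n - |S|$. In the benign regime $|S| \leq \delta n/2$, this yields $|M_v^+| \geq (\delta/2)n \geq (\delta/2)n^*$ for every $v \in W$, so $\bar W$, equipped with the matchings $M_v^+$, is a $(3,\delta/2)$-LCC on $n^* \geq n/2 \geq (1/\delta)^{\omega(1)}$ vectors in $\R^{d^*}$. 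Invoke Lemma~\ref{lem-subsetlow} on $\bar W$ with the same $\barconst$: this yields a sublist $U \subseteq \bar W$ with $|U| \geq ((\delta/2)^3/300)\,n^* \geq (\delta^4/400)\,n$ (for $\delta$ small enough, with constants to be tracked) and $\dim(U) \leq \max\{8(\delta/2)^6 d^*,\, (n^*)^{1/2 - \barconst/16}\} \leq \max\{\delta^6 d,\, n^{1/2 - \barconst/16}\}$. Let $U' \subseteq W$ be the sublist of $W$ whose image under $\pi$ equals $U$, and define
\[
S' := \span_V(S \cup U') = \{v \in V : v \in \span(S \cup U')\}.
\]
Then $S'$ is span-closed by construction, $|S'| \geq |S| + |U'| \geq |S| + (\delta^4/400)\,n$ (since $U' \subseteq W$ is disjoint from $S$), and since $\pi$ has kernel $\span(S)$ with $\pi(U') = U$, one has $\dim(\span(S \cup U')) = \dim(S) + \dim(U)$, giving the required dimension bound.

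The main obstacle is the boundary regime $|S| > \delta n / 2$, in which $|M_v^+|$ may be trivially zero and the inheritance argument collapses. The plan there is to set $S' = V$, satisfying conclusion~(1) vacuously by the disjunction ``$S' = V$''; one must then verify conclusion~(2), i.e., $\dim(V) - \dim(S) = \dim(\bar W) \leq \max\{\delta^6 d,\, n^{1/2-\barconst/16}\}$. In this regime most triples of each $M_v$ intersect $S$, and they project either to a $1$-query relation $\pi(v) \sim \pi(v_k)$ (triples with two $S$-elements) or to a $2$-query relation $\pi(v) \in \span\{\pi(v_j),\pi(v_k)\}$ in $\bar W$ (triples with one $S$-element). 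The $2$-query case endows a sub-list of $\bar W$ with a $(2, \Omega(\delta))$-LDC structure, and Theorem~\ref{thm-2LDC} forces $\dim(\bar W) \leq O(\log(n^*)/\delta) \leq n^{1/2-\barconst/16}$ under the hypothesis $n \geq (1/\delta)^{\omega(1)}$; the $1$-query case is strictly more restrictive and only further reduces the effective dimension. Beyond tracking constants at the cross-over between the two regimes, the geometric content is entirely encapsulated in Lemma~\ref{lem-subsetlow} and the 2-LDC bound of Theorem~\ref{thm-2LDC}.
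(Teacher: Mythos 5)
Your argument in the regime $|S|\le \delta n/2$ is essentially sound and is a clean alternative to the paper's Case~2: you quotient by $\span(S)$ rather than restrict, and $\bar W$ does inherit a $(3,\delta/2)$-LCC structure because every $v\in W$ has $|M_v^+|\ge \delta n - |S|\ge \delta n/2$ triples entirely inside $W$ (disjointness of the matching means at most $|S|$ triples can touch $S$). The pull-back of $U$ and the dimension accounting $\dim(\span(S\cup U'))=\dim(S)+\dim(U)$ are correct.

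The gap is in the fallback regime $|S|>\delta n/2$. Your claim that ``in this regime most triples of each $M_v$ intersect $S$'' is false: for a fixed $v\in V\setminus S$, the matching $M_v$ could have all $\ge\delta n$ of its triples entirely inside $W=V\setminus S$ even when $|S|$ is much larger than $\delta n/2$ (indeed, $|S|$ can be close to $(1-3\delta)n$ before pigeonhole forces any triple into $S$). When that happens your projected code $\bar W$ has no 1- or 2-query relations to work with, the 2-LDC argument gives nothing, and $\dim(\bar W)$ could be far larger than $\max\{\delta^6 d,\,n^{1/2-\barconst/16}\}$ — so setting $S'=V$ violates conclusion~(2). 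Note also that the amplification lemma must genuinely cover this regime, since the iteration in the main theorem pushes $|S|$ all the way from $0$ to $n$.

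The missing idea is to split by the behavior of each \emph{point}, not by a global threshold on $|S|$. The paper classifies each $v\in V\setminus S$ by whether a $\Theta(\delta n)$ fraction of $M_v$'s triples have two, one, or zero elements in $S$ (one of the three must hold by pigeonhole on $\delta n$ triples). Points with many two-in-$S$ triples are handled by adding a \emph{single} vector $v$ to $S$ and taking the span-closure (dimension grows by~1, size by $\Omega(\delta n)$); points with many zero-in-$S$ triples form a genuine sub-LCC to which Lemma~\ref{lem-subsetlow} applies (this is your quotient case, and the paper instead restricts); points with many one-in-$S$ triples give the 2-LDC argument after projecting $\span(S)$ to zero. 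Your two-regime split by $|S|$ cannot reproduce this trichotomy, because the threshold $|S|\lessgtr\delta n/2$ does not determine, for an individual $v$, how its matching meets $S$.
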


We defer the proof of the lemma to the end of this section and proceed with the proof of  Theorem~\ref{thm:lccbound}.

\begin{proof}[Proof of Theorem~\ref{thm:lccbound}]
Let $V = (v_1,\ldots,v_n) \in (\R^d)^n$ be a linear $(3,\delta)$-LCC. We will prove the theorem with $\barconst = 1/1000$ We now apply
Lemma~\ref{lem:iterate} with $\barconst_1 = 1/51$ iteratively. Start with
$S_1 = \emptyset$ and apply Lemma~\ref{lem:iterate}
repeatedly to obtain sets $S_2,S_3,\ldots,$ such that for all $i$,
$$|S_i| \geq |S_{i-1}| +(\delta^4/400)n $$ and
$$\dim(S_i) \leq \dim(S_{i-1}) + \max\{\delta^6d, n^{1/2- \barconst_1/16}\}.$$
Since the size of $S_i$ cannot grow beyond $n$, the process will
terminate after at most $m=\lfloor 400/\delta^4 \rfloor$
steps, yielding $S_m = V$. We then get that
$$\dim(S_m) = \dim(V) \leq  (400/\delta^4)  \max\{\delta^6d, n^{1/2- \barconst_1/16}\} = \max\{(400\delta^2) d, (400/\delta^4) n^{1/2- \barconst_1/16} \} .$$ 

Without loss of generality, for the proof of the theorem we can assume that $\delta^2 < 1/500$. Thus it must be that $d= \dim(V) \leq (400/ \delta^4) n^{1/2- \barconst_1/16} \leq n^{1/2 - \barconst}$. 
This completes the proof of Theorem~\ref{thm:lccbound}.
\end{proof}

\subsection{Proof of Lemma~\ref{lem:iterate}}\label{sec:proof-iterate}

Observe that for $v \in V\setminus S$, all 3 points of any triple in $M_v$ cannot be in $S$ since $\span_{V}(S) =S$. Thus we may assume that $|S| \leq (1 - \delta)n$, since otherwise each vector in $V \setminus S$ would be spanned by the points of $S$ and we would be done. 

\paragraph{Case 1:} There exists $v \in V\setminus S$ such that $\delta n/4$ of the triples in $M_v$ have two of their points contained in $S$. 
In this case let $S' = \span_{V}(\{v\} \cup S)$. Then  $|S'| \geq |S| + (\delta/4) n$, and $\dim(S') \leq \dim(S) + 1$. 

If Case 1 does not hold than each $v \in V\setminus S$, $M_v$ has $3\delta n/4$ of its triples intersecting $S$ in either one or zero points. Let us call a point $v$ {\em type-zero} if if has at least $3\delta n/8$ of its triples contained in $V \setminus S$ and {\em type-one} otherwise. Notice that, if $v$ is type-one, then it must have at least $3\delta n/8$ of its triples intersecting $S$ in exactly one point. We now separate into two additional cases:

\paragraph{Case 2: There are at most $\delta n/4$ type-one points.}

Let $V' \subset V \setminus S$ be the set of  all type-zero points.
Observe that, since $|S| \leq (1 - \delta)n$, we have $|V'| \geq 3\delta n/4$. Also observe that the vectors in $V'$ form a $(3, \delta/8)$-LCC since each point in $V'$ has at least $3\delta n/8 - \delta n/4 = \delta n/8 \geq (\delta/8)|V'|$ triples in its matching contained in $V'$. Using Lemma~\ref{lem-subsetlow} on $V'$ we conclude that there is a subset $U \subset V'$ of size $$ |U| \geq (\delta^3/300)|V'| \geq (\delta^4/400)n$$ and dimension $$ \dim(U) \leq \max\{ 8(\delta/8)^6d', |V'|^{1/2 - \barconst/16}\} \leq \max\{ \delta^6d, n^{1/2 - \barconst/16}\}. $$ Setting $S' = S \cup U$ we are done.

\paragraph{Case 3: There are at least $\delta n/4$ type-one points.} In this case, there are $\delta n/4$ points $v$ in $V \setminus S$, each having at least $3\delta n/8$ of the triples in $M_v$ intersecting $S$ in exactly one point. Let $A$ be a linear transformation whose kernel equals $\span(S)$.  After applying $A$ to $V \setminus S$ we obtain a $(2, 3\delta/4)$ LDC  decoding the  $\delta n/4$ type-one points. Thus the $\delta n/4$ points (after we apply the mapping $A$ to them) must span at most $\poly(1/\delta) \log n \leq \max\{ \delta^6d, n^{1/2 - \barconst/16}\} $ dimensions by Theorem~\ref{thm-2LDC}. Thus, adding them to $S$ will increase the dimension of its span by at most this number. This completes the proof also in this case.



\bibliographystyle{alpha}

\bibliography{3lcc}

\end{document}